%% file: EMST-lunar.tex
\documentclass[a4paper,reqno]{amsart}
\usepackage{fullpage}
\usepackage[foot]{amsaddr}
\usepackage{graphicx}
\usepackage{color}
\usepackage{amsthm,amssymb}
\usepackage{xfrac}
\usepackage{subcaption}

\usepackage{tikz}
\usepackage{tikz-cd}
\usepackage[hidelinks]{hyperref}
\usepackage{mathtools}
\usepackage{xcolor}
\usepackage{algpseudocode}
\usepackage{algorithm}
\usepackage{enumerate}
\usepackage{nicematrix}
\usepackage{adjustbox}
\usepackage[normalem]{ulem}

\usepackage{chronology}

\usepackage[colorinlistoftodos,prependcaption,textsize=tiny,textwidth=2cm]{todonotes}
\makeatletter%
\@mparswitchfalse%
\makeatother%
\normalmarginpar%

\newcommand {\mm}[1]   {\ifmmode{#1}\else{\mbox{\(#1\)}}\fi}

\newcommand{\Expect}[2]     {\mm{{\mathbb E}_{#1}\left[{#2}\right]}}
\newcommand{\Prob}[2]       {\mm{{\mathbb P}_{#1}\left[{#2}\right]}}
\newcommand{\Lspace}        {\mm{{\mathbb L}}}
\newcommand{\Rspace}        {\mm{{\mathbb R}}}
\newcommand{\Sspace}        {\mm{{\mathbb S}}}
\newcommand{\Grass}[2]      {\mm{{\mathbb Gr}_{{#1}}^{{#2}}}}
\newcommand{\Volume}[1]     {\mm{{\rm Vol}{({#1})}}}
\newcommand{\One}[1]        {\mm{{\bf 1}_{#1}}}
\newcommand{\Enc}[2]        {\mm{{\rm Enc}{({#1},{#2})}}}
\newcommand{\midpoint}[2]   {\mm{{z}{({#1},{#2})}}}

\newcommand{\domain}[1]     {\mm{{\sf dom}{\left({#1}\right)}}}

\newcommand{\Alpha}[2]      {\mm{{\rm Alf}_{#1}{({#2})}}}

\newcommand{\Voronoi}[2][]  {\mm{{\rm Vor}_{#1}{({#2})}}}
\newcommand{\Delaunay}[2][] {\mm{{\rm Del}_{#1}{({#2})}}}
\newcommand{\Delaunaycol}[2][] {\mm{{\rm Del}^{\rm col}_{#1}{({#2})}}}
\newcommand{\card}[1]       {\mm{{\#}{#1}}}

\newcommand{\affine}[1]     {\mm{\rm aff\,}{#1}}
\newcommand{\diff}          {\mm{\,}{\rm d}}
\newcommand{\interior}[1]   {\mm{\rm int\,}{#1}}
\newcommand{\dom}[1]        {\mm{\rm dom}{({#1})}}

\newcommand{\Edist}[2]      {\mm{\|{#1}-{#2}\|}}
\newcommand{\norm}[1]       {\mm{\|{#1}\|}}
\newcommand{\Cost}[2]       {\mm{{\rm Cost}_{#1}{({#2})}}}
\newcommand{\Costrestr}[2]  {\mm{{\rm Cost}_{#1}^{\circ}{{(#2)}}}}
\newcommand{\Costrooted}[2] {\mm{{\rm Cost}_{#1}^\ast{{({#2})}}}}
\newcommand{\Lune}[2]       {\mm{{\rm Lune}_{#1}{({#2})}}}

\newcommand{\Disk}[2]       {\mm{D}_{#1}{({#2})}}
\newcommand{\aaa}           {\mm{\bf a}}
\newcommand{\bbb}           {\mm{\bf b}}
\newcommand{\ccc}           {\mm{\bf c}}

\newcommand{\ppp}           {\mm{\bf p}}
\newcommand{\uuu}           {\mm{\bf u}}
\newcommand{\vvv}           {\mm{\bf v}}
\newcommand{\xxx}           {\mm{\bf x}}
\newcommand{\VVV}[1]        {\mm{\bf V}_{\!{#1}}}

\newcommand{\Rarc}          {\mm{R_{\rm arc}}}
\newcommand{\Rnode}         {\mm{R_{\rm node}}}
\newcommand{\Rarcprime}     {\mm{R_{\rm arc}'}}
\newcommand{\Rnodeprime}    {\mm{R_{\rm node}'}}

\newcommand{\Skip}[1]       {}

\definecolor{green-blue}{rgb}{0.00, 0.80, 0.30}

\definecolor{blue-red}{rgb}{0.8, 0.00, 0.95}

\newcommand{\ee}            {\mm{\varepsilon}}

\theoremstyle{definition}

\newtheorem{theorem}{Theorem}
\numberwithin{theorem}{section}

\newtheorem{proposition}[theorem]{Proposition}

\newtheorem{lemma}[theorem]{Lemma}
\newtheorem{corollary}[theorem]{Corollary}

\newtheorem*{remark}{Remark}

\newtheorem{definition}[theorem]{Definition}

\numberwithin{equation}{section}

\newtheorem*{lemma*}{Lemma}
\newtheorem*{proposition*}{Proposition}

\title{Lunar Generalizations of the Euclidean Minimum Spanning Tree in the Plane and their Expected Costs}

\author{O.\ Draganov$^{1,6}$}
\email{$^6$ondrej.draganov@inria.fr}
\author{H.\ Edelsbrunner$^{2,7}$}
\email{$^7$edels@ist.ac.at}
\author{S.\ Rosenmeier$^{3,4,8}$}
\email{$^8$rosenmeier@biochem.mpg.de}
\author{M.\ Saghafian$^{2,5,9}$}
\email{$^9$msaghafi@ac.tuwien.ac.at}

\address{$^{1}$INRIA, Sophia Antipolis, France}
\address{$^{2}$ISTA (Institute of Science and Technology Austria), Kloster\-neu\-burg, Austria}
\address{$^{3}$Max Planck Institute of Biochemistry, Munich, Germany}
\address{$^{4}$University of Vienna, Vienna, Austria}
\address{$^{5}$Technical University of Wien, Vienna, Austria}

\thanks{\emph{Funding.} The second author is partially supported by the DFG Collaborative Research Center TRR 109, `Discretization in Geometry and Dynamics', Austrian Science Fund (FWF), grant no.\ I 02979-N35. The first author is supported by `AI4scMed', Agence Nationale de la Recherche, France 2030, grant no.\ ANR-22-PESN-000.}

\keywords{Euclidean functionals, minimum spanning trees, lunes, expected cost, Voronoi tessellations and Delaunay mosaics, chromatic persistent homology, probabilistic analysis, sub-additivity.}

\begin{document}
\begin{abstract}
  Motivated by the recent introduction of chromatic persistent homology \cite{CDES26}, we generalize the Euclidean minimum spanning tree (EMST) for $n$ points in $\Rspace^2$ to the lunar EMST for the case in which the points come in $s+1$ colors.
  Calling the intersection of $s+1$ disks of radius $r$ centered at points with pairwise different colors a \emph{lune}, the generalized EMST reflects the history of the union of lunes as $r$ goes from $0$ to $\infty$, and its \emph{cost} is twice the difference between the radii when the arcs and nodes of the tree are formed.
  If the points are chosen uniformly at random in $[0,1]^2$ and colored randomly, the expected cost converges to some constant (that depends on $s$) times $\sqrt{n}$, as $n$ goes to infinity.
  The main contribution of this paper is a proof that this constant exists, however similar to the case of the classic EMST, its precise value remains elusive.
\end{abstract}

\maketitle


\section{Introduction}
\label{sec:1}

The Euclidean minimum spanning tree (or EMST) of $n$ points in $\Rspace^2$ reflects the history of the connected components in the union of disks of radius $r$ centered at the points while $r$ grows from $0$ to $\infty$.
Indeed, Kruskal's algorithm from 1956 adds an edge whenever two components touch and merge due to the increased radius.
We generalize this construction to the chromatic setting, in which the points come in $s+1$ colors.
In particular, we introduce the \emph{$s$-lunar EMST}, which reflects the history of the connected components of the union of $s$-lunes, each the intersection of $s+1$ disks of radius $r$, one per color.
An example can be seen in Figure~\ref{fig:LunarEMST}, which shows the $1$-lunar EMST of $10$ points, $5$ of each color.
\begin{figure}[htb]
    \centering
    {%
        \setlength{\fboxsep}{0pt}%
        \setlength{\fboxrule}{.4pt}%
        \fbox{%
            \includegraphics[width=.57\linewidth]{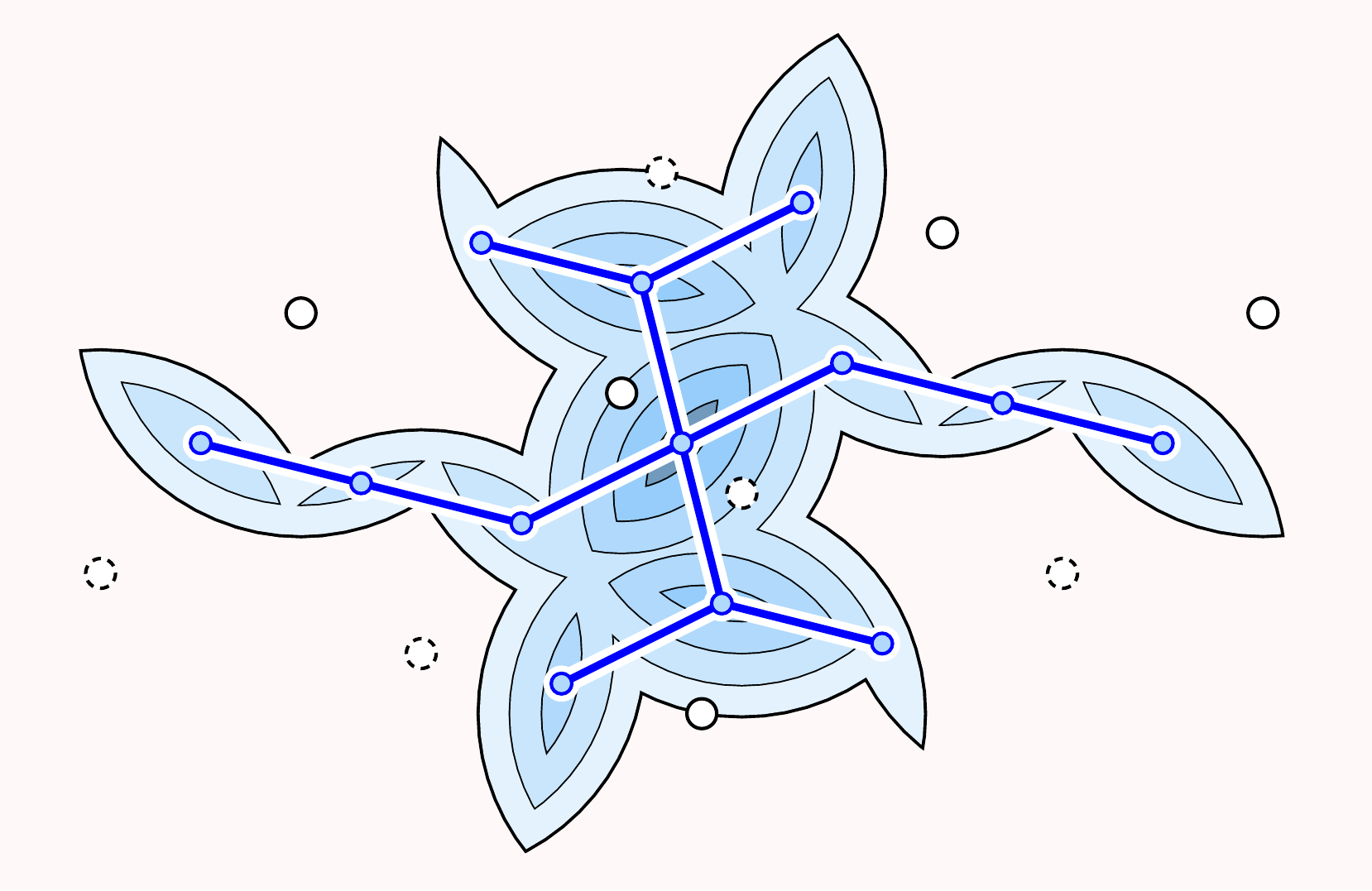}%
        }
    }%
    \caption{\footnotesize The $1$-lunar EMST, in \emph{blue}, of five \emph{dashed} and five \emph{solid} points, constructed by growing the $1$-lunes and monitoring how they meet and join into larger components.
    Only the critical nodes are shown, which are a subset of the midpoints of the given \emph{dashed} and \emph{solid} points.}
    \label{fig:LunarEMST}
\end{figure}

\smallskip
The motivation for introducing the lunar EMST is its connection to the relative diagram in the $6$-pack of chromatic persistence diagrams recently introduced in \cite{CDES26}.
Given $s+1$ colors and the map induced by the inclusion of the $s$-chromatic complex into the entire $(s+1)$-chromatic one, as defined in that paper,
we show that the degree-$s$ relative diagram in the $6$-pack is also the degree-$0$ persistence diagram of the function $f_{\max} \colon \Rspace^2 \to \Rspace$ whose sublevel sets are the unions of the $s$-lunes.
In other words, $f_{\max}$ maps every point to the maximum, over the $s+1$ colors, of the minimum distance to the points in this color.
For $s=0$, the $s$-lunes are just disks, and the $1$-norm of this degree-$0$ persistence diagram is half the cost of the (standard) EMST.
Accordingly, we will define the costs of the nodes and arcs of the $s$-lunar EMST such that half the cost of this tree is the $1$-norm of the degree-$0$ persistence diagram of $f_{\max}$.

\smallskip
If we choose the $n$ points uniformly at random in the unit square, it is natural to ask for the expected cost of the EMST.
The first serious consideration of this question dates back to just three years after the publication of Kruskal's algorithm, when Beardwood, Halton, and Hammersley \cite{BHH59} proved that the expected cost of the EMST is some constant times $\sqrt{n}$, in the limit when $n$ goes to infinity.
This work was later revisited and generalized in particular by Steele; see e.g.\ \cite{Ste97}.
While the constant is known to exist, its precise value is not known even today.
We apply the approach of Steele to the $s$-lunar EMST of points sampled uniformly at random in $[0,1]^2$ and randomly $(s+1)$-colored, and prove that there exists a constant, which depends on $s$, such that the expected cost is this constant times $\sqrt{n}$ in the limit, as $n$ goes to infinity.
Experimental results presented in \cite{DERS25} suggest that for $s+1 = 2$ colors the constant is somewhere between $0.350$ and $0.352$; compare this with the suggestion in the same paper that the constant for $s+1 = 1$ color (the standard EMST) is between $0.646$ and $0.648$.

\medskip \noindent \textbf{Outline.}
Section~\ref{sec:2} introduces the $s$-lunar EMST for an $(s+1)$-coloring of finitely many points in $\Rspace^2$ and presents three versions of this tree, all with the same cost.
This section also contains the proof that the cost is twice the $1$-norm of the degree-$s$ relative diagram in the $6$-pack.
Section~\ref{sec:3} presents background from stochastic geometry, in particular formulas for the expected number of critical simplices of the radius function on the Delaunay mosaic of a Poisson point process, and extensions to higher order and to points with weights. 
Section~\ref{sec:4} presents the main theorem of this paper and follows the general approach developed by Steele and others to prove the existence of asymptotic constants for Euclidean functionals.
Section~\ref{sec:5} concludes the paper.
In addition, Appendix~\ref{app:A} re-states and generalizes known results for stationary Poisson point processes in the Euclidean plane used in the proof of the main theorem.
Appendix~\ref{app:B} contains proofs for the combinatorial equivalence of the lunar Delaunay mosaic and the chromatic Delaunay complex, and for the filtration function being generalized discrete Morse.

\section{Lunar Generalization of the EMST}
\label{sec:2}

This section introduces the main object of study, which is the family of lunar Euclidean minimum spanning trees, or lunar EMSTs for short.
They generalize the standard Euclidean minimum spanning tree, which is a lunar EMST for only one color.
In the case of two colors, the lunar EMST may be compared with solutions to minimum matching questions, but really it minimizes the distance between pairs rather than between the points that form the pairs, as studied e.g.\ in \cite{CoTr21,Led23}.

\subsection{Growing Disks and Lunes}
\label{sec:2.1}

We generalize the geometric interpretation of Kruskal's algorithm \cite{Kru56} for a finite set $A \subseteq \Rspace^2$, which we present first.
For each point, $a \in A$, let $\Disk{r}{a}$ be the closed disk of radius $r$ centered at $a$.
We increase $r$ continuously from $0$ to $\infty$, and monitor the connected components of the union of disks.
Initially, for $r=0$, each point is a component by itself, and eventually, for sufficiently large $r$, the union is connected and thus has only one component.
The interesting events in this process are when two components merge, which happens when there are two disks---one in each component---that touch as they grow.
The distance between their centers is twice the radius at which this event happens, and the edge connecting the two centers is selected by Kruskal's algorithm when it constructs the Euclidean minimum spanning tree of $A$.

\smallskip
To generalize to the chromatic case with $s+1 \geq 1$ colors or, equivalently, to sets $A_0, A_1, \ldots, A_s \subseteq \Rspace^2$, we consider the \emph{colorful $s$-simplices}, $\aaa = \{a_0, a_1, \ldots, a_s\}$, with $a_i \in A_i$ for $0 \leq i \leq s$.
For a given radius, $r \geq 0$, this $s$-simplex corresponds to the intersection of the $s+1$ disks of radius $r$ centered at the points in $\aaa$, denoted $\Lune{r}{\aaa} = \bigcap_{i=0}^s \Disk{r}{a_i}$.
Inspired by the work on relative neighborhood graphs started by Toussaint~\cite{Tou80}, we call this intersection a \emph{lune}, or an \emph{$s$-lune} if it is important to mention the number of colors or disks involved in the intersection.
For $s=0$ it is a disk, for $s=1$ it is a convex lens bounded by two circular arcs, and in general it is a convex region with at most $s+1$ circular arcs forming its boundary.
As before, we increase $r$ continuously from $0$ to $\infty$ and monitor the connected components of the union of lunes defined by all colorful $s$-simplices formed by the $s+1$ sets.
We get an alternative description of this growth process by introducing $f_{\max} \colon \Rspace^2 \to \Rspace$ defined by mapping $x$ to the maximum, over $0 \leq i \leq s$, of the Euclidean distance to the nearest point in $A_i$.
Then $f_{\max}^{-1} [0, r]$ is the union of all $s$-lunes at radius $r$, so the growth process is equivalent to sweeping through the sublevel sets of $f_{\max}$.
To quantify the process, let $b_0 \leq b_1 \leq \ldots \leq b_m$ be the radii at which a lune becomes non-empty and starts a new component (a component is \emph{born}), and let $d_1 \leq d_2 \leq \ldots \leq d_m$ be the radii at which two components merge (a component \emph{dies}).
The \emph{cost} of the process is
\begin{align}
  \Cost{}{A_0, A_1, \ldots, A_s} &= 2 \sum\nolimits_{i=1}^m d_i - 2 \sum\nolimits_{i=1}^m b_i ,
    \label{eqn:cost}
\end{align}
where we drop $b_0$ from the second sum deliberately.
Note that $b_i \leq d_i$ for each $1 \leq i \leq m$, because we need at least $i+1$ births before the $i$-th death.
Hence $\Cost{}{A_0, A_1, \ldots, A_s} \geq 0$.

\smallskip
By de Morgan's Law, the union of all lunes defined by the $s+1$ sets is also the intersection of $s+1$ unions of disks, one per set.
If all sets are the same, $A_0 = A_1 = \ldots = A_s$, then the generalized process specializes to the original geometric interpretation of Kruskal's algorithm for $A_0$.

\smallskip
To cast the cost of the process as the solution to a combinatorial optimization problem, we construct the complete graph whose nodes are the colorful $s$-simplices defined by the $s+1$ sets.
The cost of a node is the diameter of its smallest enclosing circle, and the cost of an arc is the diameter of the smallest enclosing circle of the two nodes combined.
Running Kruskal's algorithm on this graph, we get a spanning tree that minimizes the sum of arc costs.
The cost of a node is twice the radius when the lune becomes non-empty, and that of an arc is twice the radius when the lunes of its nodes touch.
It follows that the cost of the process given in \eqref{eqn:cost} is the sum of arc costs minus the sum of node costs plus the minimum node cost of this tree.
We call this construction the \emph{combinatorial lunar EMST} of the $s+1$ sets.

\subsection{The Lunar Delaunay Mosaic}
\label{sec:2.2}

Fixing $A_0, A_1, \ldots, A_s \subseteq \Rspace^2$, we call a colorful $s$-simplex \emph{redundant} if its lune is covered by the other lunes at every radius.
The removal of a redundant $s$-simplex does not alter the course of the growth process.
Based on this observation, we produce a smaller combinatorial construction for the lunar EMST problem. Specifically, we follow \cite{BCDES26} and introduce the overlay of the Voronoi tessellations of $A_0$ to $A_s$, denoted by $\Voronoi{A_0, A_1, \dots, A_s}$.
Each cell of the tessellation is an intersection of Voronoi cells $\nu_i \in \Voronoi{A_i}$, for $0 \leq i \leq s$;
see Figure~\ref{fig:VorDel} for two examples.
We call the $2$-dimensional regions in this overlay \emph{domains}.
Each domain corresponds to a colorful $s$-simplex, $\aaa = \{a_0, a_1, \ldots, a_s\}$ with $a_i \in A_i$ for $0 \leq i \leq s$, denoted $\domain{\aaa}$, and is the set of points $x \in \Rspace^2$ such that $\Edist{x}{a_i} \leq \Edist{x}{b_i}$ for all $b_i \in A_i$ and all $0 \leq i \leq s$.
In other words, it is the intersection of the individual Voronoi domains, $\domain{\aaa} = \bigcap_{0 \leq i \leq s} \domain{a_i; A_i}$.
Its interior is also the set of points that are covered by the lune of $\aaa$ before they are covered by any other lune.
Each domain is convex, and together they decompose the union of lunes into convex sets with pairwise disjoint interiors.
Indeed
\begin{align}\label{eq:voronoi-lunes}
  \bigcup\nolimits_{\aaa} \Lune{r}{\aaa} &= \bigcup\nolimits_{\aaa} \left[ \Lune{r}{\aaa} \cap \domain{\aaa} \right]
\end{align}
for every $r \geq 0$, in which the unions are over all colorful $s$-simplices of $A_0, A_1, \ldots, A_s$.
This implies that a colorful $s$-simplex without corresponding domain is redundant.
Conversely, a colorful $s$-simplex with non-empty domain is not redundant since for every interior point in the domain there is a radius for which the corresponding lune is the only lune that contains the point.
Note, however, that the point from which the lune starts growing might be outside the domain.
\begin{figure}[hbt]
    \centering
    {%
        \setlength{\fboxsep}{0pt}%
        \setlength{\fboxrule}{.4pt}%
        \fbox{%
            \includegraphics[width=.49\linewidth]{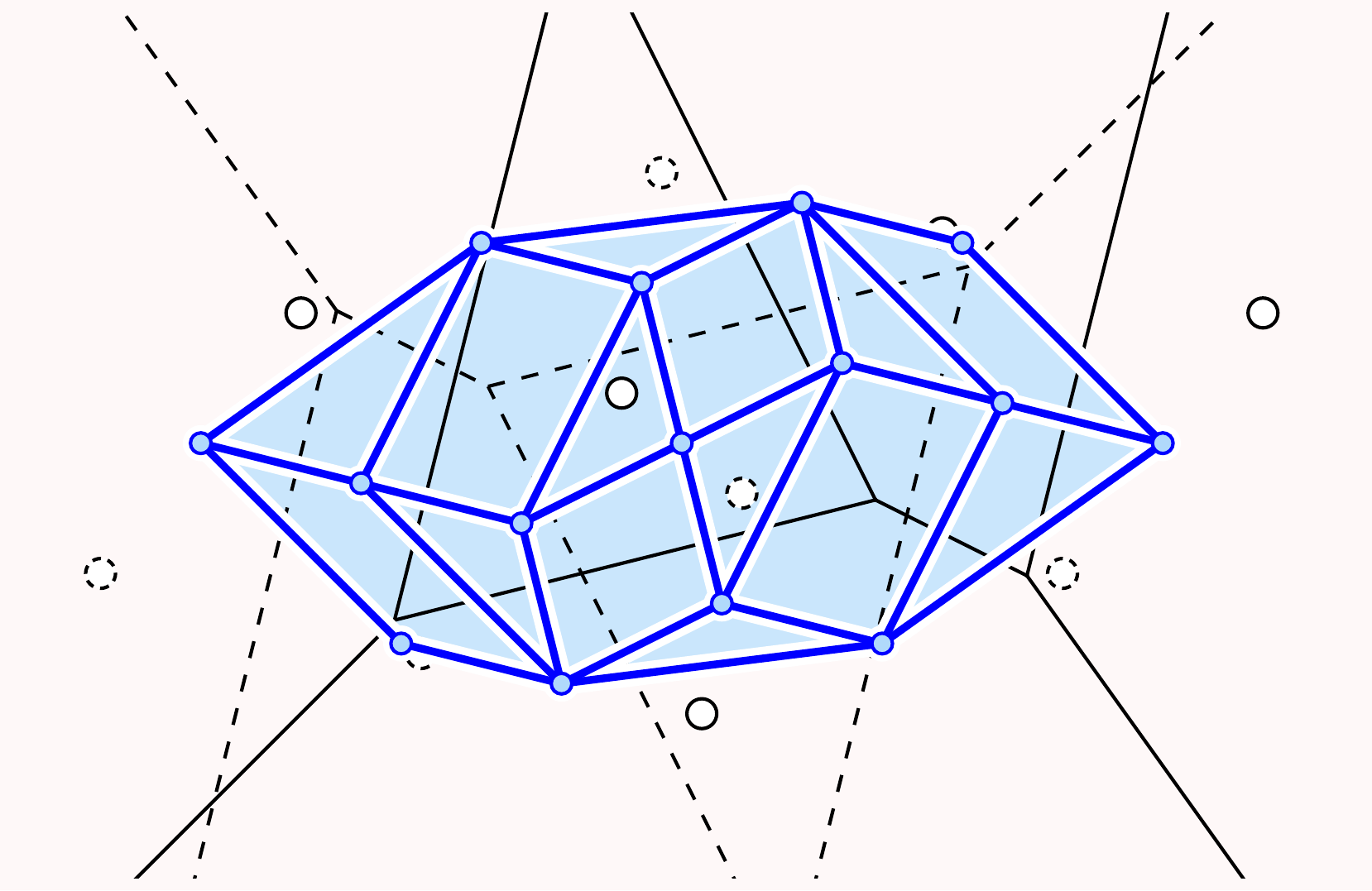}%
        }
    }\hspace{.005\linewidth}%
    {%
        \setlength{\fboxsep}{0pt}%
        \setlength{\fboxrule}{.4pt}%
        \fbox{%
            \includegraphics[width=.49\linewidth]{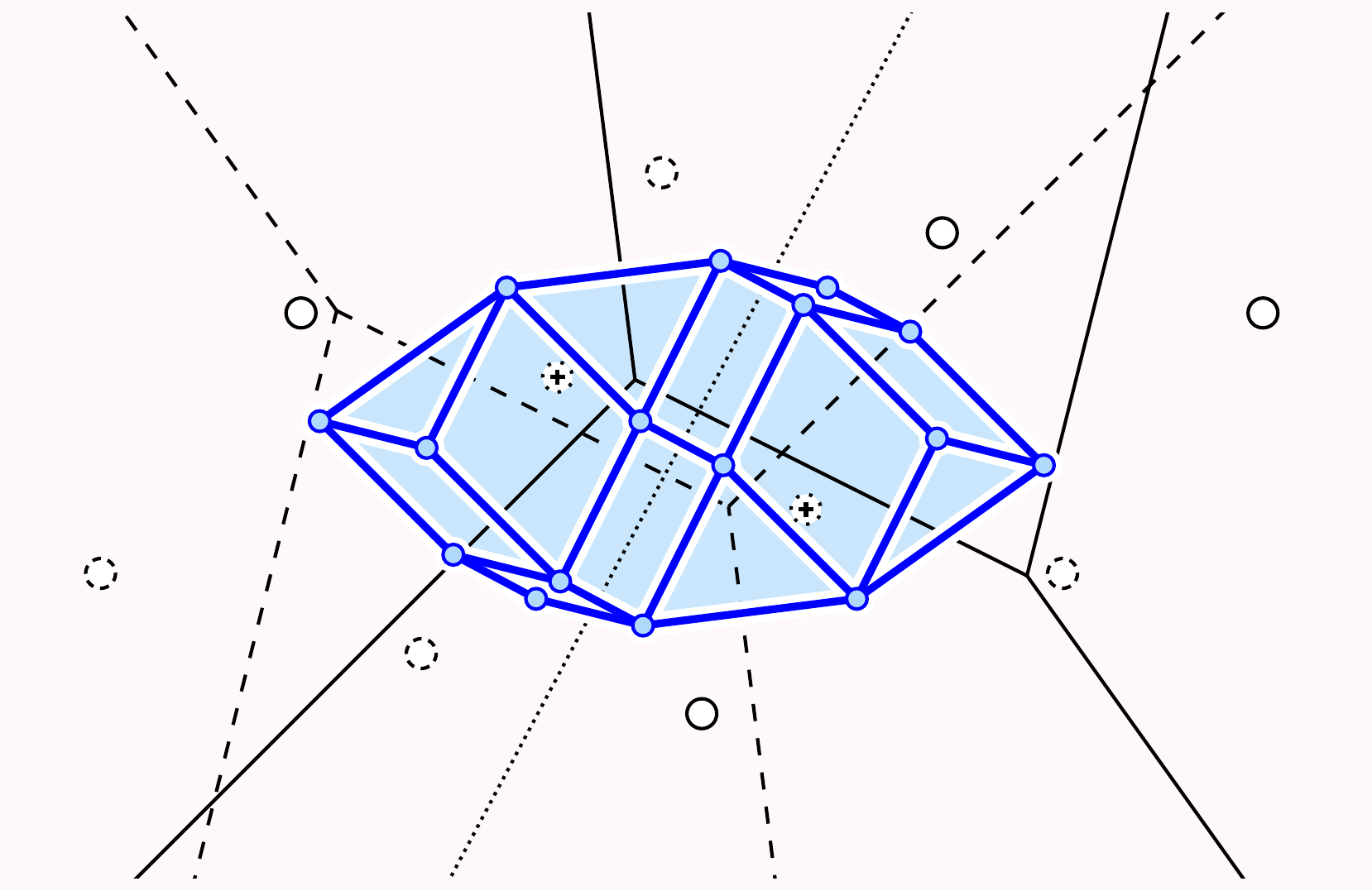}%
        }
    }%
    \caption{\footnotesize Two lunar Delaunay mosaics superimposed on the corresponding overlays of Voronoi tessellations.
    On the \emph{left}, we have two tessellations, and the (\emph{blue}) nodes of the mosaic are midpoints of \emph{solid} and \emph{dashed} points; compare with Figure~\ref{fig:LunarEMST}.
    On the \emph{right}, we have three tessellations, and the (\emph{blue}) nodes of the mosaic are centroids of triangles spanned by a \emph{solid}, a \emph{dashed}, and a point marked by a \emph{plus}.
    The \emph{blue} arcs of the mosaics connect the \emph{blue} nodes of neighboring domains in the overlays.}
    \label{fig:VorDel}
\end{figure}

\smallskip
By focusing exclusively on the $s$-simplices that are not redundant, we reduce the size of the graph we study considerably.
To count the domains, we suppose the cardinality of each $A_i$ is $n$, and that the $(s+1)n$ points are in general position.\footnote{
    See Definition~\ref{dfn:strong_chromatic_genericity} in the appendix for the definition of generic position.
}
Every one of the $s+1$ Voronoi tessellations has fewer than $3n$ edges, and each edge crosses each edge in the other $s$ tessellations in at most one point.
It follows that the number of crossings is less than some constant times $s^2 n^2$.
Since each tessellation has fewer than $2n$ vertices, the total number of vertices in the overlay is at most some constant times $s^2 n^2$, each belonging to either three or four edges.
Hence, also the number of edges and domains of the overlay is at most some constant times $s^2 n^2$;
see \cite{BCDES26} for more general bounds that subsume the case considered here.

\smallskip
This motivates us to introduce the \emph{lunar Delaunay mosaic}, denoted $\Delaunay{A_0, A_1, \ldots, A_s}$, as the dual of the overlay of the $s+1$ Voronoi tessellations.
Specifically, we have a node for every domain, an arc for every pair of domains that share a common side, a quadrangle for every crossing of two Voronoi edges, and a triangle for every Voronoi vertex---which is an exhaustive list of possibilities, assuming the points are in a general position.
Using an idea in \cite{Aur90}, we map each node to the average of its $s+1$ points to geometrically realize the lunar Delaunay mosaic.
For $s+1=2$, each lune starts growing from this point, but for $s+1 > 2$ this is not necessarily the case.
As illustrated in Figure~\ref{fig:VorDel}, this implies that the arcs of the Delaunay mosaic are orthogonal to the corresponding edges in the overlay, and the quadrangles are in fact parallelograms.
Recall the function $f_{\max} \colon \Rspace^2 \to \Rspace$, which maps every $x \in \Rspace^2$ to the minimum radius for which $x$ belongs to the union of lunes.
We use it to define the \emph{lunar radius function}, $\Lambda \colon \Delaunay{A_0, A_1, \ldots, A_s} \to \Rspace$, which maps each node, arc, triangle, and parallelogram to the minimum value of $f_{\max}$ of any point in its corresponding domain, edge, vertex, and crossing in the overlay of Voronoi tessellations.
Using the Nerve Theorem for convex sets on the clipped lunes (the terms within the union on the right-hand side of \eqref{eq:voronoi-lunes}), it is not difficult to prove that $\Lambda^{-1} [0,r] = \Delaunay[r]{A_0, A_1, \dots, A_s}$ has the same homotopy type as $f_{\max}^{-1} [0,r]$.
We can therefore use the lunar radius function as a proxy in the study of the growth process of lunes:
\begin{lemma}
  \label{lem:lunar_alpha_vs_lunar_space}
  There is a homotopy equivalence,
  $\Delaunay[r]{A_0, A_1, \dots, A_s} \simeq \bigcup\nolimits_{\aaa\in A_0 \times \dots \times A_s} \Lune{r}{\aaa}$, that is natural with respect to the inclusions for the growing parameter $r$.
\end{lemma}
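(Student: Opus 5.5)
We will prove the lemma with the Nerve Theorem for closed convex sets, applied to the clipped lunes on the right-hand side of \eqref{eq:voronoi-lunes}. For each colorful $s$-simplex $\aaa$ with non-empty domain, set $L_r(\aaa) = \Lune{r}{\aaa} \cap \domain{\aaa}$. This is an intersection of closed disks and closed half-planes, so it is a compact convex set. By \eqref{eq:voronoi-lunes}, the union of these sets equals $f_{\max}^{-1}[0,r] = \bigcup_{\aaa} \Lune{r}{\aaa}$. Any non-empty intersection of finitely many $L_r(\aaa)$ is convex and hence contractible, so the Nerve Theorem gives $\bigcup_{\aaa} L_r(\aaa) \simeq {\rm Nrv}\{L_r(\aaa)\}$. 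For $r \le r'$ we have $L_r(\aaa)\subseteq L_{r'}(\aaa)$, so the nerves form a filtration. The functorial version of the Nerve Theorem (e.g.\ via the Mayer--Vietoris blowup, or the homotopy colimit formulation) makes these equivalences commute up to homotopy with the inclusions. This gives naturality.

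The key combinatorial step is to identify the nerve with $\Delaunay[r]{A_0, \ldots, A_s}$. Since the domains are the closed 2-cells of the overlay, a collection of domains has non-empty common intersection exactly when the domains share an overlay edge, a Voronoi vertex, or a crossing. Under general position, Definition~\ref{dfn:strong_chromatic_genericity}, such collections are precisely the node sets of arcs, triangles, and parallelograms of the mosaic, together with their faces. Intersecting with the lunes, a collection $\{L_r(\aaa)\}$ has a common point iff some point $x$ of the common overlay cell $\sigma$ satisfies $f_{\max}(x) \le r$. Indeed, on $\domain{\aaa}$ the value $f_{\max}(x)$ is the largest distance from $x$ to a point of $\aaa$, so $x \in L_r(\aaa)$ iff $x\in\domain{\aaa}$ and $f_{\max}(x)\le r$. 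This holds iff $\Lambda(\sigma^*) = \min_{\sigma} f_{\max} \le r$. The minimum is attained because $f_{\max}$ is continuous and coercive, and $\sigma$ is closed.

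There is a subtlety. The nerve is a simplicial complex, whereas the mosaic contains parallelograms, i.e.\ cells with four nodes. The nerve contains the full $3$-simplex on those four nodes, together with its triangles and the two diagonals. So I will either regard the Delaunay mosaic as a cell complex and show that the nerve and the mosaic are homotopy equivalent in a way compatible with the filtration, or pass to the nerve directly. Concretely, every subset of at least three nodes of a parallelogram has the crossing point as its unique common point, so all of these faces appear at the same value $\Lambda$ of the crossing. The same holds for the diagonal pairs: two opposite domains meet only at the crossing. Hence at each $r$ the sub-complex of the nerve spanned by the four nodes is either just the boundary square, with no parallelogram, or the full $3$-simplex, when the parallelogram is present. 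Collapsing the $3$-simplex onto the parallelogram's square $2$-cell is then a homotopy equivalence. It is performed simultaneously for all $r$ and commutes with inclusions.

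I expect the main obstacle to be this last step. I will need to verify the face relations carefully, for instance that the square's edges (shared Voronoi edges) have $\Lambda$ values no larger than the crossing's, and state cleanly why the cellular/simplicial comparison is natural. I would handle it by exhibiting an explicit filtered homotopy equivalence. Specifically, I will show that the Delaunay filtration is obtained from the nerve filtration by removing, at each crossing value, the open cells of the $3$-simplex not in the square, via elementary collapses at equal filtration values. Such collapses preserve homotopy type at every $r$ compatibly with inclusions.
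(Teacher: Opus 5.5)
Your proposal is correct and takes the same route as the paper's sketch, which applies the Nerve Theorem to the clipped lunes $\Lune{r}{\aaa} \cap \domain{\aaa}$ and then collapses the extra simplices of the nerve onto the corresponding cells of the lunar Delaunay mosaic; your version spells out the functorial nerve step and the parallelogram step that the paper only calls ``not difficult.'' Two small precisions: below the crossing value the four nodes of a parallelogram span only a subcomplex of the boundary square, not necessarily all of it, and the collapses (all at the crossing value, so compatible with the filtration) land on the parallelogram triangulated along one diagonal, which you then identify with the square $2$-cell.
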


\smallskip
At this time, we are primarily interested in the $1$-skeleton of the lunar Delaunay mosaic, which is a graph whose nodes and arcs have radii assigned to them by $\Lambda$.
Note that this is a subgraph of the graph we used to construct the combinatorial lunar EMST, but the lunar radius of a node or arc is not necessarily half the cost assigned to it, as in the earlier construction.
Nevertheless, by Lemma~\ref{lem:lunar_alpha_vs_lunar_space}, we can run Kruskal's algorithm on the weighted $1$-skeleton and get a spanning tree whose cost is again $\Cost{}{A_0, A_1, \ldots, A_s}$ as given in \eqref{eqn:cost}.
We call this the \emph{geometric lunar EMST}, which is considerably smaller than its combinatorial counterpart due to culling nodes and arcs with the overlay of Voronoi tessellations.

\subsection{Critical and Non-critical Cells}
\label{sec:2.3}

Even non-redundant nodes are not necessarily relevant to solve the optimization problem.
We therefore restrict ourselves further to the \emph{critical cells}, which are the nodes, arcs, and triangles in the lunar Delaunay mosaic associated with events that change the homotopy type of the union of lunes.
By definition of criticality, the lunar radii of the critical nodes and arcs agree with half the costs assigned to them in Section~\ref{sec:2.1}.
Triangles in the lunar Delaunay mosaic may or may not be critical, but parallelograms are necessarily non-critical, as we will see shortly.
We will need a geometric characterization of critical cells.
For this purpose, we let $\xxx$ be a colorful simplex in the cross-product of finite sets $A_0, A_1, \ldots, A_s$ in $\Rspace^2$, by which we mean that $\xxx$ contains at least one point of each color, so its dimension is $s$ or larger. 
The \emph{smallest enclosing stack} of $\xxx$ is a collection of $s+1$ concentric circles---one for each color---such that
\smallskip \begin{itemize}
  \item the largest circle in the stack is the smallest enclosing circle of the points in $\xxx$;
  \item the $i$-th circle in the stack encloses all points of $\xxx \cap A_i$, for $0 \leq i \leq s$.
\end{itemize} \smallskip
The \emph{size} of the stack is the radius of its largest circle, and the stack is \emph{empty} if all points of $A_i\setminus \xxx$ lie outside and all points of $A_i\cap\xxx$ lie on its $i$-th circle, for $0 \leq i \leq s$.
Observe that the center of the smallest enclosing stack is the first point of intersection of the growing lunes defined by the points in $\xxx$.
\begin{lemma}
  \label{lem:critical_and_non-critical_cells}
  Let $A_0, A_1, \ldots, A_s$ be finite sets of points in general position in $\Rspace^2$, and $\aaa, \bbb, \ccc$ colorful $s$-simplices in the cross-product of the $s+1$ sets.
  Then 
  \smallskip \begin{enumerate}[(1)]
    \item $\aaa$ is a critical node iff its smallest enclosing stack is empty;
    \item $\aaa, \bbb$ are the nodes of a critical arc iff $\card{(\aaa \cap \bbb)} = s$, and the smallest enclosing stack of $\aaa \cup \bbb$ is empty and strictly larger than the smallest enclosing stacks of $\aaa$ and $\bbb$;
    \item $\aaa, \bbb, \ccc$ are the nodes of a critical triangle iff $\card{(\aaa \cap \bbb \cap \ccc)} = s$, and the smallest enclosing stack of $\aaa \cup \bbb \cup \ccc$ is empty and strictly larger than the smallest enclosing stacks of $\aaa \cup \bbb$, $\bbb \cup \ccc$, $\ccc \cup \aaa$.
  \end{enumerate}
\end{lemma}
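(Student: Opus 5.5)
The plan is to reduce criticality of a cell in the lunar Delaunay mosaic to a local statement about the minimum of $f_{\max}$ over the dual Voronoi cell, and then translate that minimum into the language of smallest enclosing stacks.

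First I will write the lunar radius of a cell in terms of stacks. Let $\sigma$ be a node, arc or triangle, and let $\xxx$ be the union of its colorful $s$-simplices; by general position $\xxx$ has $s+1$, $s+2$ or $s+3$ points. For a point $x$ in the dual Voronoi cell (domain, edge, or vertex) the following holds: for each $i$, the nearest points of $A_i$ to $x$ are exactly $\xxx \cap A_i$, all at a common distance $\rho_i(x)$. Hence $f_{\max}(x) = \max_i \rho_i(x)$, which is the size of the stack of circles centered at $x$ through $\xxx\cap A_i$.

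Minimizing $f_{\max}$ over the \emph{affine hull} of the dual cell, with no Voronoi constraints, gives the smallest enclosing stack of $\xxx$. This is a convex problem: the objective is the maximum of distance functions, restricted to an affine space, namely the plane, the bisector line of the two points of one color, or a single point. The cell is \emph{empty} in the stack sense exactly when this unconstrained minimizer $z$ lies in the closed dual cell, i.e.\ when no point of $A_i\setminus\xxx$ is closer to $z$ than $\rho_i(z)$.

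Next I will invoke the generalized discrete Morse property of $\Lambda$, which the outline says is proved in Appendix~\ref{app:B} and which I will take as given. Under it, a cell $\sigma$ is critical iff $\Lambda(\sigma)$ is strictly larger than $\Lambda$ of each face and strictly smaller than $\Lambda$ of each coface. Equivalently, the lower link is empty and $\sigma$ is not paired with any coface.

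I will prove this equivalence geometrically via convexity of $f_{\max}$ restricted to the closure of the Voronoi cell.
\begin{itemize}
  \item If the minimizer $z$ of $f_{\max}$ over the closed dual cell lies in the relative interior, then $z$ is the unconstrained minimizer, so the stack is empty. Moreover, each face of $\sigma$ corresponds to a larger Voronoi cell, so its value is a minimum over a larger set of a different objective. The condition that the stack of $\xxx$ be strictly larger than the stacks of the sub-unions says that the unconstrained minimizer of each face lies elsewhere with smaller value. Applying the empty-stack argument to each face then gives strict inequalities with the faces.
  \item Conversely, if the minimizer lies on the boundary of the dual cell, i.e.\ on a dual cell of a coface $\tau$, then $\Lambda(\sigma)=\Lambda(\tau)$ and $\sigma,\tau$ lie in a common interval, so $\sigma$ is non-critical.
\end{itemize}

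For nodes there are no faces, so item (1) is just the statement that the minimizer of $\max_i\|x-a_i\|$ lies in $\domain{\aaa}$. For arcs, the condition $\card{(\aaa\cap\bbb)}=s$ records that the two domains share a side of one Voronoi tessellation only, which is forced by general position. For triangles, which are dual to a Voronoi vertex of one $A_i$, the intersection condition is the analogous statement. Parallelograms are dual to crossings of two bisectors, which have equal $\Lambda$ with a face, so they never appear. I will treat the three cases by the same template.

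I expect the main obstacle to be the \emph{strictly larger} conditions in (2) and (3). The difficulty is the case where the unconstrained minimizer for $\aaa\cup\bbb$ coincides with that of $\aaa$ (the bisector constraint inactive): then $\Lambda(\text{arc})=\Lambda(\aaa)$ and the arc pairs downward with its node. I will first check, with the KKT conditions for the max of distances on a line versus the plane, that equal stack sizes force equal centers, so that "strictly larger" is exactly "the bisector constraint is active". I will also need to verify carefully that emptiness of the face stacks is not required for this comparison, since the comparison is between unconstrained minima and empty-stack issues are handled by the cofaces.
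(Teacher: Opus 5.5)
\textbf{Verdict: gap.} Your overall plan is sound: use the generalized discrete Morse property, so that a cell is critical iff no face or coface has the same value of $\Lambda$, then study $f_{\max}$ on the closed dual Voronoi cell by convexity. It is close in spirit to the paper, which works directly with the first contact point $y$ of the disks centered at the points of $\aaa \cup \bbb$. The gap is in how you connect this to the stacks.

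In the paper, the largest circle of the smallest enclosing stack is the smallest enclosing circle of $\xxx$ in the whole plane. Emptiness also requires all points of $\xxx \cap A_i$ to lie \emph{on} the $i$-th circle. The stack is therefore not obtained by minimizing $f_{\max}$ over the affine hull of the dual cell; the two notions coincide for nodes, but not in general for arcs and triangles.
\begin{itemize}
  \item For an arc, the center $z$ of the enclosing circle of $\aaa \cup \bbb$ lies on the bisector $L$ of $a_i, b_i$ only if both points lie on that circle, which is what ``strictly larger'' encodes. So your inference ``minimizer in the relative interior, hence the stack is empty'' is unjustified until you derive $z \in L$ from criticality.
  \item For a triangle, the affine hull is the Voronoi vertex $v$ of $A_i$, so your emptiness condition is vacuous. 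The paper's condition is not: it says the enclosing circle of the $s+3$ points is centered at $v$, i.e., $a_i, b_i, c_i$ are the farthest points from $v$ and span an acute triangle. Your template never produces this.
  \item The strict inequalities with the faces do not follow from ``applying the empty-stack argument to each face''. $\Lambda$ of a face is a \emph{constrained} minimum, and the stack of a face need not be empty (the nodes of a critical arc need not be critical). A strictly smaller enclosing circle does not by itself give a strictly smaller $\Lambda$.
\end{itemize}

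What is missing are local convexity arguments at $z$. For a finite set $X$, write $F_X(x) = \max_{p \in X} \Edist{x}{p}$, and let $H_a, H_b$ be the closed half-planes bounded by $L$ on the sides of $a_i$ and $b_i$.
\begin{itemize}
  \item Arc, forward direction. Suppose the stack of $\aaa \cup \bbb$ is empty and strictly larger. Then $z$ lies in the relative interior of the shared side, so near $z$ the domain of $\aaa$ coincides with $H_a$. If $\Lambda(\aaa)$ equalled $\Lambda(\sigma)$ for the arc $\sigma$, then $z$ would minimize $F_\aaa$ over all of $H_a$. This contradicts strictness whether or not the center of the enclosing circle of $\aaa$ lies in $H_a$. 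If it does not, $b_i$ lies strictly inside that circle, so $\aaa$ and $\aaa \cup \bbb$ have the same enclosing circle.
  \item Arc, converse. Suppose $z \notin L$, say $z$ is strictly closer to $a_i$. Then the enclosing circle of $\aaa \cup \bbb$ is that of $\bbb$, and its center lies outside $H_b \supseteq \domain{\bbb}$. Hence $\min_L F_{\aaa \cup \bbb} = \min_{H_b} F_\bbb \leq \Lambda(\bbb) \leq \Lambda(\sigma)$. So $\sigma$ ties with $\bbb$ if the minimizer of $F_{\aaa \cup \bbb}$ on $L$ lies on the shared side, and with a coface otherwise.
  \item Triangle, color $i$ not farthest. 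If the points of color $i$ are not the farthest points from $v$, then $f_{\max}$ is a single convex function near $v$. It cannot strictly decrease along all three Voronoi edges leaving $v$, because their directions positively span the plane.
  \item Triangle, color $i$ farthest. Then $f_{\max}$ strictly decreases along every such edge iff each edge heads toward the midpoint of its two points. That holds iff $a_i b_i c_i$ is acute, which is exactly the condition that $v$ is the center of the enclosing circle.
\end{itemize}
With these pieces added, your framework does prove the lemma.
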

\begin{proof}
  (1) Consider the smallest radius, $r$, for which the closed disks centered at the points of $\aaa$ have a non-empty common intersection, and let $x = x(\aaa)$ be the point at which the disks intersect.
  Then $\aaa$ is a critical node iff $x$ lies in the interior of $\domain{\aaa}$---indeed, otherwise another lune is covering $x$ at radius $r$, and $\aaa$ does not create a new connected component.
  Writing $a_i = \aaa \cap A_i$, this is the case iff $x$ lies in the interior of $\domain{a_i; A_i}$, for each $0 \leq i \leq s$, which happens iff the smallest enclosing stack of $\aaa$ is empty.

  \smallskip
  (2) Similarly, let $y = y(\aaa \cup \bbb)$ be the first point at which the closed disks centered at the points in $\aaa \cup \bbb$ intersect.
  Then the arc connecting the nodes $\aaa$ and $\bbb$ is critical iff $x(\aaa)$ and $x(\bbb)$ are defined for strictly smaller radii than $y$, and $y$ lies in the interior of the side shared by $\domain{\aaa}$ and $\domain{\bbb}$.
  Letting $i$ be the color such that this side is part of the bisector of two points in $A_i$, we have $\card{(\aaa \cap \bbb)} = s$ as they share the points of the other $s$ colors.
  Observe that $a_i = \aaa \cap A_i$ and $b_i = \bbb \cap A_i$ lie on the largest circle in the stack, else the stacks of $\aaa$ and $\bbb$ would not be strictly smaller than that of $\aaa \cup \bbb$.
  Hence, $y$ lies in the interior of the shared side iff the smallest enclosing stack of $\aaa \cup \bbb$ is empty.
  
  \smallskip
  (3) The argument for arcs generalizes to triangles, which we therefore omit.
\end{proof}

We note that a parallelogram in the lunar Delaunay mosaic cannot be critical for points in general position.
To see this, let $x$ be the corresponding crossing of two Voronoi edges, say the edges that separate the regions of $a_0, b_0 \in A_0$ and the regions of $a_1, b_1 \in A_1$.
If $s+1=2$ and the two sets contain no additional points, then the intersection of the four lunes defined by the two plus two points is also the intersection of two unions of two disks each.
Neither of these two unions has a hole, so nor does their intersection.
Indeed, if it had a hole, then there would be a loop contained in the intersection that surrounds the hole, but this loop is filled in both unions of two disks, so it is also filled in their intersection.
In the more general case, when there are additional points---in $A_0$ or $A_1$ or in additional sets---we assume that no two of them have the same distance to $x$, and all these distances differ from $\Edist{x}{a_0} = \Edist{x}{b_0}$ and $\Edist{x}{a_1} = \Edist{x}{b_1}$.
So setting $r = f(x)$, the points at distance $r$ from $x$ are either two or all four of $a_0, b_0, a_1, b_1$, or a unique other point.
In none of these cases, we see a hole filled up around $x$ as we go from a radius smaller to a radius larger than $r$.

\begin{figure}
    \centering
    {%
        \setlength{\fboxsep}{0pt}%
        \setlength{\fboxrule}{.4pt}%
        \fbox{%
            \includegraphics[width=.48\linewidth]{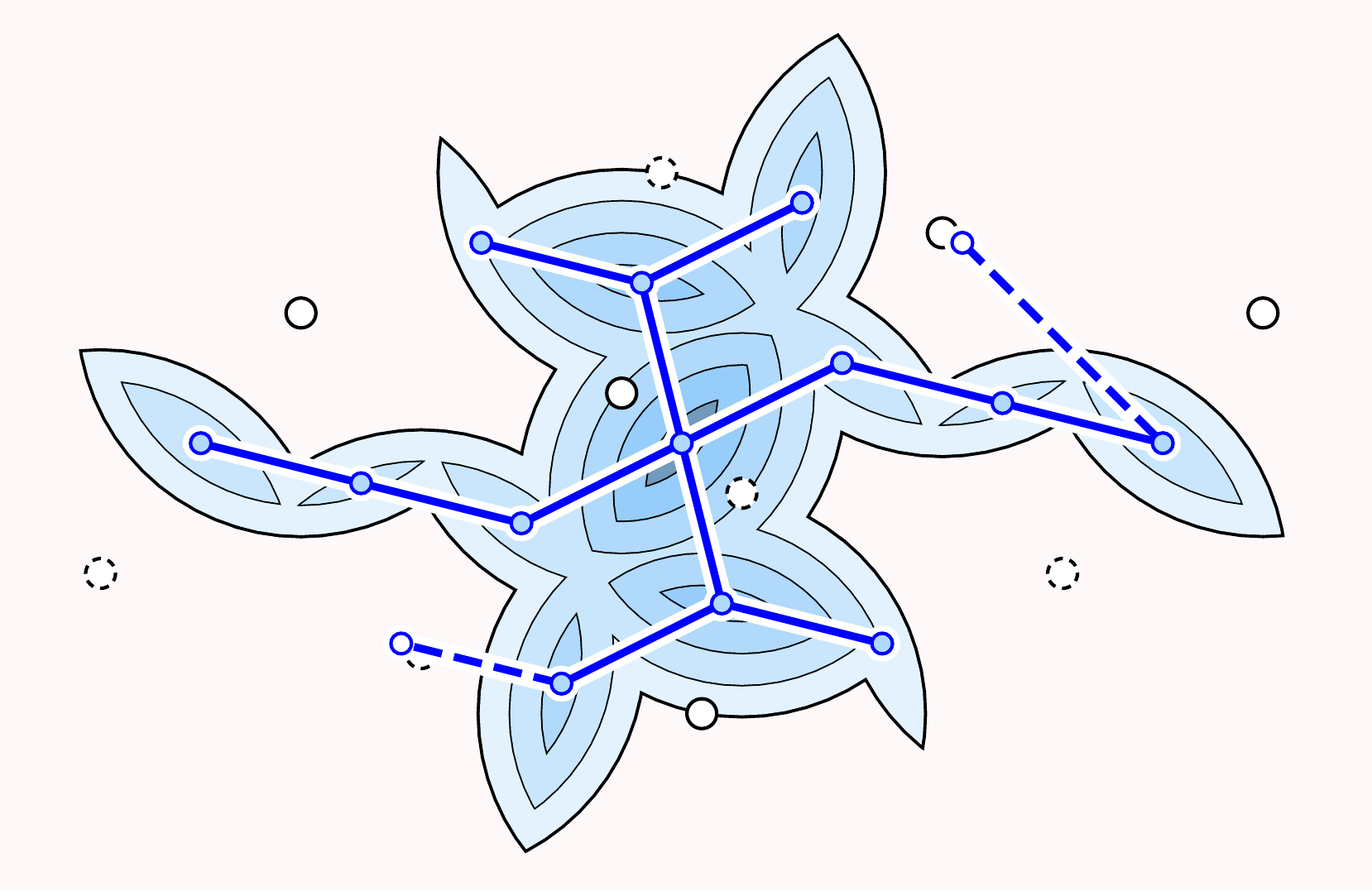}%
        }
    }\hspace{.01\linewidth}%
    {%
        \setlength{\fboxsep}{0pt}%
        \setlength{\fboxrule}{.4pt}%
        \fbox{%
            \includegraphics[width=.48\linewidth]{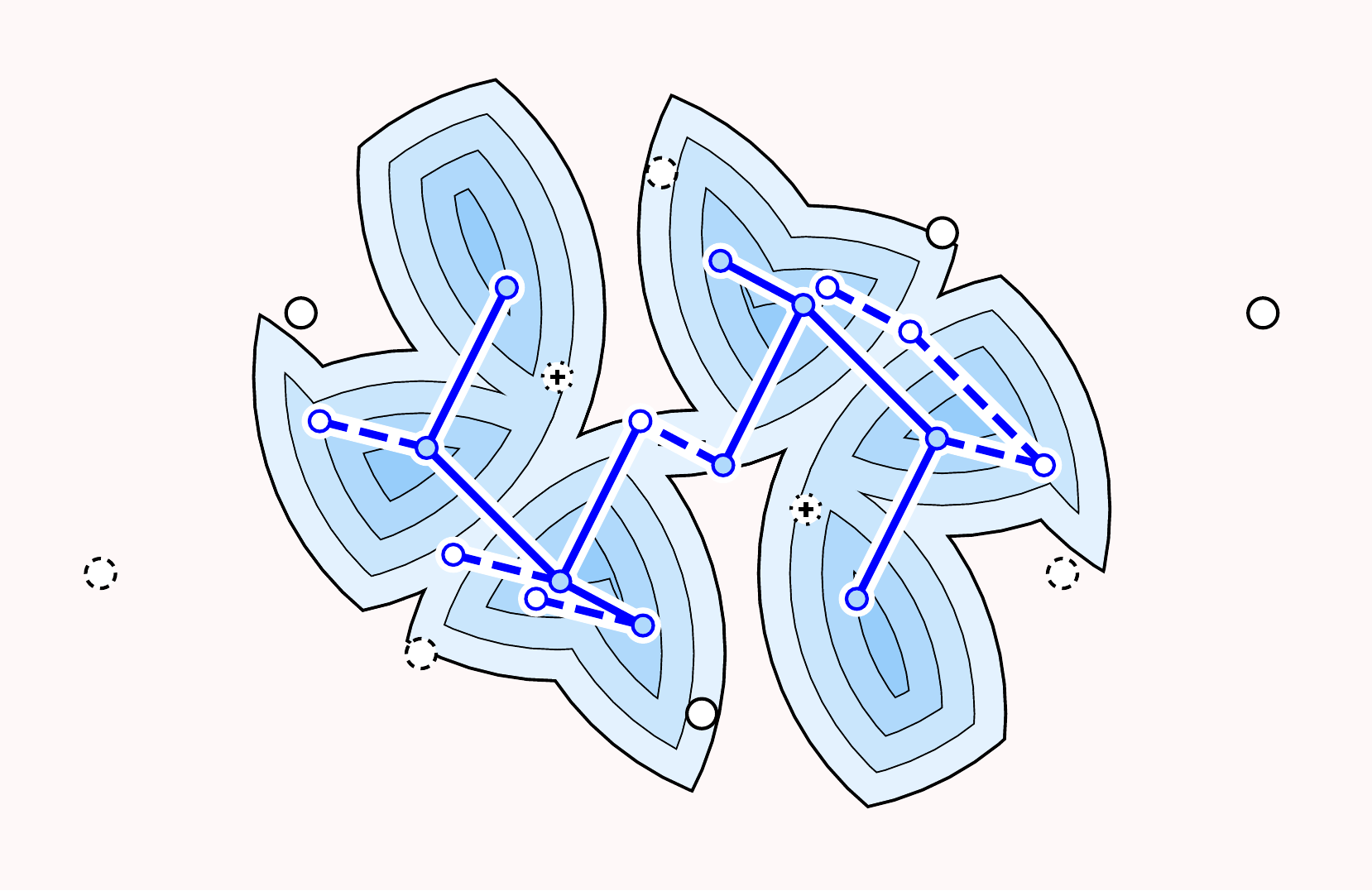}%
        }
    }%
    \caption{\footnotesize On the \emph{left} the $1$-lunes and on the \emph{right} the $2$-lunes of the points in Figure~\ref{fig:VorDel}, each together with corresponding lunar EMST.
    The \emph{solid blue} arcs and \emph{blue} nodes are critical, and the \emph{dashed blue} arcs and \emph{white} nodes are not. 
    While the points are symmetric, the arbitrary breaking of ties in the choice of non-critical arcs is responsible for the lack of symmetry in the constructed trees.}
    \label{fig:LunarEMST-with-noncritical}
\end{figure}

\smallskip
We prove upper bounds for the number of critical nodes, arcs, and triangles that are significantly lower than the bounds established for the nodes, arcs, and domains of the lunar Delaunay mosaic.

\begin{lemma}
  \label{lem:counting_critical_cells}
  Let $A_0, A_1, \ldots, A_s$ be sets of $n$ points in $\Rspace^2$ each.
  Then the number of critical nodes, arcs, and triangles of the lunar radius function on $\Delaunay{A_0, A_1, \ldots, A_s}$ is at most $O(n)$.
\end{lemma}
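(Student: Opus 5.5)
The plan is to use the characterization of Lemma~\ref{lem:critical_and_non-critical_cells} to charge every critical cell injectively to a simplex of an ordinary Delaunay triangulation of the union of at most three of the color classes. Each such triangulation has $O(n)$ simplices, and there are $O(s^3)$ choices of color classes. This yields a bound of $O(s^3 n)$, which is $O(n)$ for fixed $s$.

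The key observation concerns a colorful simplex $\xxx$ (a node $\aaa$, or $\aaa\cup\bbb$, or $\aaa\cup\bbb\cup\ccc$) with an empty smallest enclosing stack, centered at $x$ with size $r$. The largest circle is the smallest enclosing circle of $\xxx$, so in general position it passes through two or three points of $\xxx$; call them the \emph{defining points}. They fix $x$ and $r$. Every other point of $\xxx$, of color $j$, lies on the $j$-th circle, and emptiness says no point of $A_j$ is strictly inside that circle. Hence it is the nearest neighbor of $x$ in $A_j$, which is unique in general position. So $\xxx$ is determined by its defining points. Moreover, emptiness says that the open disk of radius $r$ about $x$ contains no point of any color appearing among the defining points. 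So the defining points span an edge or triangle of $\Delaunay{}$ of the union of those (at most three) color classes. For an edge whose circle is the diametral circle, it is even a Gabriel edge. The map from critical cells to such Delaunay simplices is therefore injective once the type of cell is fixed.

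Next I apply this case by case.
\begin{enumerate}[(1)]
  \item For a critical node, the defining points have distinct colors, so they give an edge or triangle of the Delaunay triangulation of $A_i\cup A_j$ or $A_i\cup A_j\cup A_k$. This gives $O(s^3 n)$ critical nodes.
  \item For a critical arc, $\aaa\cup\bbb$ has exactly one doubled color $i$, and the proof of Lemma~\ref{lem:critical_and_non-critical_cells} shows that $a_i,b_i$ both lie on the largest circle. So the defining set is $\{a_i,b_i\}$ or $\{a_i,b_i,c\}$ with $c$ of another color. This gives a Delaunay edge of $A_i$ or a triangle of $A_i\cup A_j$. The pair $\aaa,\bbb$ is then recovered, since the remaining colors are nearest neighbors of $x$.
  \item For a critical triangle, $\card{(\aaa\cap\bbb\cap\ccc)}=s$ forces all three nodes to differ in a single color $i$. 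By the same strict-growth argument, all three points of color $i$ lie on the largest circle. Thus they form an empty triangle of $\Delaunay{A_i}$, which gives $O(sn)$ critical triangles.
\end{enumerate}

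The step I expect to need the most care is the claim that all points of the doubled or tripled color lie on the largest circle, together with the claim that the defining set has at most three points. The first follows from the strict-size conditions in Lemma~\ref{lem:critical_and_non-critical_cells}: if, say, $b_i$ were strictly inside, the stack of $\aaa\cup\bbb$ would coincide with that of $\aaa$. The second follows from general position (Definition~\ref{dfn:strong_chromatic_genericity}), which also guarantees uniqueness of nearest neighbors. The rest is the linear size of planar Delaunay triangulations.
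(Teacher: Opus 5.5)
Your proposal is correct and follows the paper's proof. Both charge critical nodes to edges and triangles of $\Delaunay{A_i \cup A_j}$ and $\Delaunay{A_i \cup A_j \cup A_k}$, critical arcs to edges of $\Delaunay{A_i}$ or triangles of $\Delaunay{A_i \cup A_j}$, and critical triangles to triangles of $\Delaunay{A_i}$, and your nearest-neighbor argument spells out the injectivity that the paper only asserts. The one case you skip, which the paper treats separately, is $s=0$: a node is then a single point and its enclosing circle passes through only one point, but the $n$ nodes are counted trivially.
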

\begin{proof}
  For $s+1 = 1$, the Delaunay mosaic has at most $n$ vertices, $3n$ edges, and $2n$ triangles, which includes all critical vertices, edges, and triangles. 
  Henceforth assume $s+1 \geq 2$.

  \smallskip
  By Lemma~\ref{lem:critical_and_non-critical_cells}, a node is critical only if its smallest enclosing stack is empty.
  Assuming general position of the points, the largest circle of this stack---which is also the smallest enclosing circle of the node---passes through either two or three of the $s+1$ points. Suppose two points.
  If they belong to $A_i$ and $A_j$, then they span a critical edge in the (standard) Delaunay mosaic of $A_i \cup A_j$.
  There are $\binom{s+1}{2}$ choices of two sets and at most $6n$ critical edges in each Delaunay mosaic of $2n$ points, which implies an upper bound of $3 (s+1) s n$ such cases.
  Next suppose three points on the circle.
  If they belong to $A_i, A_j, A_k$, then they span a critical triangle in the (standard) Delaunay mosaic of $A_i \cup A_j \cup A_k$.
  We have $\binom{s+1}{3}$ choices of three sets and at most $6n$ critical triangles in each Delaunay mosaic of $3n$ points, which implies an upper bound of $(s+1) s (s-1) n$ such cases.
  There is no double-counting since a smallest enclosing stack uniquely defines the corresponding node.
  Adding the two bounds, we get $(s+2)(s+1)s n = O(n)$ as an upper bound on the number of critical nodes.

  \smallskip
  An arc of the lunar Delaunay mosaic is critical only if the smallest enclosing stack of the corresponding $s+2$ points is empty.
  Assume $A_i$ contributes two to the $s+2$ points while every other set contributes only one point.
  Since the smallest enclosing stack of the arc is strictly larger than those of its two nodes, both points contributed by $A_i$ lie on the largest circle in the stack.
  If they are the only points on the circle, then they span a critical edge in the (standard) Delaunay mosaic of $A_i$, and if there is a third point on the circle, say contributed by $A_j$, then the three points span a critical triangle in the (standard) Delaunay mosaic of $A_i \cup A_j$.
  So the number of cases is at most $(s+1) 3n + \binom{s+1}{2} 4n = O(n)$.

  \smallskip
  Finally, a triangle of the lunar Delaunay mosaic is critical only if the smallest enclosing stack of the corresponding $s+3$ points is empty.
  Assume $A_i$ contributes three to the $s+3$ points, while every other set contributes only one.
  Since the smallest enclosing stack of the triangle is strictly larger than those of its arcs, all three points contributed by $A_i$ lie on the circle, so they span a critical triangle in the (standard) Delaunay mosaic of $A_i$.
  The number of such cases is therefore at most $2 (s+1) n = O(n)$.
\end{proof}

Since only critical nodes and arcs of the lunar Delaunay mosaic contribute to the cost of the lunar EMST, it should be possible to dispense with all non-critical nodes and arcs.
One way to do this starts with the geometric lunar EMST and removes every non-critical arc by canceling it with an incident non-critical node.
This operation contracts the arc to the other incident node, which may or may not be critical, and implicitly merges the two nodes into one.
We call the result the \emph{topological lunar EMST} and note that its cost is again $\Cost{}{A_0, A_1, \ldots, A_s}$ as given in \eqref{eqn:cost}.
By Lemma~\ref{lem:counting_critical_cells}, it is considerably smaller than its combinatorial and geometric counterparts.
  
\subsection{Relation to Chromatic Persistence}
\label{sec:2.4}

As mentioned in the introduction, the original motivation for the lunar generalizations of the Euclidean minimum spanning tree is the theory of chromatic persistence introduced in \cite{CDES26}.
In particular, the lunar EMST of an $(s+1)$-coloring of finitely many points in Euclidean space corresponds to the degree-$s$ relative persistence module of the inclusion of all combinations of $s$ colors in $s+1$ colors, as we will now explain.

\smallskip
Let $A$ be a finite set of points in general position in $\Rspace^2$.
For an integer $s \geq 0$, we write $[s] = \{0, 1, \ldots, s\}$, and letting $\chi \colon A \to [s]$ be an $(s+1)$-coloring, we write $A_i = \chi^{-1} (i)$, for $0 \leq i \leq s$.
We recall from \cite{CDES26} that the \emph{chromatic Delaunay mosaic}, denoted $\Delaunay{\chi}$, is the simplicial complex that contains a simplex $\sigma \subseteq A$ whenever the Voronoi domains of its vertices have a non-empty intersection; that is:
\begin{align}
  \bigcap\nolimits_{a \in \sigma} \domain{a; A_{\chi(a)}} &\neq \emptyset .
  \label{eqn:domains}
\end{align}
It is filtered by the \emph{chromatic radius function}, $\alpha \colon \Delaunay{\chi} \to \Rspace$, which maps $\sigma$ to the minimum radius, $r$, for which at least one point of the intersection of Voronoi domains \eqref{eqn:domains} is contained in the union of disks of radius $r$ centered at the vertices of $\sigma$.
For $a \in A$ with color $i = \chi(a)$, we write $D_r (a)$ for the closed disk of radius $r$ with center $a$, and call $D_r(a) \cap \domain{a; A_i}$ the corresponding \emph{Voronoi disk}.
The filtration induced by $\alpha$ is the sequence of complexes $\Alpha{r}{\chi} = \alpha^{-1} [0,r]$, and we note that $\Alpha{r}{\chi}$ is indeed isomorphic to the nerve of the Voronoi disks of radius $r$ centered at the points $a \in A$.
Call a simplex $\sigma \subseteq A$ \emph{colorful} if it contains at least one point of each color in $[s]$, and write
\begin{align}
  \Delaunaycol{\chi} &= \left\{ \sigma \in \Delaunay{\chi} \mid \chi (\sigma) = [s] \right\}
\end{align}
for the collection of colorful simplices, which we note is not necessarily a complex; see Figure~\ref{fig:LunarDelaunay-LiftedColorfulDelaunay}.
Nevertheless, we can define a sequence of subcollections by restricting $\alpha$ to $\Delaunaycol{\chi}$ and take sublevel sets of this restriction.
We also let $\Delaunaycol{\chi}$ inherit the boundary operator, $\partial$, from $\Delaunay{\chi}$ by ignoring all simplices that are not colorful.
For each radius, $r$, we thus have a combinatorial collection of cells, partitioned by dimension, with boundary operator and therefore a well-defined chain complex.\footnote{Such a combinatorial structure is often referred to as a \emph{Lefschetz complex}, due to \cite[Ch.\ III, Sec.\ 1, Def.\ 1.1]{Lef42}.}
Importantly, the homology of this chain complex reflects the relative homology of a simplicial pair, as we discuss shortly.
Assuming general position of the points in $A$, we claim that $\Delaunaycol{\chi}$ combinatorially agrees with the lunar Delaunay mosaic, $\Delaunay{A_0, A_1, \dots, A_s}$.
\begin{proposition}
  \label{prop:chromatic_vs_lunar}
  Let $A$ be a finite set of points in general position in $\Rspace^2$, and $\chi \colon A \to [s]$ an $(s+1)$-coloring.
  Then there is a bijection $\varphi \colon \Delaunaycol{\chi} \to \Delaunay{A_0, A_1, \dots, A_s}$ between the colorful simplices of the chromatic alpha complex and the cells in the lunar Delaunay mosaic, such that
  \begin{itemize}
    \item $\dim \sigma = \dim \varphi(\sigma) + s$;
    \item $\alpha(\sigma) = \Lambda(\varphi(\sigma))$;
    \item $\varphi(\partial\sigma) = \partial\varphi(\sigma)$,
  \end{itemize}
  for each $\sigma \in \Delaunaycol{\chi}$, in which $\alpha$ and $\Lambda$ are the radius functions of the two mosaics.
\end{proposition}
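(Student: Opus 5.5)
The plan is to define $\varphi$ through the overlay of Voronoi tessellations. A colorful simplex $\sigma \in \Delaunaycol{\chi}$ has at least one vertex of each color. We write $\sigma_i = \sigma \cap A_i \neq \emptyset$ and set
\[
  F(\sigma) = \bigcap\nolimits_{a \in \sigma} \domain{a; A_{\chi(a)}} = \bigcap\nolimits_{i=0}^{s} \Bigl[ \bigcap\nolimits_{a \in \sigma_i} \domain{a; A_i} \Bigr].
\]
For each $i$, the inner intersection is a cell of $\Voronoi{A_i}$: a domain if $\card{\sigma_i} = 1$, an edge if $\card{\sigma_i} = 2$, and a vertex if $\card{\sigma_i} = 3$. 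Here we use the general position assumption (Definition~\ref{dfn:strong_chromatic_genericity}), which forbids four cocircular points of one color. Under the same assumption, the intersection $F(\sigma)$ is a non-empty cell of the overlay $\Voronoi{A_0, A_1, \ldots, A_s}$. Its codimension is $\sum_i (\card{\sigma_i} - 1) = \dim \sigma - s$, because generic cells of different tessellations meet transversally. We let $\varphi(\sigma)$ be the dual cell of $F(\sigma)$ in $\Delaunay{A_0, A_1, \ldots, A_s}$. Its dimension equals the codimension of $F(\sigma)$, which gives the first bullet.

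Bijectivity comes from listing all cells of the overlay. Every cell is the intersection of one cell $\nu_i$ per tessellation, where the cells $\nu_i$ are the minimal ones containing it. Each $\nu_i$ determines a non-empty set $\sigma_i \subseteq A_i$ of one to three points, and the union of these sets is colorful and lies in $\Delaunay{\chi}$ by \eqref{eqn:domains}. This gives the inverse map, and $\sigma \mapsto F(\sigma)$ recovers the cell. Genericity gives the case list: two points of one color (an arc), three points of one color (a Voronoi vertex, hence a triangle), or two points in each of two colors (a crossing, hence a parallelogram). This matches the cases enumerated in Section~\ref{sec:2.2}. One point needs care. The mosaic must contain every colorful simplex whose intersection \eqref{eqn:domains} is non-empty, and genericity must rule out simplices whose intersection is non-empty but degenerate, for example three colors each contributing two points at a common crossing. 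We exclude these by general position.

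For the radius functions, $\alpha(\sigma)$ is the minimum over $x \in F(\sigma)$ of $\min_{a \in \sigma} \Edist{x}{a}$ under the literal "union of disks" reading. That reading looks wrong for relative filtrations, so the intended reading is presumably the nerve condition that all Voronoi disks of $\sigma$ share a common point $x$. Under this reading, $\alpha(\sigma) = \min_{x \in F(\sigma)} \max_{a \in \sigma} \Edist{x}{a}$. For $x \in F(\sigma)$, every $a \in \sigma_i$ is a nearest point of $A_i$ to $x$, so $\max_{a \in \sigma} \Edist{x}{a} = \max_i \min_{b \in A_i} \Edist{x}{b} = f_{\max}(x)$. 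Therefore $\alpha(\sigma) = \min_{F(\sigma)} f_{\max} = \Lambda(\varphi(\sigma))$.

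For the boundary, the facets of $\sigma$ that stay colorful are $\sigma \setminus \{a\}$ with $\card{\sigma_{\chi(a)}} \geq 2$. Removing $a$ enlarges the $\chi(a)$-th Voronoi cell from $\nu_{\chi(a)}$ to a coface of it, so $F(\sigma \setminus \{a\})$ is a cell of the overlay with $F(\sigma)$ in its boundary. Dually, $\varphi(\sigma \setminus \{a\})$ is a facet of $\varphi(\sigma)$. Conversely, every facet of $\varphi(\sigma)$ arises this way, because the overlay cells incident to $F(\sigma)$ of one dimension higher are obtained by relaxing a single color's constraint. Facets that lose a color are exactly those ignored in the inherited operator $\partial$, so the face relations agree. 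Over $\mathbb{Z}/2$ this already gives $\varphi(\partial\sigma) = \partial\varphi(\sigma)$. With signs, one would orient cells compatibly; parallelograms need care here, because a crossing has four incident edges, and these pair with the four facets obtained by dropping one of the two points in either color. I expect the main obstacle to be the genericity bookkeeping, namely that non-empty intersections have exactly the expected dimension, together with the interpretation of $\alpha$.
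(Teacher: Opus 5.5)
Your proposal is correct and takes essentially the same route as the paper's proof, which is given in Appendix~\ref{app:B} for $\Rspace^d$. Both define $\varphi(\sigma)$ as the dual of the overlay cell $\bigcap_{a\in\sigma}\domain{a; A_{\chi(a)}}$, get the dimension from a genericity-based codimension count (the paper cites \cite[Lemma~4.2(a)]{CDES26}, whereas you argue transversality and count $\sum_i(\card{\sigma_i}-1)$ directly), identify $\alpha$ with $\Lambda$ through the equidistance of same-colored vertices to points of the cell, and match the colorful facets of $\sigma$ with the cofacets of that cell; your common-intersection reading of $\alpha$ is the definition the appendix actually uses.
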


We refer to Appendix~\ref{app:B} for a proof of this proposition.
There, we also discuss the construction for point sets in arbitrary dimensions and, furthermore, argue that under a sufficiently strong notion of general position, $\Lambda$ is a generalized discrete Morse function; see \cite{For98,Fre09} for background on such functions.
To give geometric meaning to $\Delaunaycol{\chi}$, we recall that the \emph{$s$-chromatic subcomplex} of $\Delaunay{\chi}$ consists of all simplices whose vertices lack at least one color:
\begin{align}
  \Delaunay{\chi; s} &= \left\{ \sigma \in \Delaunay{\chi} \mid \card{\chi(\sigma)} \leq s \right\} ;
\end{align}
see \cite[Section~3.4]{CDES26}.
Its radius function is the restriction of $\alpha$ to its simplices.
The relative persistent homology of the filtered pair $\left( \Delaunay{\chi}, \Delaunay{\chi; s} \right)$ is defined via the relative chain complex, which is isomorphic to the restriction of the chain complex of $\Delaunay{\chi}$ to $\Delaunaycol{\chi}$.
Together with Lemma~\ref{lem:lunar_alpha_vs_lunar_space} and Proposition~\ref{prop:chromatic_vs_lunar}, this reduction explains the correspondence between the degree-$s$ relative persistence module of the filtered pair $\left(  \Delaunay{\chi}, \Delaunay{\chi; s} \right)$ and the lunar EMST.
\begin{corollary}
  \label{cor:cost_and_norm}
  Let $A$ be a finite set of points in general position in $\Rspace^2$, $\chi \colon A \to [s]$ an $(s+1)$-coloring, and $\Lspace_r$ the union of lunes of radius $r$ defined by the $s$-simplices in $A_0 \times A_1 \times \ldots \times A_s$.
  The degree-$p$ persistence module of the filtration of the spaces $\Lspace_r$ is isomorphic to the degree-$(p+s)$ relative persistence module of the filtration of pairs $( \Alpha{r}{\chi}, \Alpha{r}{\chi;s} )$, in which $\Alpha{r}{\chi;s} = \Alpha{r}{\chi} \cap \Delaunay{\chi; s}$.
  In particular, $\Cost{}{A_0, A_1, \ldots, A_s}$ is twice the $1$-norm of the degree-$s$ relative persistence diagram of the pairs.
\end{corollary}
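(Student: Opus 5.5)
The plan is to chain three identifications: the spaces $\Lspace_r$ with the sublevel complexes of the lunar mosaic, the lunar mosaic with the colorful part of the chromatic alpha complex, and that colorful part with the relative chain complex of the pair. Lemma~\ref{lem:lunar_alpha_vs_lunar_space} supplies a homotopy equivalence $\Delaunay[r]{A_0, \dots, A_s} \simeq \Lspace_r$ that is natural in $r$. Hence the degree-$p$ persistence module of $\Lspace_r$ is isomorphic to the degree-$p$ persistence module of the filtration of $\Delaunay{A_0, \dots, A_s}$ by sublevel sets of $\Lambda$. This module can be computed from the cellular chain complex of the mosaic, whose cells are nodes, arcs, triangles and parallelograms. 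The mosaic is a regular cell complex and each sublevel set is a subcomplex, so cellular homology applies, and the cellular chain maps induced by inclusion compute the persistence maps.

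Next, Proposition~\ref{prop:chromatic_vs_lunar} gives a bijection $\varphi$ that shifts dimension by $s$, preserves radii, and commutes with boundaries. It therefore induces, for every $r$, an isomorphism of chain complexes between $C_{\bullet+s}$ of the sublevel collection $\alpha^{-1}[0,r] \cap \Delaunaycol{\chi}$ and the cellular chains of $\Delaunay[r]{A_0,\dots,A_s}$. Because $\varphi$ preserves the filtration values, these isomorphisms commute with the inclusions as $r$ increases. One subtlety is orientation and sign conventions: $\varphi(\partial\sigma)=\partial\varphi(\sigma)$ should be read with suitably chosen orientations, or over $\mathbb{Z}/2$, which is the standard coefficient field for persistence diagrams. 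I will work over $\mathbb{Z}/2$ to avoid this issue.

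Third, the relative chain group $C_k(\Alpha{r}{\chi})/C_k(\Alpha{r}{\chi;s})$ has as basis exactly the colorful $k$-simplices of $\Alpha{r}{\chi}$. Its induced boundary is $\partial$ with the non-colorful faces discarded, which is precisely the boundary inherited by $\Delaunaycol{\chi}$. These identifications commute with inclusions in $r$. Composing the three natural isomorphisms gives $H_p(\Lspace_r) \cong H_{p+s}(\Alpha{r}{\chi},\Alpha{r}{\chi;s})$ naturally in $r$, which is the isomorphism of persistence modules. The main obstacle is the first step: making sure that the homotopy equivalence of Lemma~\ref{lem:lunar_alpha_vs_lunar_space} is natural at the level of homology, and that cellular homology of the mosaic sublevel sets agrees with its singular homology. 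I would handle this by taking the nerve-theorem equivalence for clipped lunes in its functorial form and noting that the mosaic is a regular CW complex.

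For the cost statement, take $p=0$. By construction in Section~\ref{sec:2.1}, the degree-$0$ diagram of $\Lspace_r$ has the finite points $(b_i,d_i)$ for $1\le i\le m$ together with one essential point $(b_0,\infty)$. The cost convention drops $b_0$ and pairs the remaining births with deaths as in the elder rule. The isomorphism of persistence modules yields equal diagrams, so the degree-$s$ relative diagram has the same finite points. Summing the persistences gives a $1$-norm of $\sum_i (d_i-b_i)$, and by \eqref{eqn:cost} this is half of $\Cost{}{A_0,\ldots,A_s}$. Here the essential class is excluded from the $1$-norm, matching the exclusion of $b_0$.
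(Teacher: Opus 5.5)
Your proposal is correct and follows the same route as the paper. The paper likewise identifies the relative chain complex of $(\Alpha{r}{\chi}, \Alpha{r}{\chi;s})$ with the colorful simplices, transports it to the lunar Delaunay mosaic via the dimension-shifting, radius-preserving, boundary-commuting bijection of Proposition~\ref{prop:chromatic_vs_lunar}, and then reaches $\Lspace_r$ through the natural homotopy equivalence of Lemma~\ref{lem:lunar_alpha_vs_lunar_space}. Your added details (working over $\mathbb{Z}/2$, using cellular homology, and excluding the essential class born at $b_0$ so that the $1$-norm equals $\sum_i (d_i - b_i)$ regardless of pairing) only make explicit what the paper leaves implicit.
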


\begin{figure}[hbbt]
    \centering
    {%
        \setlength{\fboxsep}{0pt}%
        \setlength{\fboxrule}{0pt}%
        \fbox{%
            
            \raisebox{0cm}{\includegraphics[width=.65\linewidth]{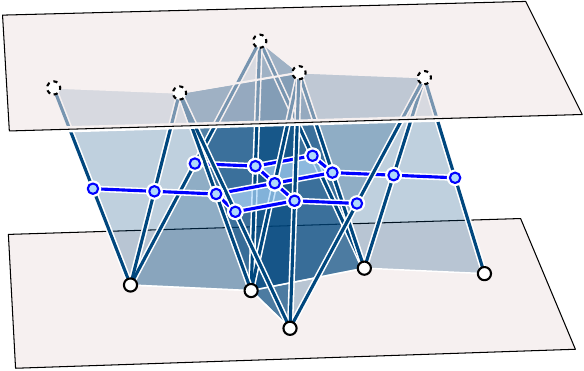}}%
        }
    }%
    \vspace{-0.0in}
    \caption{\footnotesize The connection between the lunar Delaunay mosaic and the colorful simplices in the chromatic Delaunay mosaic.
    The points come in two colors and are placed on two parallel planes connected by the colorful edges, triangles, and tetrahedra.
    The \emph{bright blue} middle section is isomorphic to the subcomplex of the lunar Delaunay mosaic for the given radius.
    It contains a vertex for each colorful edge, an edge for each colorful triangle, and two quadrangles for the two colorful tetrahedra.}
    \label{fig:LunarDelaunay-LiftedColorfulDelaunay}
\end{figure}

\section{Poisson--Delaunay Mosaics}
\label{sec:3}

To analyze the lunar EMSTs and to compare different kinds of such trees, we will need bounds on the expected number of cells in a Delaunay mosaic as well as their expected sums of radii and squared radii.
A focus will be the interaction between the trees and the boundary of the region of interest, which is primarily the unit square in the plane, $[0,1]^2$.
For standard Delaunay mosaics, we derive these bounds from the generalization of Theorem~1 in \cite{ENR17} proved in Appendix~\ref{app:A.1}, and its extensions in Appendix~\ref{app:A.2}, and for weighted Delaunay mosaics, we derive them from Theorem~1 in \cite{EdNi19} restated in Appendix~\ref{app:A.4}, and its extensions in Appendix~\ref{app:A.5}.

\subsection{Ordinary Delaunay Mosaics}
\label{sec:3.1}

An edge or triangle in the (ordinary) Delaunay mosaic of a locally finite set in $\Rspace^2$ is \emph{critical} if the smallest enclosing circle is also the circumcircle of the edge or triangle.
Given a stationary Poisson point process on $\Rspace^2$, we can use integral geometry to determine the number of such edges and triangles whose circumcircles have the center inside $[0,1]^2$, and also the sum of their radii or squared radii.
The same analysis can be used for a Poisson point process in $[0,1]^2$, but near the boundary of the unit square, the circumscribed circle may lie partially outside $[0,1]^2$, so the probability that it does not enclose other points is higher than if the circle is entirely inside $[0,1]^2$.
However, since the circle is determined by points inside the square, at least half the area of the disk bounded by the circle is inside the square.
By requiring that the intersection of the disk with $[0,1]^2$ be empty of sampled points, we can again use integral geometry to get a weaker version of Theorem~1 in \cite{ENR17} and thus worse bounds.
Nevertheless, our bounds are of the same order of magnitude.
Similarly we can use integral geometry to get bounds of the same order of magnitude if we allow one point inside the circumcircle.
\begin{lemma}
  \label{lem:radii_and_squared_radii}
  Let $A$ be a stationary Poisson point process with intensity $n$ in $\Rspace^2$ or in $[0,1]^2$.
  Then the expected sum of radii of the critical edges and triangles with circumcircles that are centered in $[0,1]^2$ and enclose at most one point of $A$ is at most a constant times $\sqrt{n}$, and the expected sum of their squared radii is at most a constant.
\end{lemma}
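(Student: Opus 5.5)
I will prove this with the Slivnyak--Mecke formula (the Blaschke--Petkantschin-type change of variables behind Theorem~1 of \cite{ENR17}). Each critical edge or triangle is determined by a circumcircle with center $z$ and radius $r$, plus two or three points on it. I will sum over unordered pairs and triples of points of $A$ and write the expected sum of $r^k$ (for $k=1,2$) as an integral over $z \in [0,1]^2$, $r>0$, and the angular positions on the circle. The integrand is $n^2$ (for edges) or $n^3$ (for triangles), times a Jacobian, times the probability that the relevant open disk contains at most one further point.

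For triangles the Jacobian is $r^3$ times a bounded angular factor, the area of the triangle spanned by the unit directions. For critical edges it is $r$ times a bounded factor, after parametrizing by the midpoint $z$, the half-length $r$ and the direction. In both cases the angular integrals contribute only a constant.

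\textbf{Emptiness probability.} In $\Rspace^2$ the disk $B(z,r)$ contains at most one point with probability $(1+n\pi r^2)e^{-n\pi r^2}$. In $[0,1]^2$ I will require this only for $B(z,r)\cap[0,1]^2$, which the statement allows. Since $z\in[0,1]^2$, this intersection has area at least $\tfrac14\pi r^2$ as long as $r\le 1$. For $r>1$ it has area at least a constant $c>0$, and there are no points outside the square. The probability is therefore at most $(1+n\pi r^2)e^{-c' n \min(r^2,1)}$. When the points lie in $[0,1]^2$, every circumradius is at most $\sqrt2/2$, because the smallest enclosing circle of points in the square has radius at most half the diagonal. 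So the regime $r>1$ does not occur and the $\min$ can be dropped.

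\textbf{Bounding the integrals.} The expected sum of $r^k$ over triangles is at most a constant times
\[
\int_0^{\infty} n^3 r^{3+k}(1+n\pi r^2)e^{-c'nr^2}\,\diff r .
\]
Substituting $t=\sqrt n\, r$, this equals $n^{3}n^{-(4+k)/2}$ times a convergent integral, that is, $O(n^{1-k/2})$. This gives $O(\sqrt n)$ for $k=1$ and $O(1)$ for $k=2$.

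For edges the integrand is $n^2 r^{1+k}(1+n\pi r^2)e^{-c'nr^2}$, which gives $n^2 n^{-(2+k)/2}=n^{1-k/2}$, the same orders.

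The $z$-integral over $[0,1]^2$ contributes area $1$. In $\Rspace^2$, points outside the square are allowed, but the Mecke computation is unchanged and gives the exact constants of \cite{ENR17}.

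\textbf{Main obstacle.} The hard step is justifying the area lower bound and the Jacobian near the boundary. Here I must check that the integration over point positions restricted to the square is dominated by the unrestricted integral. This holds because restricting points to the square only decreases a nonnegative integrand. I must also check that the quarter-disk bound $|B(z,r)\cap[0,1]^2|\ge \tfrac14\pi r^2$ for $r\le 1$ holds at corners. It does: the quarter of the disk pointing into the square lies inside the square when $r\le1$. The rest is the routine Gamma-integral computation above.
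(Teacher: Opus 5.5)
Your proposal is correct and is essentially the paper's argument: the paper also combines the Slivnyak--Mecke and Blaschke--Petkantschin formulas (Theorem~\ref{thm:counting_generalized} and Corollary~\ref{cor:measuring_generalized}, with $j$ points allowed inside and only an $\eta$-fraction of the disk required to be empty), taking $\eta=\frac12$ in the square case on the grounds that at least half of a smallest enclosing disk of points in $[0,1]^2$ lies in the square, whereas you use a cruder area fraction and track only orders of magnitude rather than exact Gamma-function constants. One small slip: the quarter of $B(z,r)$ pointing into the square is guaranteed to lie in $[0,1]^2$ only for $r\le\frac12$, not for $r\le 1$ (take $z$ the center of the square and $r=0.7$); but since $r\le\sqrt2/2$ here, the quarter disk of radius $\min\{r,\frac12\}$ still gives $|B(z,r)\cap[0,1]^2|\ge\frac{\pi}{8}r^2$, so the rest of your computation goes through unchanged.
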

We refer to Appendix~\ref{app:A} for more general theorems that imply this lemma, and to Appendix~\ref{app:A.3} for a more detailed statement of the lemma and its various cases.
We use the bounds for the squared radii to show that the expected sum of radii is bounded by a constant if we limit ourselves to the circles that cross the boundary of $[0,1]^2$.
\begin{lemma}
  \label{lem:crossing_the_boundary}
  Let $A$ be a stationary Poisson point process with intensity $n$ in $\Rspace^2$ or $[0,1]^2$.
  Then the expected sum of radii of the critical edges and triangles whose smallest enclosing circles enclose at most one point of $A$, have the center inside $[0,1]^2$, and cross the boundary of $[0,1]^2$ is at most $O(1)$. 
\end{lemma}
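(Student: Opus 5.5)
The plan is to use the bound on squared radii from Lemma~\ref{lem:radii_and_squared_radii}, combined with a Cauchy--Schwarz (or pointwise) estimate that exploits the fact that only circles near the boundary are counted. Let $\mathcal{C}$ be the collection of critical edges and triangles whose smallest enclosing circles enclose at most one point of $A$ and are centered in $[0,1]^2$. For $\sigma \in \mathcal{C}$ let $z_\sigma$ and $r_\sigma$ be the center and radius. The circle crosses the boundary of $[0,1]^2$ only if $\delta(z_\sigma) \leq r_\sigma$, where $\delta(z)$ is the distance from $z$ to $\partial [0,1]^2$.

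The first step is to split according to whether $r_\sigma \geq \lambda$ or $r_\sigma < \lambda$, with a threshold $\lambda$ chosen later (the natural choice is $\lambda = 1/\sqrt{n}$, the typical spacing).

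\emph{Large radii.} If $r_\sigma \geq \lambda$, then $r_\sigma \leq r_\sigma^2/\lambda$. By Lemma~\ref{lem:radii_and_squared_radii}, the expected sum of squared radii is $O(1)$, so this part contributes at most $O(1/\lambda) = O(\sqrt{n})$. That is too weak, so a finer analysis is needed.

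\emph{Refined argument.} I would instead localize the integral-geometric computation behind Lemma~\ref{lem:radii_and_squared_radii}, in the form proved in Appendix~\ref{app:A}. That computation expresses the expected sum as an integral over the center $z \in [0,1]^2$ of a density in $r$ that depends only on the disk being at least half inside the square. Concretely, the expected sum of $r_\sigma$ over circles centered in a region $U$ is at most $\mathrm{area}(U)$ times $\int_0^\infty r \cdot C n^2 r^{k} e^{-c n r^2} \diff r$ (up to polynomial factors for the at-most-one-point case). Now restrict to the event $r \geq \delta(z)$. By the coarea formula in $\delta$, the set of centers at distance $t$ from the boundary has length at most $4$, so the expected sum is bounded by
\[
  \int_0^{\infty} 4 \int_t^\infty r \, g_n(r) \diff r \diff t \;=\; 4 \int_0^\infty r^2 g_n(r) \diff r ,
\]
where $g_n(r)\diff r$ is the expected number of qualifying circles per unit area of centers with radius in $[r, r+\diff r]$. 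The right-hand side is, up to the constant $4$, exactly the per-unit-area density of the expected sum of squared radii. By Lemma~\ref{lem:radii_and_squared_radii} this is $O(1)$: the scaling gives $n \cdot n^{-1} = O(1)$.

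The main obstacle is justifying that the density $g_n$ is uniform in the center position, including near the boundary where only the part of the disk inside the square is required to be empty. The resolution is that at least a quarter (in fact half, per the discussion before Lemma~\ref{lem:radii_and_squared_radii}) of the disk lies in the square. This gives a uniform upper bound with exponent $e^{-c n r^2}$ for a smaller constant $c$, which is precisely what Appendix~\ref{app:A} provides. With that uniform density, exchanging the order of integration (Fubini) yields the $O(1)$ bound, treating critical edges and triangles, and the zero- and one-point cases, separately and summing.
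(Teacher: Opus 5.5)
Your argument is correct, but it reaches the bound by a different mechanism than the paper. The paper (Lemma~\ref{lem:crossing_the_boundary_details}) works with random sums. It lets $F(t)$ be the sum of radii of qualifying circles centered in $[0,1]^2$ that meet the line $x_1=t$, and uses stationarity and reflection symmetry of the process to get $\mathbb{E}[F(0)] \le \mathbb{E}[F(t)]$. It then uses the deterministic inequality $\int_0^1 F(t)\,\mathrm{d}t \le 2S$, which holds because each circle meets such lines only for $t$ in an interval of length at most $2r$. Summing over the four sides gives $\mathbb{E}[F^\partial] \le 8\,\mathbb{E}[S]$.

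You instead stay inside the Slivnyak--Mecke/Blaschke--Petkantschin integral, where the density in the center is uniform (or uniformly majorized). You integrate the indicator of $\delta(z) \le r$ over the centers; the strip $\{\delta \le r\}$ has area at most $4r$, which gives $4\,\mathbb{E}[S]$.

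Your route gains a factor of two, since the crossing condition with the center in the square is one-sided while the line average charges the full width $2r$. More substantively, it handles the processes on $[0,1]^2$ (Cases~2 and~3) cleanly. There the process is not stationary, so the paper's step $\mathbb{E}[F(0)] \le \mathbb{E}[F(t)]$ does not literally apply and is covered only by the remark about ``minor differences''. Your comparison instead happens at the level of a majorizing integrand that is translation invariant in the center by construction. The paper's route, in turn, needs only the tabulated bound on squared radii plus symmetry, without reopening the integral-geometric formula.

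A few things to tidy. The ``large radii'' paragraph is a dead end and can be dropped. For the one-point case, say explicitly that $e^{-x}(1+x)$ is decreasing in $x$, so replacing the clipped disk area by $\tfrac{1}{2}\pi r^2$, and extending the integration domain from $([0,1]^2)^{m+1}$ to all of $(\Rspace^2)^{m+1}$, yields an upper bound. The ``at least a quarter'' claim is not justified as stated, but you only need the half bound. The triangle integrand scales as $n^3 r^3$ rather than $n^2 r^k$, which is harmless.
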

We again refer to Appendix~\ref{app:A} for more general theorems that imply this lemma, and to Lemma~\ref{lem:crossing_the_boundary_details} and Table ~\ref{tbl:bounds} in Appendix~\ref{app:A.3} for the constants in the bounds for each of the different cases.

\subsection{Weighted Delaunay Mosaics}
\label{sec:3.2}

To understand how a lunar EMST is formed near the boundary of $[0,1]^2$, we study the intersection of a line with a Voronoi tessellation in $\Rspace^2$.
While it might seem far-fetched, we find it useful to think of this intersection as a $1$-dimensional \emph{weighted Voronoi tessellation}.
To explain, let $A \subseteq \Rspace^2$ be locally finite, and grow a disk centered at each point, which sweeps out its \emph{domain} in the Voronoi tessellation of $A$.
Given any line, denoted $\Rspace^1 \hookrightarrow \Rspace^2$, these domains decompose $\Rspace^1$ into \emph{segments}, and the growing disks sweep out these segments; see Figure~\ref{fig:weighted}.

\smallskip
We can describe everything that happens along the line intrinsically: the \emph{weighted points} that generate the segments are the orthogonal projections of points in $A$, weighted by their squared distance to the line, and each segment is swept out by a growing $1$-ball centered at the generating point.
Letting $A'$ be the set of weighted points in $\Rspace^1$, we write $\Delaunay{A'}$ for the \emph{weighted Delaunay mosaic}, which is dual to the decomposition of $\Rspace^1$ into segments.
In analogy to the unweighted case, we introduce the \emph{weighted radius function}, $g \colon \Delaunay{A'} \to \Rspace$, which maps every vertex, $a' \in A'$, to the minimum radius at which the disk centered at the corresponding point $a \in A$ touches a point of the corresponding segment.
Similarly, $g$ maps the edge with endpoints $a'$ and $b'$ to the minimum radius at which the disks centered at $a$ and $b$ reach the shared endpoint of the two corresponding segments.
Since the points in $\Rspace^2$ are some non-negative distance from $\Rspace^1$, the radius $g$ assigns to any vertex or edge in $\Delaunay{A'}$ is necessarily non-negative.
\begin{figure}[hbt]
  \centering
  \vspace{0.0in}
  \resizebox{!}{1.6in}{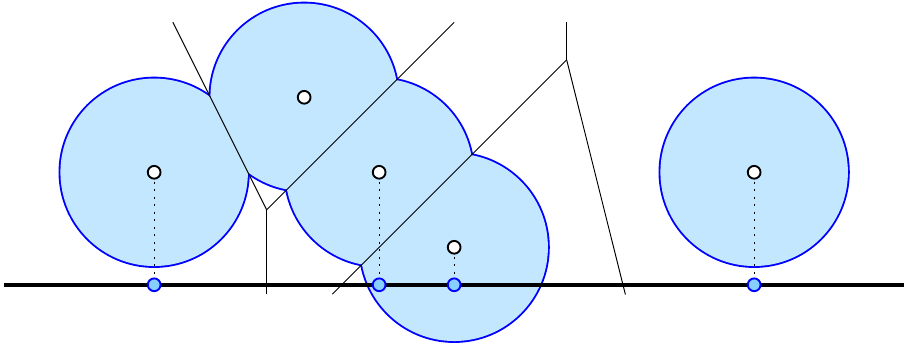}
  \caption{\footnotesize The horizontal line intersects the Voronoi tessellation of five points in a ($1$-dimensional) weighted Voronoi tessellation of the projections of these points.
  The latter has only four segments because the domain of $\tt b$ does not intersect the line.
  The domain of $\tt c$ intersects the line in a segment, but the projection $\tt c'$ of $\tt c$ onto the line does not lie inside this segment.
  By contrast, the segments of $\tt a$, $\tt d$, and $\tt e$ contain the corresponding vertices of the weighted mosaic, which are $\tt a'$, $\tt d'$, and $\tt e'$.}
  \label{fig:weighted}
\end{figure}

\smallskip
We distinguish between a \emph{critical vertex}, which is a weighted point inside its corresponding segment, a \emph{critical edge}, which contains the shared endpoint of the corresponding two segments, 
and a \emph{non-critical vertex-edge pair}, where the weighted point shares its weighted radius with the incident edge.
To facilitate counting, we call the location of a vertex and the shared endpoint of the two segments the \emph{center} of the vertex and edge in $\Delaunay{A'}$, respectively. 
\begin{lemma}
  \label{lem:weighted_radii}
  Let $A$ be a stationary Poisson point process with positive intensity in $\Rspace^2$, $\Rspace^1 \hookrightarrow \Rspace^2$ a fixed line, $A'$ the weighted points obtained by projecting $A$ orthogonally to $\Rspace^1$, $g \colon \Delaunay{A'} \to \Rspace$ the weighted radius function, and $\Omega = [0,1] \subseteq \Rspace^1$.
  Then the expected sum of weighted radii of the critical vertices, non-critical vertex-edge pairs, and critical edges with centers in $\Omega$ is at most a constant.
\end{lemma}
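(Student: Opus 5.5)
We reduce the statement to an integral-geometric computation in the spirit of Theorem~1 in \cite{ENR17}, now on a line, and use that the relevant weighted radii are Euclidean distances in $\Rspace^2$. Write $\lambda$ for the intensity of $A$. Each of the three kinds of cells is witnessed by an empty disk in $\Rspace^2$ whose center $c$ lies on $\Rspace^1$, and whose radius equals the weighted radius, up to the relation stated next.

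\textbf{Critical vertex.} This is a point $a \in A$ whose projection $c = a'$ lies in its own segment. Then $g(a') = \Edist{a}{c}$ is the distance of $a$ to the line, and the open disk of radius $g$ around $c$ contains no point of $A$. \textbf{Critical edge.} This is a pair $a,b$ such that $c$ is the shared endpoint of their segments. Then $\Edist{c}{a} = \Edist{c}{b} = g$, and again the open disk is empty. \textbf{Non-critical vertex-edge pair.} This has the same witness as the corresponding edge, so its radius equals that of the edge. Since every edge pairs with at most two vertices, it suffices to bound the critical vertices and all edges, where the edges are counted by the empty disk centered at the shared endpoint. In every case the center $c$ lies in $\Omega$.

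We apply the Slivnyak--Mecke formula to each kind of cell.

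\emph{Critical vertices.} Integrate over $a \in \Rspace^2$ whose projection lies in $\Omega$. Writing $a = (x,t)$, the contribution is
\begin{align}
  \lambda \int_0^1 \int_{\Rspace} |t|\, e^{-\lambda \pi t^2} \diff t \diff x
  &= \frac{1}{\pi} .
\end{align}
This is a constant, and in fact independent of $\lambda$, which matches the statement.

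\emph{Edges.} Parametrize pairs $\{a,b\}$ by the center $c = (x,0)$ with $x \in \Omega$, the radius $r$, and two angles $\theta_1, \theta_2$ giving $a$ and $b$ on the circle. The change of variables in the style of Blaschke--Petkantschin, restricted so that the circle center lies on a fixed line, has Jacobian $r\,|\sin\theta_1 - \sin\theta_2|$ or similar. The step I expect to be the main obstacle is computing this Jacobian correctly. A safe route is to compute it directly: fix $a$, then $b$ lies on the reflection of the circle, and $c$ moves on the line. Alternatively, we can bound the Jacobian from above by a constant times $r$, which suffices.

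Then the expected sum of radii is at most
\begin{align}
  C \lambda^2 \int_0^1 \int_0^\infty r \cdot r \cdot e^{-\lambda \pi r^2} \diff r \diff x
  &= C' \lambda^2 \cdot \lambda^{-3/2} .
\end{align}
This still grows like $\sqrt{\lambda}$. So the Jacobian must be recomputed: with $c$ on the line, the pair $(a,b)$ has four degrees of freedom, while $(x,r,\theta_1,\theta_2)$ also has four, and the Jacobian turns out to be of order $r$ times an angular factor. Plugging in gives $\lambda^2 \int r \cdot r^{?}\, e^{-\lambda\pi r^2} \diff r$, and the final power is forced by scaling. Replacing $\lambda$ by $1$ rescales $\Omega$ to length $\sqrt{\lambda}$ and radii by $1/\sqrt{\lambda}$. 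The number of edges per unit length of the line is of order the inverse typical spacing, and $\sqrt{\lambda} \cdot \lambda^{-1/2} = O(1)$.

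So the cleanest plan is to argue by scaling. First prove the bound for intensity $1$ by a Mecke computation showing finiteness: the density of centers per unit length is finite and radii have Gaussian tails, since the disk must be empty. Then rescale. Under the rescaling, the number of cells with centers in $\Omega$ multiplies by $\sqrt{\lambda}$, while each radius divides by $\sqrt{\lambda}$, giving a $\lambda$-independent constant.

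Finiteness at intensity $1$ reduces to checking that $\int\!\!\int\!\!\int J \cdot r\, e^{-\pi r^2}$ converges, which holds for any polynomially bounded Jacobian $J$.
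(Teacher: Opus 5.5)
Your final argument is correct, but it takes a different route from the paper. The paper computes nothing from scratch here. It takes Theorem~1 of \cite{EdNi19}, restated as Theorem~\ref{thm:weighted_counting}, which gives the expected number of intervals of type $\ell \leq m$ with weighted radius at most $r_0$ and center in $\Omega$ as $C_{\ell \leq m \leq k \leq d}\,\varrho^{k/d}\,\gamma(m+1-\frac{k}{d}, x)/\Gamma(m+1-\frac{k}{d})$. It then inserts the radius $r(t)$ into the gamma integral (Corollary~\ref{cor:weighted_measuring}) and observes that for $k=1$, $d=2$ the intensity enters only as $\varrho^{(k-1)/d} = \varrho^0$. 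The known constants $C_{\ell \leq m \leq 1 \leq 2}$ then give the exact values $\frac{1}{\pi}$, $\frac{2}{\pi}(\frac{4}{\pi}-1)$, $\frac{2}{\pi}$ of Table~\ref{tbl:weighted_bounds}.

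You instead apply Slivnyak--Mecke directly, which for critical vertices reproduces $\frac{1}{\pi}$ exactly. For edges, you correctly absorb the non-critical pairs, since such a pair is witnessed by the same empty disk centered at the shared endpoint. You then use stationarity along the line and the scaling $A \mapsto \sqrt{\lambda} A$ about a point of the line to show that the expectation does not depend on $\lambda$ at all, so only finiteness at unit intensity is needed. Your route is self-contained and makes transparent why the exponent $(k-1)/d$ vanishes: centers have density of order $\sqrt{\lambda}$ along the line, and radii are of order $1/\sqrt{\lambda}$. However, it gives no explicit constant for the edges and does not separate critical edges from non-critical pairs. The paper's route, at the price of importing the constants of \cite{EdNi19}, yields exact values, the full dependence on the threshold $r_0$, and second moments, for any $k < d$.

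You should, however, delete your claims about the Jacobian, because they are wrong. With $a = (x + r\cos\theta_1, r\sin\theta_1)$ and $b = (x + r\cos\theta_2, r\sin\theta_2)$, the Jacobian is $r^2\,|\cos\theta_1 - \cos\theta_2|$. This is forced by dimension count, since $\diff a \diff b$ has units length$^4$ while $\diff x \diff r$ has units length$^2$. Your angular factor is right if angles are measured from the normal, but the Jacobian is not bounded by a constant times $r$, and that error is exactly what produced your spurious $\sqrt{\lambda}$.

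With the correct Jacobian the direct computation closes. Using $\int_0^\infty r^3 e^{-\lambda\pi r^2}\diff r = \frac{1}{2\pi^2\lambda^2}$ and $\int_0^{2\pi}\int_0^{2\pi} |\cos\theta_1 - \cos\theta_2| \diff\theta_1 \diff\theta_2 = 32$, the expected sum of radii over all edges with center in $\Omega$ is $\frac{\lambda^2}{2} \cdot \frac{1}{2\pi^2\lambda^2} \cdot 32 = \frac{8}{\pi^2}$. This is exactly $\frac{2}{\pi} + \frac{2}{\pi}(\frac{4}{\pi}-1)$ from the paper's table.

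For finiteness at unit intensity you do not even need the Jacobian. Pairs with center in $\Omega$ and radius in $[j, j+1)$ have $a$ within distance $j+1$ of $\Omega$ and $b$ within $2(j+1)$ of $a$. Their measure is therefore $O((j+1)^4)$, against an integrand of at most $(j+1)e^{-\pi j^2}$, and the sum over $j$ converges.
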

We refer to Appendix~\ref{app:A} for a more general theorem that implies this lemma, and to Table~\ref{tbl:weighted_bounds} in Appendix~\ref{app:A.5} for the constants.
We will repeatedly make use of an extension of this lemma to the $(s+1)$-colored case, which we now formalize.
Let $\chi \colon A \to [s]$, write $A_i = \chi^{-1} (i)$, recall that $\Voronoi{A_0, A_1, \ldots, A_s}$ is the overlay of $s+1$ Voronoi tessellations---one for each color---and observe that its intersection with $\Rspace^1$ is the overlay of the corresponding $s+1$ weighted Voronoi tessellations, denoted $\Voronoi{A_0', A_1', \ldots, A_s'}$.
This overlay decomposes $\Rspace^1$ into intervals we call \emph{pieces}, each the common intersection of $s+1$ segments, one from each weighted Voronoi tessellation.
The relevant distance function is $f_{\max} \colon \Rspace^1 \to \Rspace$, which maps every $x \in \Rspace^1$ to the maximum---over the $s+1$ colors---of the Euclidean distance to the nearest point of this color.
Hence, $f_{\max} (x)$ is the radius at which the union of lunes reaches $x$.

\smallskip
We write $g \colon \Delaunay{A_0', A_1', \ldots, A_s'} \to \Rspace$ for the radius function on the dual of the overlay of weighted Voronoi tessellations, called \emph{weighted lunar Delaunay mosaic}, that maps each vertex and edge to the minimum radius at which the corresponding piece and shared endpoint of two pieces is touched by the union of lunes.
A \emph{critical vertex} of $g$ corresponds to a piece that is touched first at an interior point, and a \emph{critical edge} of $g$ corresponds to a shared endpoint of two pieces that are both touched first for radii that are strictly smaller than the radius of the shared endpoint.
Each endpoint shared by two pieces is also the shared endpoint of two segments in the same weighted Voronoi tessellation, and the corresponding edge in $\Delaunay{A_0', A_1', \ldots, A_s'}$ is critical iff the edge in the corresponding weighted Delaunay mosaic is critical.

\smallskip
In contrast to the critical edges, there are two cases for the critical vertices of $g$.
First, the lune that touches the corresponding piece does so at a point interior to a circular arc in the boundary of the lune.
Hence, there is a disk that touches $\Rspace^1$ at the same point, this point is in the interior of a segment, and the corresponding vertex in the weighted Delaunay mosaic has the same assigned radius.
Second, the lune that touches the piece does so at a tip in the boundary of the lune.
The existence of critical vertices of the second type is the reason why the following extension of Lemma~\ref{lem:weighted_radii} is not immediate.
\begin{lemma}
  \label{lem:chromatic_weighted_radii}
  Let $s$ be a non-negative integer, $A_0, A_1, \ldots, A_s$ stationary Poisson point processes in $\Rspace^2$, possibly with different intensities, $\Rspace^1 \hookrightarrow \Rspace^2$ a fixed line, $A_0', A_1', \ldots, A_s'$ the weighted points obtained by projecting each $A_i$ orthogonally to $\Rspace^1$, $\Omega = [0,1] \subseteq \Rspace^1$, and consider the radius function on the weighted lunar Delaunay mosaic dual to the overlay of the weighted Voronoi tessellations of the $A_i'$.
  Then the expected sum of radii of the critical vertices and edges whose centers belong to $\Omega$ is at most $O(1)$.
\end{lemma}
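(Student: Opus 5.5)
We split the critical cells of $g$ by type and bound each class by a result already stated in the paper.

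\emph{Critical edges.} As observed before the statement, each critical edge of $g$ is a critical edge of the weighted Delaunay mosaic of a single $A_i'$, with the same center and the same radius. Each $A_i$ is itself a stationary Poisson point process. Lemma~\ref{lem:weighted_radii}, applied to each of the $s+1$ processes, bounds the expected sum of radii of these edges with centers in $\Omega$ by a constant, so their total contribution is $O(1)$.

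\emph{Critical vertices of the first type.} Here the lune touches its piece at an interior point of a circular arc of its boundary. That arc belongs to a disk of some color $i$, and the touching point lies in the interior of the segment of $A_i'$. So the vertex of the weighted Delaunay mosaic of $A_i'$ is critical, has the same radius, and has its center at the same point. Distinct pieces touched first at distinct points give distinct such vertices, so there is no multiplicity issue. Summing Lemma~\ref{lem:weighted_radii} over the colors again gives $O(1)$.

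\emph{Critical vertices of the second type.} This is the main obstacle. The lune touches $\Rspace^1$ at a tip, which is an intersection point of two circles of radius $r$ centered at $a\in A_i$ and $b\in A_j$ with $i\neq j$. Then $x$ is equidistant, at distance $r=f_{\max}(x)$, from its nearest $A_i$ point and its nearest $A_j$ point. We compare with the $2$-colored problem for $A_i, A_j$ alone, or simply with the unweighted-line picture of the process $A_i\cup A_j$. At $x$, the disk of radius $r$ has $a$ and $b$ on its boundary and no point of $A_i\cup A_j$ in its interior, since $a$ and $b$ are nearest in their colors. Hence $x$ is a point where the line crosses the Voronoi edge of $a,b$ in $\Voronoi{A_i\cup A_j}$, so $x$ is the center of an edge in the weighted Delaunay mosaic of $(A_i\cup A_j)'$, with weighted radius $r$.

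Such edges are in general not critical, since the pair might be non-critical. I would therefore strengthen Lemma~\ref{lem:weighted_radii} to bound the expected sum of radii over \emph{all} vertices and edges with centers in $\Omega$, critical or not. Its appendix proof is an integral-geometric computation over empty disks tangent to or centered on the line, and it should give the same $O(1)$ order without the criticality restriction. If not, note that a non-critical edge is paired with a vertex of equal radius, so it is covered by the non-critical vertex-edge pair term of Lemma~\ref{lem:weighted_radii}. Either way, each tip vertex maps injectively to an edge center of the weighted mosaic of $(A_i\cup A_j)'$, indexed by the pair $\{i,j\}$. Summing over the $\binom{s+1}{2}$ pairs of colors, using that the superposition of independent Poisson processes is Poisson, gives $O(s^2)=O(1)$.

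Adding the three contributions proves the lemma.
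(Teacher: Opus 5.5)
Your proof is correct and follows essentially the same route as the paper. Both use the same split into critical edges, arc-touched vertices, and tip-touched vertices, send each tip vertex injectively to an edge of the weighted Delaunay mosaic of $(A_i \cup A_j)'$ with the same center and radius, and apply Lemma~\ref{lem:weighted_radii} summed over the $\binom{s+1}{2}$ pairs of colors. Your hedging about non-critical edges is unnecessary: since $x$ locally minimizes $\max\{\Edist{y}{a}, \Edist{y}{b}\}$ along the line, the projections of $a$ and $b$ lie on their own sides of $x$, so both incident vertices have weighted radius below $r$ and the edge is critical, as the paper asserts. Your fallback via the non-critical vertex-edge pair term is nonetheless valid.
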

\begin{proof}
  By Lemma~\ref{lem:weighted_radii}, the sum of radii assigned to the critical vertices and edges of the $s+1$ weighted Delaunay mosaics, whose center belongs to $\Omega$, is at most $O(1)$ in expectation, for each $0 \leq i \leq s$.
  This includes every critical edge of $g$ and every critical vertex whose corresponding piece is first touched by a circular arc in the boundary of the lune.
  It therefore suffices to consider the critical vertices of $g$ whose corresponding pieces in the weighted Voronoi tessellation are touched first by tips of lunes.

  \smallskip
  Consider such a piece and let $\aaa = (a_0, a_1, \ldots, a_s)$ be the colorful $s$-simplex such that the piece is the common intersection of the segments $\sigma_i$, for $0 \leq i \leq s$, in which $\sigma_i$ is the intersection of the domain of $a_i$ in $\Voronoi{A_i}$ with $\Rspace^1$.
  Let $x\in \Omega$ be the point of the piece touched first by the lune of $\aaa$, and assume $\Edist{x}{a_0} \geq \Edist{x}{a_1} \geq \ldots \geq \Edist{x}{a_s}$.
  Since $x$ is a tip, we have $\Edist{x}{a_0} = \Edist{x}{a_1}$.
  Then $a_0'$ and $a_1'$ are vertices of the (un-colored) weighted Delaunay mosaic of $A_0' \cup A_1'$, and the edge connecting them is a critical edge that shares the radius with the $\aaa$.
  Indeed, since $\aaa$ is critical, $x$ lies in the interiors of $\sigma_0$ and $\sigma_1$, and the circle centered at $x$ passing through $a_0$ and $a_1$ therefore encloses no points of $A_0$ and $A_1$. 
  By Lemma~\ref{lem:weighted_radii}, the sum of radii of all edges in $\Delaunay{A'_i \cup A'_j}$ is $O(1)$ for all choices of colors $0 \leq i < j \leq s$, which finishes the proof.
\end{proof}

\section{Main Theorem}
\label{sec:4}

In this section, we formally state the main result of this paper, which asserts the existence of an asymptotic constant for random lunar EMSTs.
The proof strategy is an adaption of a general probabilistic method developed among others by Steele~\cite{Ste97} and Yukich~\cite{Yuk98}.

\subsection{Statement and Proof Strategy}
\label{sec:4.1}

Fix a non-negative integer constant, $s$.
Given a finite set $A \subseteq \Rspace^2$ and a coloring $\chi \colon A \to [s]$, we write $A_i = \chi^{-1} (i)$, for $0 \leq i \leq s$, and $\Cost{}{A_0, A_1, \ldots, A_s}$ for the cost of the $s$-lunar EMST of $A$, as defined in \eqref{eqn:cost}.
The main result of this paper is the existence of a constant that describes the expected cost for random points with random coloring in the limit.
\begin{theorem}
  \label{thm:asymptotic_constant}
  Let $s \geq 0$ be an integer, $A$ a set of $n$ points sampled uniformly at random in $[0,1]^2$, and $\chi \colon A \to [s]$ a random $(s+1)$-coloring.
  Then there exists a constant, $c_s$, such that the expected cost of the $s$-lunar EMST of $\chi$ is $c_s \sqrt{n}$, in the limit, when $n$ goes to infinity; that is:
  \begin{align}
    \lim\nolimits_{n \to \infty} \Expect{A, \chi}{\Cost{}{A_0, A_1, \ldots, A_s}} / \sqrt{n} &= c_s .
    \label{eqn:main_theorem}
  \end{align}
\end{theorem}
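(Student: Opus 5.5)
We follow the Steele--Yukich approach for Euclidean functionals and work with a Poissonized model. Let $P_n$ be a Poisson point process of intensity $n$ in $[0,1]^2$ in which each point receives an independent uniform color. The colors then form $s+1$ independent Poisson processes $A_0,\ldots,A_s$, each of intensity $n/(s+1)$. Set $\varphi(n) = \Expect{}{\Cost{}{A_0,\ldots,A_s}}$ for this model. By scaling, the cost for intensity $n$ on $[0,1]^2$ equals $1/t$ times the cost for intensity $n/t^2$ on $[0,t]^2$. So it suffices to show that $\varphi(n)/\sqrt{n}$ converges, and then to de-Poissonize. The plan has three steps: approximate subadditivity, approximate superadditivity (or a smoothness bound), and de-Poissonization.

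\textbf{Subadditivity.} Divide $[0,1]^2$ into $m^2$ subsquares $Q_j$ of side $1/m$. On each $Q_j$, build the topological lunar EMST of the points inside. To glue the $m^2$ trees into a spanning structure for the union of lunes, we need an upper bound on $\Cost{}{\cdot}$ for the whole set in terms of the sum of the local costs plus a boundary term. By \eqref{eqn:cost}, the global cost equals $2\sum d_i - 2\sum b_i$, and this is the minimum over spanning trees of the graph on critical nodes. Any spanning tree of the $1$-skeleton of $\Delaunay{A_0,\ldots,A_s}$ therefore gives an upper bound once we subtract the node costs.

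The local trees differ from the global mosaic only through cells whose smallest enclosing stacks cross $\partial Q_j$. By Lemma~\ref{lem:critical_and_non-critical_cells} and the reduction in the proof of Lemma~\ref{lem:counting_critical_cells}, every critical node or arc corresponds to a critical edge or triangle of an ordinary Delaunay mosaic of a union of at most two or three color classes, with an empty (or nearly empty) circle. Lemma~\ref{lem:crossing_the_boundary} then bounds the expected total radius of those circles crossing the boundary of a square by a constant, after rescaling by $\sqrt{n}/m$ per square. The local cost is thus perturbed by a boundary term. Joining adjacent squares uses arcs near the common sides, and these are controlled along each side line by Lemma~\ref{lem:chromatic_weighted_radii}. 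That lemma bounds the sum of weighted radii of critical vertices and edges of the $1$-dimensional weighted lunar mosaic along each grid line by $O(1)$ per unit length, at the appropriate scale. This gives
\[
\varphi(n) \leq m^2 \varphi(n/m^2)/m + O(m)\,C,
\]
where the $1/m$ is the scaling factor and the $O(m)$ term counts the total grid-line length in rescaled units.

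\textbf{Superadditivity and convergence.} For the reverse inequality we use a boundary-rooted variant, in the spirit of $\Costrooted{}{\cdot}$: let the boundary of each square act as a free component, so that the cost becomes superadditive. Its difference from $\Cost{}{\cdot}$ is again bounded by the boundary terms of Lemmas~\ref{lem:crossing_the_boundary} and~\ref{lem:chromatic_weighted_radii}, namely $O(1)$ per unit of boundary at unit scale. Together the two inequalities give $\varphi(n)/\sqrt{n} = \varphi(n/m^2)/\sqrt{n/m^2} + O(m/\sqrt{n})$ in both directions. A standard Fekete-type argument, which also needs monotonicity or continuity in $n$ (provided by the Lipschitz-type bound on adding a Poisson increment, from Lemma~\ref{lem:radii_and_squared_radii}), yields the limit $c_s$. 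Lemma~\ref{lem:radii_and_squared_radii} also gives the a priori bound $\varphi(n) = O(\sqrt{n})$, since the cost is at most twice the sum of radii of the critical arcs.

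\textbf{De-Poissonization.} To pass from the Poisson model to exactly $n$ uniform points with a uniform random coloring, condition on the number of Poisson points $N$ and compare $N$ with $n$. Adding or removing $k$ points changes the expected cost by at most the expected radii of the cells they create or destroy. These are controlled by Lemma~\ref{lem:radii_and_squared_radii}, which uses the ``at most one point inside'' version, giving an expected change of $O(k/\sqrt{n})$. Since $|N-n| = O(\sqrt{n})$ in expectation, the error is $o(\sqrt{n})$.

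\textbf{Main obstacle.} The hardest step is the gluing and boundary comparison. The lunar cost is not monotone and is not a sum of edge lengths: nodes carry negative cost and lunes can cross square boundaries through tips. The first thing to try is to express both the boundary-rooted and the free cost differences as sums over critical cells whose stacks meet the boundary. These sums can then be bounded by Lemma~\ref{lem:crossing_the_boundary} for the $2$-dimensional cells and by Lemma~\ref{lem:chromatic_weighted_radii} for the cells where the union of lunes first touches a side line.
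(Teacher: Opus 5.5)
Your architecture is the Steele--Yukich scheme the paper follows: Poissonize, prove two-sided approximate additivity with an $O(m)$ boundary error, take a Fekete-type limit, then de-Poissonize with an add-one bound. Your a priori $O(\sqrt{n})$ bound and your de-Poissonization match Lemma~\ref{lem:combinatorial_smoothness}.

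\textbf{The gap: sub-additivity.} Your last paragraph itself leaves this step open, and as sketched it does not go through. The lunar EMST of $A \cap Q_j$ lives on the mosaic dual to the overlay of the Voronoi tessellations of the sets $A_i \cap Q_j$. That overlay is \emph{not} a piece of $\Voronoi{A_0, \ldots, A_s}$:
\begin{itemize}
  \item With fewer competitors, the local domains are larger near $\partial Q_j$. So a local node or arc may have no global counterpart, or may have a smaller radius.
  \item The global mosaic has nodes for colorful simplices mixing points of different squares, and these occur in no local tree.
\end{itemize}
Hence the union of the local trees plus ``joining arcs'' is not a spanning tree of $\Delaunay{A_0, \ldots, A_s}$. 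Because node costs enter with a negative sign, the cost is not monotone. So listing the critical cells whose stacks cross $\partial Q_j$ does not by itself give a tree of controlled cost.

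The paper remarks that proving Condition~IV directly for $\Cost{s}{A;Q}$ is difficult. The missing idea is the \emph{restricted} lunar EMST of a process on all of $\Rspace^2$, with the overlay clipped to the square. The restricted mosaics of two adjacent squares are cut from the same global overlay, so gluing only identifies the two clipped copies of each straddling domain. The first identification costs at most $2\sqrt{2}$ times the side length. Every further one closes a cycle, and deleting an arc incident to the later-born copy makes its net cost non-positive (Lemma~\ref{lem:sub-additivity}). This gives a \emph{deterministic} error $C_s^{\rm sub} M$.

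Only afterwards is $\Cost{s}{A;Q}$ compared with $\Costrestr{s}{A;Q}$ in expectation (Lemma~\ref{lem:mean_closeness_one}). That is where Lemma~\ref{lem:crossing_the_boundary} is used. Lemma~\ref{lem:chromatic_weighted_radii} serves for restricted cells whose stacks are centred outside $Q$ and for anchors, i.e.\ places where lunes clipped to a square first reach its boundary. It does not serve for joining plain local trees, where nothing is clipped at a grid line.

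\textbf{The super-additive side.} It has the same issue plus one more.
\begin{itemize}
  \item Splitting the rooted tree (Lemma~\ref{lem:super-additivity}) again needs the restricted setting. There each straddling domain splits into two clipped copies, and one of them receives an anchor no more expensive than the later-born copy.
  \item The rooted and un-rooted costs do not differ merely by radii of boundary cells. Deleting the anchors disconnects the tree, and the pieces must be reconnected.
\end{itemize}
The paper reconnects them with peripheral arcs between consecutively anchored nodes along $\partial Q$. Each such arc costs at most $2r + 2q + \Edist{x}{y}$ by Lemma~\ref{lem:overlapping_lunes}. So the reconnection costs at most twice the anchor cost, which is $O(1)$ by Lemma~\ref{lem:cost_of_anchors}, plus the perimeter of $Q$ (Lemma~\ref{lem:mean_closeness_two}). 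Your outline asserts an $O(1)$ difference without such a mechanism.

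Once these two comparisons are in place, your Fekete-type argument and de-Poissonization go through. They are what Yukich's Theorems~5.1 and~5.2, which the paper invokes, package.
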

The proof strategy is not new and applies to general \emph{Euclidean functionals}, which are maps from finite subsets of Euclidean space to the non-negative real numbers that are normalized, homogeneous, and translation invariant; see Conditions~I, II, III below.
We begin by formulating the cost of a lunar EMST so it fits this framework.
For a finite set $A\subseteq\Rspace^2$, we write $\Cost{s}{A}$ for the average cost over all possible $(s+1)$-colorings of $A$.
Note that the averaging over all $(s+1)$-colorings is implicit in this notation, so we can shorten \eqref{eqn:main_theorem} by substituting $\Expect{A}{\Cost{s}{A}}$ for $\Expect{A, \chi}{\Cost{}{A_0, A_1, \ldots, A_s}}$.
In the probabilistic setting, we consider locally finite point sets $A \subseteq \Rspace^2$---often sampled from a stationary Poisson point process with density $n$---choose the points within some compact region---usually $Q = [0,1]^2$---and write $\Cost{}{A_0, A_1, \ldots, A_s; Q}$ for the cost of the lunar EMST of $A_i \cap Q$, for $0 \leq i \leq s$, and $\Cost{s}{A; Q}$ for the average cost over all $(s+1)$-colorings of $A \cap Q$.

\smallskip
The proof consists of a sequence of technical steps, and it pays off to preview these steps and discuss how they relate to each other as well as how they contribute to the final goal.
To begin, we note that the cost function satisfies the three conditions of a Euclidean functional.
As usual, multiplication with a scalar means scaling, and addition of a vector means translating:
\medskip \begin{enumerate}[II]
  \item[I] \textbf{normalization:} $\Cost{s}{\emptyset; Q} = 0$;
  \smallskip
  \item[II] \textbf{homogeneity:} $\Cost{s}{\lambda A; \lambda Q} = \lambda \cdot \Cost{s}{A; Q}$ for every real $\lambda > 0$;
  \smallskip
  \item[III] \textbf{translation invariance:} $\Cost{s}{A+y; Q+y} = \Cost{s}{A; Q}$ for every $y \in \Rspace^2$.
\end{enumerate}
\medskip
It should be clear that Conditions~I, II, and III are indeed satisfied by the average cost of the lunar EMST over all $(s+1)$-colorings of the points.
The remaining conditions require special properties of the cost function, including its sub-additivity.
For partially technical reasons, the proofs of these properties work on slightly different versions of the tree, which we now define together with their costs.

\smallskip
Let $A \subseteq \Rspace^2$ be locally finite, $\chi \colon A \to [s]$ an $(s+1)$-coloring, and recall that $\Voronoi{A_0, A_1, \ldots, A_s}$ is the overlay of the $s+1$ Voronoi tessellations.
Fixing $Q$ to a compact subset of $\Rspace^2$ with connected interior---usually a square or rectangle---we \emph{restrict} $\Voronoi{A_0,A_1,\ldots,A_s}$ by intersecting its cells with the interior, $\interior{Q}$.
Correspondingly, the \emph{lunar Delaunay mosaic} of $\chi$ \emph{restricted} to $Q$ is the dual of this restricted tessellation; that is: the collection of cells in $\Delaunay{A_0, A_1, \ldots,
A_s}$ whose corresponding cells in $\Voronoi{A_0,A_1,\ldots,A_s}$ have a non-empty intersection with $\interior{Q}$.
The matching radius function maps each Delaunay cell to the minmax distance of any point in the corresponding cell in the overlay restricted to $\interior{Q}$; that is: $\inf_x f_{\max} (x)$, in which $f_{\max} (x) = \max_{0 \leq i \leq s} \min_{a \in A_i} \Edist{x}{a}$.
The sublevel sets of this radius function capture the topology of the union of all the lunes of $A$ restricted to $\interior{Q}$.
\begin{definition}
  \label{dfn:restricted_lunar_EMST}
  Let $A \subseteq \Rspace^2$ be locally finite, $\chi \colon A \to [s]$ an $(s+1)$-coloring, and $Q \subseteq \Rspace^2$ compact with connected interior.
  The \emph{lunar EMST} of $\chi$ \emph{restricted} to $Q$ is the spanning tree within the lunar Delaunay mosaic restricted to $Q$ that minimizes the cost, denoted $\Costrestr{}{A_0, A_1, \dots, A_s; Q}$.
\end{definition}
See the middle panel of Figure~\ref{fig:threeTrees} for an example of a restricted lunar EMST.
To introduce the second variant, add the complement of $\interior{Q}$ as an extra domain to the restricted overlay of Voronoi tessellations, and define the \emph{rooted lunar Delaunay mosaic} of $\chi$ as its dual.
In other words, we add a single node to the restricted lunar Delaunay mosaic, called the \emph{root}, assign to it the radius $0$, and connect it with the duals of all cells in $\Voronoi{A_0, A_1, \ldots, A_s}$ that have a non-empty intersection with the boundary of $Q$.
We call each such arc an \emph{anchor}, and set its lunar radius to the radius at which the lune of the domain first touches the boundary of $Q$---and its cost, as always, to twice the radius.
\begin{figure}[hbt]
    \centering
    \includegraphics[width=\linewidth]{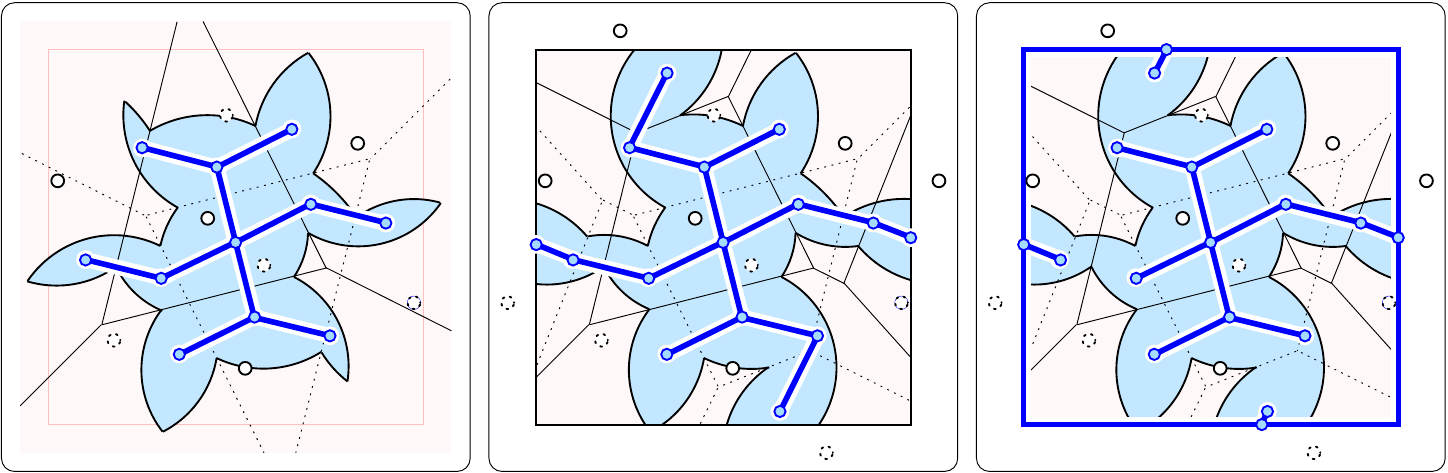}
    \caption{\footnotesize \emph{Left:} the lunar EMST of points in $[0,1]^2$; the lunes may extend outside $[0,1]^2$.
    \emph{Middle:} the lunar EMST of points in $\Rspace^2$ \emph{restricted} to $[0,1]^2$; there may be points outside $[0,1]^2$ that contribute to domains and lunes partially or entirely inside $[0,1]^2$.
    \emph{Right:} the \emph{rooted} lunar EMST of points in $\Rspace^2$ restricted to $[0,1]^2$; the boundary of $[0,1]^2$ acts as an extra vertex of the tree.}
    \label{fig:threeTrees}
\end{figure}
\begin{definition}
    \label{dfn:rooted_lunar_EMST}
    Let $A \subseteq \Rspace^2$ be locally finite, $\chi \colon A \to [s]$ an $(s+1)$-coloring, and $Q \subseteq \Rspace^2$ compact with connected interior.
    The \emph{rooted lunar EMST} of $\chi$ is the spanning tree within the rooted lunar Delaunay mosaic---in which the root itself may or may not be spanned---that minimizes the cost, and we denote this minimum cost by $\Costrooted{}{A_0, A_1, \dots, A_s; Q}$.%
    \footnote{
        The definition of the rooted lunar EMST is inspired by the notion of relative homology, in the sense that we do not insist that the root is part of the spanning tree.
        Nevertheless, if the spanning tree contains an anchor connecting to the root, the cost of the anchor is part of the overall cost.
        Since the cost of the root is $0$, counting or not counting it makes no difference to the overall cost, but the possibility of not including the root implies that the restricted lunar EMST is also a rooted spanning tree, albeit not necessarily the one with minimum cost.
    }
\end{definition}
See the right panel of Figure~\ref{fig:threeTrees} for an example of a rooted lunar EMST.
As before, we work with the average cost over all $(s+1)$-colorings of $A$, but note that it suffices to average over all $(s+1)$-colorings of the vertices in the lunar Delaunay mosaic restricted to $Q$.
We therefore write
\begin{align}
  \Cost{s}{A; Q} &= \Expect{\chi}{\Cost{}{A_0, A_1, \ldots, A_s; Q}} ; \\
  \Costrestr{s}{A; Q} &= \Expect{\chi}{\Costrestr{}{A_0, A_1, \ldots, A_s; Q}} ; \\
  \Costrooted{s}{A; Q} &= \Expect{\chi}{\Costrooted{}{A_0, A_1, \ldots, A_s; Q}} . 
\end{align}
The average costs of the restricted and rooted lunar EMSTs again fulfill Conditions I, II, III and are therefore Euclidean functionals.
To formulate the remaining five conditions needed to prove the existence of the asymptotic constant, we let $M$ be a positive integer, and $Q_1, Q_2, \ldots, Q_{M^2}$ the copies of $[0, \frac{1}{M}]^2$ that decompose $Q=[0,1]^2$ into $M^2$ smaller squares.
The conditions require the existence of constants $C_s^{\rm sub}$, $C_s^{\rm one}$, $C_s^{\rm sup}$, $C_s^{\rm two}$, $C_s^{\rm smth}$ such that
\medskip \begin{enumerate}[II]
  \item[IV] \textbf{sub-additivity:}
  $\Costrestr{s}{A; Q} \leq \sum\nolimits_{1 \leq i \leq M^2} \Costrestr{s}{A; Q_i} + C_s^{\rm sub} M$, for all $M \geq 1$ and all locally finite $A \subseteq \Rspace^2$;
  \smallskip
  \item[V] \textbf{mean closeness one:} $\Expect{A}{| \Cost{s}{A; Q} - \Costrestr{s}{A; Q} |} \leq C_s^{\rm one}$, in which the expectation is over a stationary Poisson point process in $\Rspace^2$;
  \smallskip
  \item[VI] \textbf{super-additivity:}
  $\Costrooted{s}{A; Q} \geq \sum\nolimits_{1 \leq i \leq M^2} \Costrooted{s}{A; Q_i} - C_s^{\rm sup} M$, for all $M \geq 1$ and all locally finite $A \subseteq \Rspace^2$;
  \smallskip
  \item[VII] \textbf{mean closeness two:} $\Expect{A}{| \Costrestr{s}{A; Q} - \Costrooted{s}{A; Q} |} \leq C_s^{\rm two}$, in which the expectation is over a stationary Poisson point process in $\Rspace^2$;
  \smallskip
  \item[VIII] \textbf{add-one bound:} $\Expect{A,a}{| \Cost{s}{A; Q} - \Cost{s}{A'; Q} |} \leq {C_s^{\rm smth}}/{\sqrt{n}}$, in which $n = \card{A}$, $A' = A \setminus \{a\}$, and $a \in A$ is randomly selected among
  uniformly sampled points $A \subseteq Q$.
\end{enumerate} \medskip
The difficulty in proving Condition~IV directly for $\Cost{s}{A; Q}$ motivates the introduction of the lunar EMST restricted to $Q$.
We need the cost of this EMST to differ by at most a constant from that of the unrestricted EMST, in expectation, which is Condition~V.
A further complication arises because the cost function is not monotonic; that is: it is possible that adding a point leads to a decrease of the cost.
It shares this lack of monotonicity with the standard Euclidean minimum spanning tree, for which it motivated the study of Steiner minimal trees, as initiated by Gilbert and Pollack~\cite{GiPo68}.
This is the reason we introduce the rooted lunar EMST, in which the boundary of $[0,1]^2$ sometimes acts as a node in the tree; see the right panel in Figure~\ref{fig:threeTrees} for an illustration.
We require that it has the opposite tendency upon subdivision: it is super-additive as formalized in Condition~VI.
Since the rooted and un-rooted lunar trees are different, we again need that the expected costs are not very different, which is Condition~VII.

\smallskip
The proofs of IV and VI do not depend on randomness, neither of the points nor the coloring, while the proofs of V and VII need the randomness of the points and of the coloring.
However, we establish V and VII for a stationary Poisson point process as opposed to points sampled from the uniform distribution, which is the reason we need Condition~VIII---sometimes referred to as the \emph{combinatorial smoothness} of the cost function---to make up for the difference to points drawn from a uniform distribution.
Similar to V and VII, the proof of VIII needs the randomness of the points as well as the coloring.

\smallskip
According to Yukich's book \cite[Theorem~5.2]{Yuk98}, Conditions~I to VIII imply the existence of the asymptotic constant, $c_s$, for the expected value of the Euclidean functional $\Cost{s}{A}$ of $n$ points sampled uniformly at random in $[0,1]^2$.
In other words, they suffice to prove the main result of this paper, which is Theorem~\ref{thm:asymptotic_constant}.
We describe the establishment of Conditions~IV to VIII in the subsequent subsections, one for each condition.

\begin{remark}
  The eight conditions and their proofs generalize to the setting in which the colors have different probabilities.
  Let $\ppp = (p_0, p_1, \ldots, p_s)$ be a vector of probabilities, with $\sum_i p_i = 1$, and suppose each point is colored independently and receives color $i$ with probability $p_i$.
  If $A$ is a stationary Poisson point process with intensity $n$, then $A_i = \chi^{-1} (i)$ is a stationary Poisson point process with intensity $p_i n$, for each $0 \leq i \leq s$, and if $A$ is sampled randomly from a uniform distribution, then so is $A_i$, for each $0 \leq i \leq s$.
  In the following, whenever we say that $\chi$ is a \emph{random coloring} of a point process $A$, we mean it in this sense, for any fixed $\ppp$.
  Hence, we can define the Euclidean functionals $\Cost{s}{A; Q}$, $\Costrestr{s}{A; Q}$, $\Costrooted{s}{A; Q}$ as averages over all colorings weighted according to the probabilities $\ppp$, and Theorem~\ref{thm:asymptotic_constant} generalizes to a random set of points colored randomly according to $\ppp$:
  \end{remark}
  \begin{corollary}
    \label{cor:asymptotic_constant}
    Let $s \geq 0$ be an integer, $\ppp = (p_0, p_1, \ldots, p_s)$ a vector of probabilities that add to $1$, $A$ a set of $n$ points sampled uniformly at random in $[0,1]^2$, and $\chi \colon A \to [s]$ a random $(s+1)$-coloring in which each point receives color $i$ with probability $p_i$, for $0 \leq i \leq s$.
    Then there exists a constant, $c_s^\ppp$, such that the expected cost of the lunar EMST of $\chi$ is $c_s^\ppp \sqrt{n}$, in the limit, when $n$ goes to infinity.
  \end{corollary}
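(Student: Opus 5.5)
The corollary follows by running the proof of Theorem~\ref{thm:asymptotic_constant} again with the uniform coloring replaced by the $\ppp$-weighted coloring. By the Remark, all three functionals $\Cost{s}{A; Q}$, $\Costrestr{s}{A; Q}$, $\Costrooted{s}{A; Q}$ are now averages over colorings weighted by $\ppp$. Conditions I, II, III are immediate, since scaling and translating the points does not interact with the coloring at all.

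I would then invoke \cite[Theorem~5.2]{Yuk98} once Conditions IV to VIII are re-established for these weighted functionals. I would treat the two kinds of conditions separately.

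\textbf{Deterministic conditions.} Sub-additivity (IV) and super-additivity (VI) are proved for every fixed locally finite $A$ and every fixed coloring $\chi$. The inequalities for $\Costrestr{}{A_0, \ldots, A_s; Q}$ and $\Costrooted{}{A_0, \ldots, A_s; Q}$ carry deterministic constants $C_s^{\rm sub} M$ and $C_s^{\rm sup} M$ that do not depend on $\chi$. Taking any convex combination over colorings, in particular the $\ppp$-weighted one, therefore preserves both inequalities with the same constants.

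\textbf{Probabilistic conditions.} Conditions V, VII and VIII rely on randomness. I would check that the randomness they use has the same structure under $\ppp$. If $A$ is Poisson with intensity $n$ and is colored independently by $\ppp$, then by the thinning (coloring) theorem the $A_i$ are independent stationary Poisson processes with intensities $p_i n$. The expected-value estimates feeding these conditions are Lemmas~\ref{lem:radii_and_squared_radii}, \ref{lem:crossing_the_boundary}, \ref{lem:weighted_radii} and \ref{lem:chromatic_weighted_radii}, and they are applied to the unions $A_i \cup A_j$ or to single $A_i$. Lemma~\ref{lem:chromatic_weighted_radii} is already stated for possibly different intensities. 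Lemmas~\ref{lem:radii_and_squared_radii} and \ref{lem:crossing_the_boundary} apply to $A_i \cup A_j$, which is Poisson with intensity $(p_i+p_j)n$.

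\textbf{Main obstacle.} The delicate point is that constants like $\sqrt{(p_i+p_j)n}$ or $1/p_i$ must stay bounded. This needs every $p_i > 0$, and I would handle a vanishing color as a degenerate case. If some $p_i = 0$, then almost surely no point has color $i$, so there are no colorful simplices and the cost is $0$, giving $c_s^\ppp = 0$. For fixed $\ppp$ with all $p_i > 0$, every constant depends only on $s$ and $\min_i p_i$, so it is finite.

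For VIII, removing a random point $a$ is equivalent to removing a random point of a random color, and that color is distributed according to $\ppp$. The same local-change argument as in the uniform case then bounds the expected change by $C/\sqrt{n}$. With I to VIII in place, Yukich's theorem yields $c_s^\ppp$.
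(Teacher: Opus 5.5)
Your proposal is correct and follows the paper's own route. The Remark before the corollary reruns Conditions I--VIII with $\ppp$-weighted averages over colorings, using that IV and VI are deterministic and that the color classes of a $\ppp$-colored Poisson (resp.\ uniform) sample are again Poisson (resp.\ uniform) with intensities $p_i n$. That is why Lemma~\ref{lem:chromatic_weighted_radii} is stated for possibly different intensities, and the paper then invokes \cite[Theorem~5.2]{Yuk98}.

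Your ``main obstacle'' is not a real one. The bounds in Lemmas~\ref{lem:crossing_the_boundary} and \ref{lem:chromatic_weighted_radii} do not depend on the intensity. The bounds in Lemma~\ref{lem:radii_and_squared_radii} grow like the square root of the intensity, so they are at most a constant times $\sqrt{n}$. Hence no factor $1/p_i$ ever arises, and the case $p_i = 0$ needs no separate treatment, although your handling of it is fine.
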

  See Figure~\ref{fig:two-colors} for an example, which shows the estimated asymptotic constants for two colors and probabilities $p_0 = \lambda$ and $p_1 = 1 - \lambda$.
  In the proofs of Conditions~IV to VIII, $\ppp$ is fixed and dropped from the notation.
    
\begin{figure}[hbt]
    \centering
    \includegraphics[width=.535\linewidth]{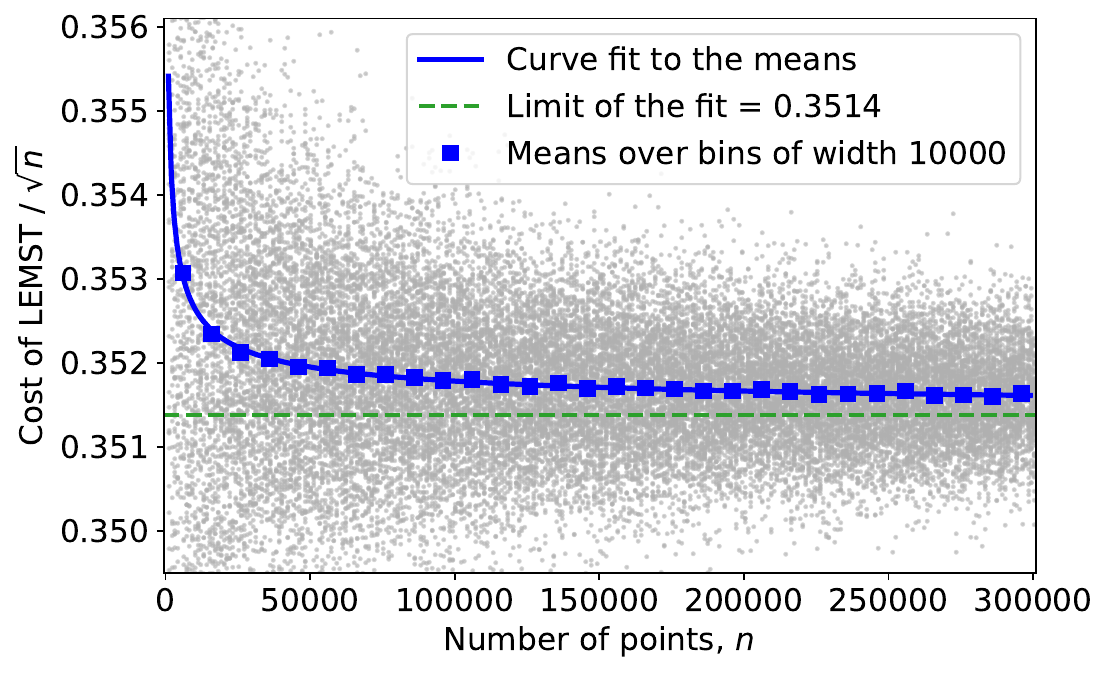}%
    \includegraphics[width=.46\linewidth]{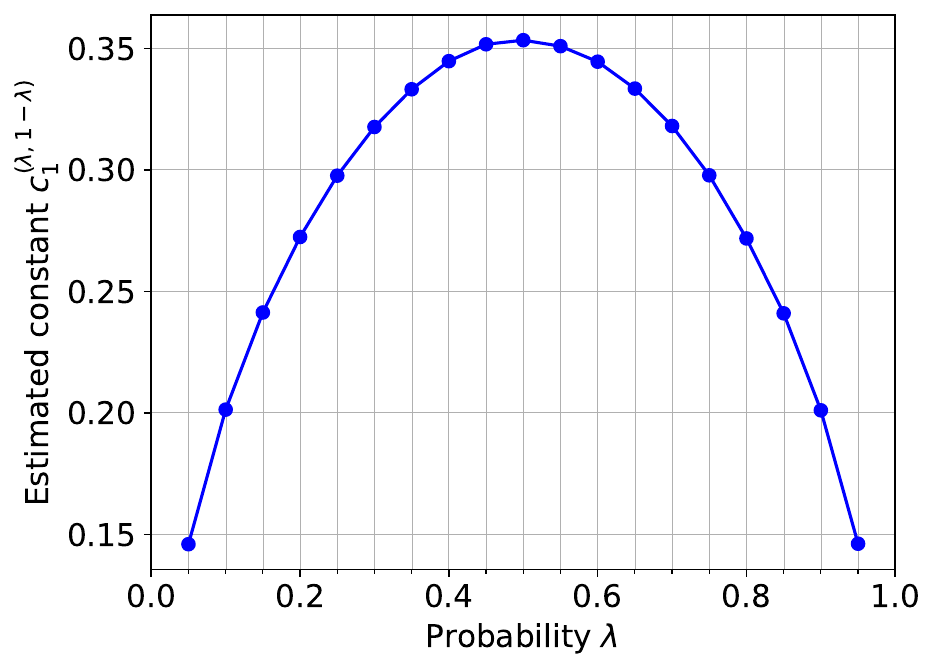}
    \caption{\footnotesize
    Estimates of the asymptotic constant of the lunar EMST cost for points sampled from $[0,1]^2$. 
    \emph{Left:} the estimates of the asymptotic constant, $c_s$, with $s=1$. 
    For small values of $n$, the boundary effects skew the estimates higher than the limit of a fitted curve, which estimates $c_s$ around $0.351\ldots$, consistent with the aforementioned estimates in \cite{DERS25}.
    \emph{Right:} the estimated asymptotic constants, $c_s^\ppp$, with $s = 1$ and probability vectors $\ppp = (\lambda, 1- \lambda)$ for $\lambda \in [0,1]$. 
    The computations are performed with $n=2000$ points sampled uniformly at random from $[0,1]^2$.  
    }
    \label{fig:two-colors}
\end{figure}

\subsection{The Cost of the Restricted Lunar EMST is Sub-additive}
\label{sec:4.2}

We need a technical lemma to prove that Condition~IV is satisfied.
The setting consists an $(s+1)$-coloring of a locally finite set $A \subseteq \Rspace^2$ and the squares $Q' = [0,1]^2$ and $Q'' = [1,2] \times [0,1]$.
We compare the costs of lunar EMSTs restricted to $Q'$, $Q''$, and $Q = Q' \cup Q''$.
\begin{lemma}\label{lem:sub-additivity}
  Let $A \subseteq \Rspace^2$ be locally finite, $s \geq 0$ an integer, and $\chi \colon A \to [s]$ an \Skip{random} $(s+1)$-coloring.
  The costs of the lunar EMSTs of $A$ restricted to the two unit squares and the rectangle satisfy
  \begin{align}
    \Costrestr{s}{A_0,A_1,\ldots,A_s; Q} &\leq \Costrestr{s}{A_0,A_1,\ldots,A_s; Q'} + \Costrestr{s}{A_0,A_1,\ldots,A_s; Q''} + 2 \sqrt{2} .
  \end{align}
\end{lemma}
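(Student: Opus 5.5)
We prove the inequality for a fixed coloring $\chi$; the bound for the average cost over colorings then follows by averaging, since the additive constant $2\sqrt{2}$ does not depend on $\chi$.

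\emph{Step 1: the cost as a persistence sum.} Let $T'$ and $T''$ be lunar EMSTs restricted to $Q'$ and $Q''$. Recall from \eqref{eqn:cost} and Kruskal's algorithm that the restricted cost equals twice the sum of death radii minus twice the sum of birth radii, with the minimal birth dropped. Equivalently, it is the cost of a spanning tree in the restricted lunar Delaunay mosaic: the sum of arc costs minus the node costs, plus the minimum node cost. Since Kruskal's algorithm minimizes, it suffices to exhibit some spanning structure of the mosaic restricted to $Q$ whose cost, counted this way, is at most the right-hand side.

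\emph{Step 2: relate the three mosaics.} Every cell of $\Voronoi{A_0,\ldots,A_s}$ meeting $\interior{Q'}$ also meets $\interior{Q}$, and restricting to the larger open set can only lower its radius (an infimum over a larger set). A domain meeting both $\interior{Q'}$ and $\interior{Q''}$ appears as a node in both smaller mosaics but as a single node in the big one. I would work at the level of the sublevel-set filtrations. Write $U_r$, $U'_r$, $U''_r$ for the union of lunes intersected with $\interior{Q}$, $\interior{Q'}$, $\interior{Q''}$. Then $U'_r \cup U''_r \subseteq U_r$, and $U_r$ differs only in points of the common segment $\{1\}\times(0,1)$.

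The degree-$0$ cost is $2\int_0^\infty (\beta_0(r)-1)\,dr$, where $\beta_0$ is the number of components. So it suffices to bound
\[
\beta_0(U_r) - 1 \le (\beta_0(U'_r)-1)+(\beta_0(U''_r)-1) + 1
\]
for $r$ below some threshold $R$, and to show $\beta_0(U_r)=1$ for $r\ge R$.

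The first inequality holds because every component of $U_r$ contains a component of $U'_r$ or of $U''_r$, since $U_r\setminus(U'_r\cup U''_r)$ lies on a segment and is attached to the open sets on either side. This gives $\beta_0(U_r)\le\beta_0(U'_r)+\beta_0(U''_r)$.

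\emph{Step 3: the additive term.} The extra $+1$ is integrated only up to the radius $R$ at which $U_r$ is certainly connected. This is the main obstacle. A priori $R$ could be large if $A$ is sparse, since $A$ is only assumed locally finite. If $A$ is sparse, however, the smaller pieces are also expensive. The plan is to integrate only up to the radius $r^*$ at which both $U'_r$ and $U''_r$ are connected, namely $\max$ of the largest death radii in $T'$ and $T''$. I then need to show that beyond this radius $U'_r$ and $U''_r$ merge within an additional radius of at most $\sqrt{2}$.

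This holds because $f_{\max}$ is $1$-Lipschitz. Once $U'_r$ is nonempty it contains some point $x$ with $f_{\max}(x)\le r$. Once $r\ge r^*$, the sets $U'_r$ and $U''_r$ are nonempty, and any point of $Q$ lies within distance $\sqrt{2}$ of a point of $Q'$ (the diameter of a unit square). Hence at radius $r^*+\sqrt{2}$ the lunes cover a path joining the two components, through the shared side. Some care is needed here: the Lipschitz bound alone gives the covering of a straight segment from $x\in U'_r$ into $Q''$ only with radius growing along the segment. I expect a bound of $\sqrt2$ extra radius, and hence $2\sqrt2$ extra cost, obtained by choosing the segment of length at most $\sqrt2$ between points in $U'_{r^*}$ and $U''_{r^*}$ that stays inside $Q$.

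Since $\beta_0(U_r)\le\beta_0(U'_r)+\beta_0(U''_r)$ for $r<r^*$ (the $+1$ term is absorbed in the count of births there), the cost difference is concentrated on $[r^*, r^*+\sqrt2]$. Twice this interval length yields the constant $2\sqrt2$.
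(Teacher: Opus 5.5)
\textbf{There is a genuine gap.} Replacing the paper's tree surgery with an integral of Betti numbers of the restricted sublevel sets $U_r, U'_r, U''_r$ is a viable route. However, your Steps~2--3 rest on a false claim: that for $r<r^*$ the extra $+1$ is ``absorbed'', so the cost difference is concentrated on $[r^*,r^*+\sqrt{2}]$. Nothing forces a component of $U'_r$ to join one of $U''_r$ before $r^*$. From $r_1=\max\{b_0',b_0''\}$ on, when both restricted unions are non-empty, and until the union grows across the common side, $U_r$ generically has as many components as $U'_r$ and $U''_r$ together. Then $\beta_0(U_r)-1$ exceeds $(\beta_0(U'_r)-1)+(\beta_0(U''_r)-1)$ by one, and $r_1$ can lie far below $r^*$.

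A concrete example: take $s=0$ and $A=\{(0.1,0.2),(0.1,0.8),(1.9,0.5)\}$. Then $r_1=0$ and $r^*\approx 0.912$. The excess is exactly $1$ for $0<r<0.9$, equals $-2$ on $(0.9,r^*)$, and vanishes for $r>r^*$. So the whole positive cost difference (about $1.75$) lives below $r^*$, and no analysis of radii after $r^*$ can bound it.

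Relatedly, your formula $2\int_0^\infty(\beta_0(r)-1)\,dr$ is off by $-2b_0$, because $\beta_0=0$ before the first birth. The correct expression is $2\int_0^\infty\max\{\beta_0(r)-1,0\}\,dr$, and the positive part is exactly what locates the excess.

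\textbf{The repair keeps your framework.} With the positive part, $\beta_0(U_r)\le\beta_0(U'_r)+\beta_0(U''_r)$ holds for all but finitely many $r$ (the exceptions are radii at which a degenerate lune is a single point of the common side). This makes the pointwise excess $\le 0$ when $U'_r$ or $U''_r$ is empty, and $\le 1$ when both are non-empty.

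Now anchor the window at $r_1$ instead of $r^*$. By continuity and compactness there are points $x'\in Q'$ and $x''\in Q''$ with $f_{\max}(x'),f_{\max}(x'')\le r_1$. Since $f_{\max}$ is $1$-Lipschitz and each square has diameter $\sqrt{2}$, we get $f_{\max}\le r_1+\sqrt{2}$ on all of $Q'\cup Q''$. Hence $U_r=\interior{Q}$ is connected, and the excess vanishes, for $r\ge r_1+\sqrt{2}$. Integrating gives the additive $2\sqrt{2}$.

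This also replaces your Step~3 segment argument, which is unjustified as written. Points of $U'_{r^*}$ and $U''_{r^*}$ can be up to $\sqrt{5}$ apart, so a connecting segment of length at most $\sqrt{2}$ need not exist.

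Once repaired, your argument is a clean alternative to the paper's proof. The paper merges the duplicated nodes of the two trees, paying at most $2\sqrt{2}$ for the first merge and deleting a cycle arc at non-positive cost for every later merge. Your version trades that combinatorial surgery for the homotopy equivalence between the restricted mosaic filtration and $U_r$, which the paper asserts but does not prove.
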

\begin{proof}
  Given the lunar EMSTs of $\chi$ restricted to $Q'$ and $Q''$, we construct a spanning tree within $Q = Q' \cup Q''$ whose cost exceeds the sum of the costs of the two lunar EMSTs by at most $2 \sqrt{2}$.
  The cost of the lunar EMST restricted to $Q$ is at most the cost of that tree, which will imply the inequality.

  \smallskip
  As first step, we take the union of the two restricted lunar EMSTs, noting that each domain that overlaps both squares is represented by two nodes that need to be merged into one---for the cost, equivalently, an edge is added with cost equal to the higher of the nodes' costs.
  To see how this is done, let $\aaa$ be a colorful $s$-simplex such that $\dom{\aaa} \cap Q' \neq \emptyset$ and $\dom{\aaa} \cap Q'' \neq \emptyset$, as in Figure~\ref{fig:merging}.
  Let $u', u''$ be the corresponding two nodes, which do not necessarily have the same lunar radius, as each is born at the moment the lune has a non-empty intersection with the corresponding restricted domain.

  \smallskip
  Assuming $\Lambda (u') \leq \Lambda (u'')$, we have $\Lambda (u) = \Lambda (u')$ for the merged node $u$.
  Letting $\omega'$ and $\omega''$ be the nodes with smallest costs in the two lunar EMSTs, this operation increases the combined cost by at most $\Lambda (u'') - \max \{\Lambda (\omega'), \Lambda (\omega'')\}$.
  Since both squares have diameter $\sqrt{2}$, we have $\Lambda (u'') \leq \max \{\Lambda (\omega'), \Lambda (\omega'')\} + 2 \sqrt{2}$.
  It follows that at this stage of the construction, the increment of the cost of the combined tree beyond the sum of the costs of the two lunar EMSTs is at most $2 \sqrt{2}$.

  \smallskip
  Let $\bbb$ be another colorful $s$-simplex with $\dom{\bbb} \cap Q' \neq \emptyset$ and $\dom{\bbb} \cap Q'' \neq \emptyset$.
  Letting $v'$ and $v''$ be the corresponding nodes, we again merge them into one, and assuming $\Lambda (v') \leq \Lambda (v'')$, the cost of the new node is $\Lambda (v) = \Lambda (v')$.
  But now the operation creates a cycle, which we remove by deleting an arc $e$ incident to $v''$.
  By construction, $\Lambda (e) \geq \Lambda (v'')$ so the increment to the cost is $\Lambda (v'') - \Lambda (e)$, which is non-positive.
  Similarly, we merge the nodes that correspond to the remaining domains overlapping both squares, each time deleting an arc.
  Since none of these operations increases the cost of the tree, the inequality established after merging $u'$ and $u''$ still holds.
\end{proof}

\begin{figure}[hbt]
    \centering
    \vspace{0.0in}
    \resizebox{!}{1.2in}{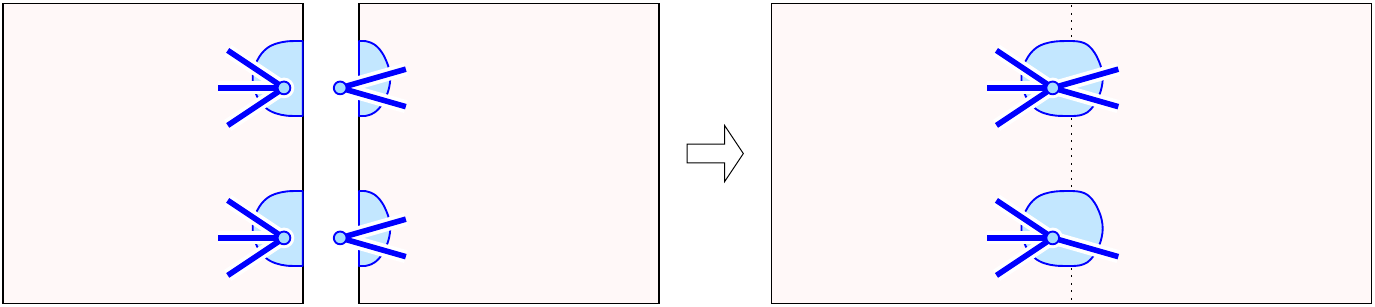}
    \caption{\footnotesize Merging the lunar EMSTs in adjacent copies of the unit square.
    Except at the first time, each merger leads to the removal of an incident arc.}
    \label{fig:merging}
\end{figure}

We use Lemma~\ref{lem:sub-additivity} to prove that Condition~IV is satisfied.
Assume $M = 2^k$.
To consolidate the $M$-by-$M$ grid to a single square, we take $k$ stages, in each merging the squares in groups of four to generate a coarser grid.
Consider the first step in the first stage, which merges
\begin{align}
  [0,\tfrac{1}{M}] \times [0,\tfrac{1}{M}], ~~~~~
  [\tfrac{1}{M},\tfrac{2}{M}] \times [0,\tfrac{1}{M}],~~~~
  [0,\tfrac{1}{M}] \times [\tfrac{1}{M},\tfrac{2}{M}], ~~~~
  [\tfrac{1}{M},\tfrac{2}{M}] \times [\tfrac{1}{M},\tfrac{2}{M}]
\end{align}
into $[0,\tfrac{2}{M}]^2$.
By Lemma~\ref{lem:sub-additivity}, merging the first two squares increases the cost by at most ${2 \sqrt{2}} \cdot \frac{1}{M}$, merging the second two squares increases the cost by at most the same amount, and merging the two rectangles increases the cost by at most twice that amount.
There are $M^2 / 4$ steps in the first stage, so the cost increment in this stage is at most
\begin{align}
  \tfrac{M^2}{4} \left( 2 \sqrt{2} \cdot \tfrac{1}{M} + 
  2 \sqrt{2} \cdot \tfrac{1}{M} + 2 \sqrt{2} \cdot \tfrac{2}{M} \right) = 2 \sqrt{2} M .
\end{align}
The increment in the second stage is at most half this amount,
etc., so the total cost increment is at most $2 \sqrt{2} M \cdot (1 + \sfrac{1}{2} + \ldots + \sfrac{1}{2^k}) < 4 \sqrt{2} M$. 
This proves Condition~IV for $C_s^{\rm sub} = 4 \sqrt{2}$ if $M$ is a power of $2$, and for $C_s^{\rm sub} = 8 \sqrt{2}$ if $M$ is any positive integer.

\subsection{Mean Closeness of Restricted and Un-restricted Lunar EMSTs}
\label{sec:4.3}

To prove that Condition~V is satisfied, we use the fact that only the critical nodes and arcs of the lunar radius function are relevant for the cost.
By Lemma~\ref{lem:critical_and_non-critical_cells}, these nodes and arcs are characterized by the smallest enclosing stacks of the corresponding points.
Considering a Poisson point process, $A \subseteq \Rspace^2$, as well as its restriction to the unit square, $A \cap [0,1]^2$, we focus on the sizes of the smallest enclosing stacks defined by these points.
If such a stack lies completely inside $[0,1]^2$, then it corresponds to a node or arc in both, the restricted and the unrestricted lunar EMSTs.
The nodes and arcs in the symmetric difference thus correspond to smallest enclosing stacks that either cross the boundary of $[0,1]^2$ or lie completely outside $[0,1]^2$.
\begin{lemma}
  \label{lem:mean_closeness_one}
  Let $A$ be a stationary Poisson point process with positive intensity in $\Rspace^2$, $s \geq 0$ an integer, and $Q = [0,1]^2$.
  Then the expected absolute difference between the average costs of the lunar EMST of $A \cap [0,1]^2$ and the lunar EMST of $A$ restricted to $Q$ is at most some constant; that is: $\Expect{A}{| \Cost{s}{A; Q} - \Costrestr{s}{A; Q} |} \leq C_s^{\rm one}$.
\end{lemma}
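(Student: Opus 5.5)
We fix a coloring $\chi$ and compare the two growth processes directly, averaging over $\chi$ and $A$ only at the end. By \eqref{eqn:cost} and the topological lunar EMST, each cost equals twice the sum of death radii minus twice the sum of the non-minimal birth radii. The deaths are the critical arcs and the births are the critical nodes of the respective radius function: $\Lambda$ on $\Delaunay{A_0 \cap Q, \ldots, A_s \cap Q}$, or its restriction to $\interior{Q}$ for the points of all of $A$. Write each cost as an alternating sum over critical cells. A critical cell whose smallest enclosing stack is empty and lies entirely inside $\interior{Q}$ has the same defining points, the same radius and the same criticality in both settings. Inside the square, points of $A \setminus Q$ cannot enter an empty stack contained in $Q$, and the two overlays agree near its center. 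These cells therefore cancel in the difference.

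By the triangle inequality, $|\Cost{}{\ldots;Q} - \Costrestr{}{\ldots;Q}|$ is then at most twice the sum of radii of two kinds of cells. The first kind are critical nodes and arcs of either function whose stack is not contained in $Q$, that is, it crosses $\partial Q$ or has its center outside $Q$. The second kind are critical nodes and arcs of the restricted function whose minimum of $f_{\max}$ over the clipped cell is attained on $\partial Q$; these are the boundary-born components and boundary mergers created by clipping. We add one further correction of at most $2\sqrt{2}$ for the dropped minimal birth, since all radii are at most the diameter of $Q$ once a point is present, and the empty case has probability $e^{-n}$.

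For the first kind, we bound the sum of radii of crossing stacks as in the proof of Lemma~\ref{lem:counting_critical_cells}. Each critical lunar node or arc is charged to a critical edge or triangle of the ordinary Delaunay mosaic of $A_i$, $A_i \cup A_j$ or $A_i \cup A_j \cup A_k$, with the same smallest enclosing circle and the same radius. That circle is empty of the union's points apart from at most one point, when the inner circles of the stack enclose a point of another color. Because the coloring is random, each $A_i \cup A_j \cup \ldots$ is again a stationary Poisson process, in $\Rspace^2$ or in $[0,1]^2$. Lemma~\ref{lem:crossing_the_boundary} then gives an $O(1)$ bound on the expected sum of radii of such circles centered in $Q$ and crossing $\partial Q$, with a multiplicity factor depending only on $s$.

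For stacks of the unrestricted point set centered outside $Q$ but relevant to cells meeting $\interior{Q}$, we will instead treat them as the second kind. Their contribution to the restricted function is realized on $\partial Q$.

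For the second kind, we apply Lemma~\ref{lem:chromatic_weighted_radii} to each of the four lines supporting the sides of $Q$, with $\Omega$ the corresponding side. A restricted cell born or merged at a point $x \in \partial Q$ is a critical vertex or edge of the weighted lunar Delaunay mosaic on that line, with the same radius $f_{\max}(x)$. Corners can be handled by adding $O(1)$ terms whose radii are bounded in expectation.

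The main obstacle is making the cancellation rigorous. Critical cells of the unrestricted process near the boundary can be absorbed by non-critical pairs in the restricted process, and vice versa, so the correspondence is not cell by cell. My plan is to avoid matching entirely. The cost is the 1-norm of a persistence diagram, so we can use that each cost equals (twice) $\sum_{\text{crit arcs}} \Lambda - \sum_{\text{crit nodes}} \Lambda$ plus the minimum node. We then just bound both alternating sums by the interior common terms plus the boundary terms identified above, checking that every non-common critical cell falls into one of the two boundary classes.
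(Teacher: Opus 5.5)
Your decomposition is the same as the one in the paper's proof. The boundary-crossing critical nodes and arcs of either radius function are its types (i) and (ii): they are charged to critical edges and triangles of the Delaunay mosaics of one, two, or three color classes and bounded by Lemma~\ref{lem:crossing_the_boundary}. The restricted cells whose stacks are centered outside $Q$ are its type (iii): they are realized on $\partial Q$ and bounded by Lemma~\ref{lem:chromatic_weighted_radii} on the four sides, with $O(1)$ for the corners. The step that fails is the one you add to make the cancellation rigorous. The cost is \emph{not} twice the sum of $\Lambda$ over all critical arcs, minus twice the sum over all critical nodes, plus twice the minimum. The deaths in \eqref{eqn:cost} are only the critical arcs that merge two components. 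Every other critical arc closes a cycle that is later filled by a critical triangle. Such triangles exist (Lemma~\ref{lem:critical_and_non-critical_cells}(3)); already for $s=0$ they are the acute Delaunay triangles, and the cycle-closing arcs have radii summing to order $\sqrt{n}$ in expectation. So your identity omits a large term, and how that term changes between $A \cap Q$ and the restricted setting is exactly what is in question.

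If you sum only over merging arcs, common cells no longer cancel automatically, because a common interior arc can be merging in one process and cycle-closing in the other. For example, two interior components may be joined early along $\partial Q$ by boundary-born cells of the restricted function, but for $A \cap Q$ only later by a common interior arc. The cost then changes by twice the difference of the two merge radii, a quantity attached to a common cell. If instead you add the critical triangles to the alternating sum, you get the degree-$0$ minus the degree-$1$ total persistence, and the degree-$1$ difference is a problem of the same kind. Bookkeeping alone cannot settle this: for abstract filtrations, one extra arc and one extra triangle of arbitrarily small radii can move a degree-$0$ death by an arbitrary amount. What is missing is a geometric argument that such shifts of merge radii are dominated by the radii of the boundary cells you bound. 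In fairness, the paper's proof does not supply this step either. It lists the nodes and arcs that are critical for one function but not the other and bounds their radii, taking the reduction for granted. With the incorrect identity removed, your proposal is at the level of the paper's argument. (Minor: the charged circle is empty of $A_i \cup A_j$, respectively $A_i \cup A_j \cup A_k$, outright, because points of the remaining colors are not in that union; no enclosed point needs to be allowed here.)
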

\begin{proof}
  Let $\chi \colon A \to [s]$ be a random $(s+1)$-coloring.
  We write $A_i = \chi^{-1} (i)$, as usual, and introduce $B = A \cap Q$ and $B_i = A_i \cap Q$ for $0 \leq i \leq s$.
  The lunar radius function, $\Lambda_B$, is on an un-restricted lunar Delaunay mosaic, while $\Lambda_A$ is on a restricted such mosaic.
  We distinguish between three kinds of nodes and arcs that are critical for one of the two lunar radius functions but not the other:
    \smallskip \begin{enumerate}[(i)]
        \item critical nodes and arcs of $\Lambda_A$ whose smallest enclosing circles cross $\partial Q$ and have centers inside $Q$; 
        
        \item critical nodes and arcs of $\Lambda_B$ whose smallest enclosing circles cross $\partial Q$;
        
        \item critical nodes and arcs of $\Lambda_A$ whose smallest enclosing circles have centers outside $Q$.
    \end{enumerate} \smallskip
    Since the goal are upper bounds, we will enlarge the sets as convenient.
    For example, we will ignore the constraint that the circles in (i) have their centers inside $Q$ and consider all such circles that cross the boundary of $Q$.
    The remainder of this proof considers the three types in sequence, and for each type shows that the sum of lunar radii is bounded by a constant independent of $n$.
    
    \smallskip
    We use Lemma~\ref{lem:crossing_the_boundary} to get the bounds for the critical nodes and arcs of type (i).
    By Lemma~\ref{lem:critical_and_non-critical_cells}, a node is critical only if its smallest enclosing stack is empty. Assuming general position of the points, the largest circle of this stack—which is also the smallest enclosing circle of the node—passes through either two or three of the s + 1 points.
    If two, and these two points belong to $A_i$ and $A_j$, then they define a critical edge in $\Delaunay{A_i \cup A_j}$.
    If three, and these three points belong to $A_i$, $A_j$, and $A_k$, then they define a critical triangle in $\Delaunay{A_i \cup A_j \cup A_k}$.
    By Lemma~\ref{lem:crossing_the_boundary}, the expected sum of the radii of the circles that cross $\partial Q$ is therefore at most some constant times $s^3$, which itself is a constant.
    %
    Again by Lemma~\ref{lem:critical_and_non-critical_cells}, an arc of the lunar radius function is critical only if the smallest enclosing stack of the corresponding $s + 2$ points is empty. Assume $A_j$ contributes two to the $s+2$ points while every other set contributes only one point. Since the smallest enclosing stack of the arc has to be strictly larger than those of its two nodes, both points contributed by $A_j$ lie on the largest circle in the stack.
    If they are the only two points on the circle, then they span a critical edge in $\Delaunay{A_j}$.
    Otherwise, there is a third point on the circle---say a point of $A_k$---in which case the three points span a critical triangle in $\Delaunay{A_j \cup A_k}$.
    The expected sum of the radii of the circles that cross $\partial Q$ is therefore again at most $O(1)$.
    
    \smallskip
    With the same argument, we use Lemma~\ref{lem:crossing_the_boundary} together with Lemma~\ref{lem:critical_and_non-critical_cells} to get the bounds for the critical nodes and arcs of type (ii).
    Indeed, the centers of the smallest enclosing circles of points in $B$ lie inside $Q$ by construction, so Lemma~\ref{lem:crossing_the_boundary} applies without modification.
    The expected sums of radii of the circles that correspond to critical nodes and arcs of $\Lambda_B$ that cross $\partial Q$ are therefore at most $O(1)$.

    \smallskip
    The nodes and arcs of type (iii) correspond to critical vertices and edges of the overlay of $s+1$ weighted Voronoi tessellations along the boundary of the unit square.
    We assume that $\chi$ is a random coloring, so Lemma~\ref{lem:chromatic_weighted_radii} implies that the expected sum of their radii is at most $O(1)$, given that the centers of the corresponding cells in the weighted lunar Delaunay mosaic lie on the boundary of $[0,1]^2$.
    This is always the case for edges, but might not be the case for vertices, when the interior of a domain in the overlay of Voronoi tessellations contains a corner of the unit square. 
    There are at most four such cases, one for each corner of $[0,1]^2$.
    Each such node used in the minimum spanning tree will be paired with an arc that is critical in $\Lambda_A$ but not critical in $\Lambda_B$, and whose radius differs at most by $\sqrt{2}$.
    So overall they contribute at most a constant to the difference between the costs of the two lunar trees.
\end{proof}

\subsection{The Cost of the Rooted Lunar EMST is Super-additive}
\label{sec:4.4}

We need a technical lemma to prove that Condition~VI is satisfied.
Similar to the restricted lunar EMST, the rooted lunar EMST is obtained by growing the lunes inside their domains intersected with $Q = [0,1]^2$, and adding nodes and arcs with lunar radii accordingly.
The important difference is that we start with the root of cost~$0$, which represents $\partial Q$.
Whenever a lune restricted to its domain intersected with $Q$ touches $\partial Q$, we add an arc to the root, provided this is the first time the component to which the lune belongs touches $\partial Q$. 
We call this arc an anchor, and to be consistent with the growth model of lunes, we set its cost to twice the radius at which the touching happens.
$\partial Q$ acts as a single point, so whenever two components touch $\partial Q$, they are considered to belong to the same component.
As explained in the footnote of Definition~\ref{dfn:rooted_lunar_EMST}, it is also possible that no anchor is used in the construction.
The \emph{cost} of the rooted tree is defined as that of the other trees: the sum of costs of the arcs minus the costs of the nodes plus the cost of the minimum cost node.
Like before $Q' = [0,1]^2$, $Q'' = [1,2] \times [0,1]$, and $Q = Q' \cup Q''$.
\begin{lemma}
  \label{lem:super-additivity}
  Let $A \subseteq \Rspace^2$ be locally finite, $s \geq 0$ an integer, and $\chi \colon A \to [s]$ an $(s+1)$-coloring.
  Then the costs of the rooted lunar EMSTs restricted to the two unit squares and the rectangle satisfy
  \begin{align}
    \Costrooted{s}{A_0, A_1, \ldots, A_s; Q} \geq \Costrooted{s}{A_0, A_1, \ldots, A_s; Q'} + \Costrooted{s}{A_0, A_1, \ldots, A_s; Q''} .
  \end{align}
\end{lemma}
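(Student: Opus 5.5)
The plan is to take a rooted lunar EMST $T$ for $Q = Q' \cup Q''$ and cut it into two rooted spanning structures, one for $Q'$ and one for $Q''$. Their costs should add up to at most the cost of $T$. Minimality of the rooted trees on $Q'$ and $Q''$ then gives the inequality for a fixed coloring $\chi$. Averaging over colorings preserves it.

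I will work with the persistence (growth) description of the cost rather than with explicit trees. The rooted cost equals twice the sum, over all components born in the sublevel filtration of the lune union restricted to $\interior{Q}$, of death minus birth. Here the root counts as a component born at $0$ that never dies, so components that touch $\partial Q$ die at that moment. In Kruskal form this is the sum of arc costs minus node costs of the rooted tree, which is the definition in Definition~\ref{dfn:rooted_lunar_EMST}.

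The key comparison is between the filtrations. For each radius $r$, let $U_r$ be the union of clipped lunes in $\interior{Q}$, and let $U'_r, U''_r$ be the corresponding sets for $Q'$ and $Q''$. The Voronoi overlay restricted to $\interior{Q'}$ and $\interior{Q''}$ is the restriction of the overlay for $Q$, with domains crossing the common segment $\sigma = \{1\} \times [0,1]$ split in two. Hence $U_r \cap \interior{Q'}$ and $U'_r$ have the same points, up to the cut along $\sigma$. The same holds for $Q''$.

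The cut along $\sigma$ becomes part of the boundaries $\partial Q'$ and $\partial Q''$, and is therefore identified with the root. So every component of $U_r$ corresponds to one or more components of $U'_r \sqcup U''_r$ modulo their roots. Each such piece is either a piece born in $Q'$ or $Q''$, or it touches $\partial Q'$ or $\partial Q''$ no later than the corresponding component of $U_r$ touches $\partial Q$ or merges.

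Concretely, I will set up a bijection-with-domination argument on the $0$-dimensional persistence pairs via the elder rule. Every bar $[b,d)$ in the split filtration, excluding the roots, is matched injectively to a bar $[b,\hat d)$ of the $Q$-filtration with $\hat d \geq d$. Alternatively, it is matched to nothing, in which case it is charged against a bar of $Q$ that it lies inside.

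The cleanest route is the Kruskal/cut formulation. Delete from $T$ every arc whose two nodes lie on different sides of $\sigma$, and replace it with anchors. Likewise, split every node whose domain crosses $\sigma$ into its two restricted copies, each with lunar radius at most that of the arc or time at which the lune reaches $\sigma$. A removed arc $e$ between domains on opposite sides has radius at least the time its lune reaches $\sigma$. Therefore, replacing $e$ by an anchor to the new root in $Q'$ or $Q''$ does not increase the arc cost.

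For a split node $u$ with copies $u', u''$, a subtlety arises: one copy may be born later than $u$. This is the step I expect to be the main obstacle. I will pair the extra copy with an anchor whose radius equals the time the lune reaches $\sigma$. Since the copy touches $\sigma$ at birth or earlier, its node cost is at least the anchor cost, so the net contribution is non-positive, up to the subtracted node costs, which only help.

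I must also check that the resulting graphs on each side are spanning, connected modulo the root, and acyclic, which would only make the minimum smaller. Then the sums of arc minus node costs satisfy $\Costrooted{}{\cdot;Q'} + \Costrooted{}{\cdot;Q''} \leq \Costrooted{}{\cdot;Q}$. The minimum-node term is $0$ for the root.

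The main technical work is this careful accounting of the split nodes near $\sigma$. My first approach to it will be to phrase everything in terms of death times under the elder rule. That way I avoid tracking explicit arc replacements, and I only need to observe that cutting along $\sigma$ with $\sigma$ identified to the root makes every component die no later.
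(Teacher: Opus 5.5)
Your skeleton matches the paper's: cut an optimal rooted tree $T$ for $Q$ along $\sigma=\{1\}\times[0,1]$, charge every straddling node against one anchor, and use minimality on $Q'$ and $Q''$. Your key inequality, that the later-born copy $u''$ costs at least as much as its anchor, is the paper's $\Lambda(e)\le\Lambda(u'')$. What justifies it is convexity: on $\dom{\aaa}$ we have $f_{\max}(x)=\max_i\|x-a_i\|$, so if the minimum over $\dom{\aaa}\cap Q$ is attained in $Q'$, then the minimum over $\dom{\aaa}\cap Q''$ is attained on $\sigma$.

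The explicit surgery you call the cleanest route has two steps that fail as written.

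First, generically no arc joins a node lying only in $Q'$ to a node lying only in $Q''$. The only arcs whose sides cross $\sigma$ join two straddling nodes. For these, your claim that the radius is at least the time the lune reaches $\sigma$ is false: take two straddling domains whose lunes meet on their common side deep inside $Q'$. Such an arc must instead be kept, at its old cost, on the side where its minimum is attained. This is what the paper does by retaining one of $u'v'$, $u''v''$.

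Second, always anchoring the \emph{extra} copy need not reconnect the pieces. Suppose $u'$ is the earlier-born copy, and $T$ reaches the root from $u$ only through an arc at $u''$ (say to a node anchored in $Q''$). Suppose also that $u'$ heads a subtree in $Q'$ containing no anchor and no other straddling node. Then the anchor at $u''$ closes a cycle, and that subtree stays cut off from the root of $Q'$. The paper allows the anchor at either copy. This is still free because $\sigma\subseteq\partial Q'\cap\partial Q''$, so the anchor at $u'$ also has radius at most $\min\{f_{\max}(x) : x\in\dom{\aaa}\cap\sigma\}=\Lambda(u'')$. For each straddling node, anchor the copy whose half of $T$, after splitting that node, does not contain the root. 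This turns the $k+2$ pieces into two trees.

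The persistence version you settle on at the end is a genuinely different and sound route, and it avoids both issues. With the root spanned, the cost is $2\int_0^\infty(\beta_0(r)-1)\,dr$ over the rooted sublevel sets. Gluing the roots of $Q'$ and $Q''$ only merges two root components. At radius $r$, the glued split filtration then has the components of $U_r\cup\partial Q\cup\sigma$. There are at most as many of these as components of $U_r\cup\partial Q$, because $\sigma$ is connected and meets $\partial Q$. This gives the inequality pointwise in $r$.

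In your bar-matching language, the unmatched bars are exactly the extra copies. They are anchored through $\sigma$ at birth and have length zero, so no charging against other bars is needed; such charging would risk double counting. If you keep the elder rule instead, you must handle one subtlety. The older node $y$ that $x$ reaches along a path on its own side may be represented there by its later-born copy, which can be younger than $x$. Then $x$ still dies in time, because that copy is attached to the root at birth.

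Compared with the paper, this route never builds the two trees, so no piece counting is needed. The price is relying on the Kruskal-equals-total-persistence identity and on the correspondence between the restricted mosaic and the union of clipped lunes.

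Finally, identifying the rooted cost with the root-spanning persistence is narrower than Definition~\ref{dfn:rooted_lunar_EMST}, which lets the root be omitted. The paper's proof (``we split $k+1$ nodes, including the root'') makes the same tacit assumption.
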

\begin{proof}
  We show that the rooted lunar EMST of $\chi$ restricted to $Q$ can be split into two rooted trees, one restricted to $Q'$ and the other to $Q''$, such that their combined cost is at most $\Costrooted{s}{A; Q}$.
  The rooted lunar EMSTs restricted to the two squares do not have larger cost, which implies the claimed inequality.
    
  \smallskip
  Whenever we split a node of a tree into two---with each neighbor of the original node connected to either one of the two copies---we produce two (connected) trees; see Figure~\ref{fig:splitting}.
  Assuming there are $k$ colorful $s$-simplices whose domains overlap both squares, we split $k+1$ nodes, including the root that represents $\partial Q$.
  We therefore produce $k+2$ trees, so we need $k$ anchors to connect them into two trees, one within each square.
  It thus suffices to add one anchor for each of the $k$ mentioned $s$-simplices.
    
  \smallskip
  Let $\aaa$ be such a colorful $s$-simplex, so $\dom{\aaa} \cap Q' \neq \emptyset$ and $\dom{\aaa} \cap Q'' \neq \emptyset$.
  Let $u$ be the corresponding node in the rooted lunar EMST of $Q$, which we split into two nodes, $u'$ and $u''$.
  Assume without loss of generality that $\Lambda (u) = \Lambda (u') \leq \Lambda (u'')$, and in this case add an anchor $e$ connecting $u'$ to $\partial Q'$ or $u''$ to $\partial Q''$.
  In addition, we connect each neighboring node, $v$, of $u$ to either $u'$ or $u''$ depending on whether its domain belongs to $Q'$ or $Q''$.
  There is also the case in which $v$ is of the same kind as $u$ and thus splits into $v'$ and $v''$, in which $v'$ is possibly connected to $u'$ and $v''$ is possibly connected to $u''$.
  However, at most one of these two arcs is needed, since either $v'$ or $v''$ is connected by an anchor to its square's root, just like $u'$ or $u''$ is, and the cost of this one arc is the cost of the old arc connecting $u$ to $v$.

  \smallskip
  The increment in cost caused by the splitting of $u$ is at most $\Lambda (e) - \Lambda (u'')$, which is non-positive because the lune that first grows within $\dom{\aaa} \cap Q'$ touches the boundary of its square before it also grows inside $\dom{\aaa} \cap Q''$, so $\Lambda (e) \leq \Lambda (u'')$.
  This is true for all nodes that are split, which implies the claimed inequality.
\end{proof}

\begin{figure}[hbt]
  \centering
  \vspace{0.0in}
  \resizebox{!}{1.2in}{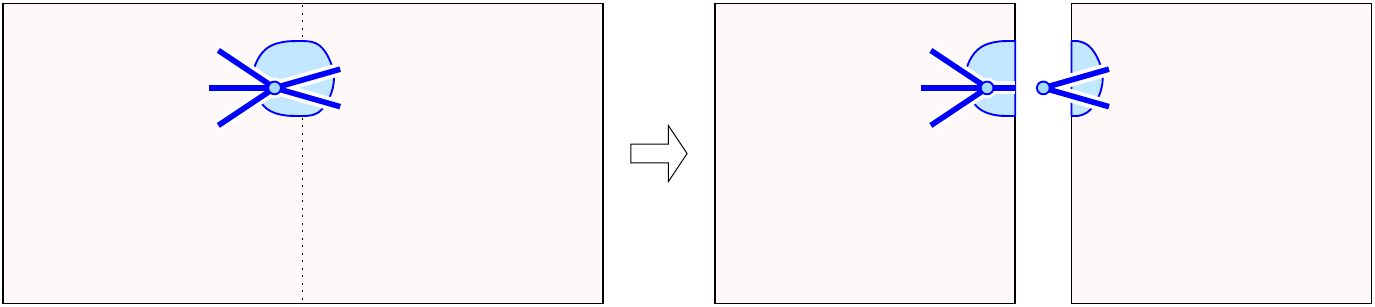}
  \caption{\footnotesize Splitting a rooted lunar EMST.
  For each vertex split into two, we add an anchor whose cost is at most the larger cost of the two new nodes.}
  \label{fig:splitting}
\end{figure}

The cost of the rooted lunar EMST satisfies Conditions~I, II, III,
and by Lemma~\ref{lem:super-additivity}, it also satisfies Condition~VI; that is: it is super-additive with $C_s^{\rm sup} = 0$.
Indeed, we can split the rooted lunar EMST restricted to $[0,1]^2$ into as many rooted trees as we like without increasing the total cost.

\subsection{Mean Closeness of Rooted and Un-rooted Lunar EMSTs}
\label{sec:4.5}

We turn to comparing the costs of the restricted un-rooted and rooted lunar EMSTs discussed in the previous two subsections.
A crucial ingredient in this comparison are the weighted Delaunay mosaics introduced in Section~\ref{sec:3.2}
and, in particular, Lemma~\ref{lem:chromatic_weighted_radii}.
As in the proof of Lemma~\ref{lem:mean_closeness_one}, we need to address the technical assumption in Lemma~\ref{lem:chromatic_weighted_radii}, that the centers of the simplices be in the interval $[0,1]$. 
This assumption can only be violated by a node whose lune enters the unit square through a corner. 
The cost of the anchor connecting this node to the rooted tree is at most a constant smaller than the lowest cost of an edge connecting it to the restricted tree. 
Since there are only four corners, all such nodes are responsible for at most a constant difference in the costs of the two trees.
In what follows, we will ignore these cases and consider only the lunes that enter the unit square through the interiors of the boundary sides.
Clearly,
\begin{align}
  \Costrooted{s}{A_0, A_1, \ldots, A_s; [0,1]^2} &\leq \Costrestr{s}{A_0, A_1, \ldots, A_s; [0,1]^2} ,
\end{align}
because the rooted lunar EMST has strictly more options to get connected than the un-rooted lunar EMST.
Indeed, we permit the rooted tree to have any non-negative number of anchors, so the un-rooted lunar EMST is one of the rooted lunar trees competing for minimizing the cost; see again the footnote of Definition~\ref{dfn:rooted_lunar_EMST}.
The goal is therefore to show that the cost of the un-rooted restricted lunar EMST is not much larger than that of the rooted lunar EMST.
We do this in two steps:  first bounding the expected cost of the anchors, and second bounding the extra cost needed to replace the anchors by other arcs.
The anchors connect the tree to interior points of critical pieces in the overlay of the $s+1$ ($1$-dimensional) weighted Voronoi tessellations.
The expected sum of their weighted radii is the subject of Lemma~\ref{lem:chromatic_weighted_radii}. 
\begin{lemma}
  \label{lem:cost_of_anchors}
  Let $A$ be a stationary Poisson point process with positive intensity in $\Rspace^2$, $s \geq 0$ an integer, and $Q = [0,1]^2$.
  Then the expected total cost---averaged over the point process as well as its colorings---of the anchors in the rooted lunar EMST is at most $O(1)$.
\end{lemma}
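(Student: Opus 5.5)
The plan is to charge every anchor of the rooted lunar EMST to a critical cell of the radius function on the weighted lunar Delaunay mosaic along $\partial Q$. The expected total radius of those cells is then bounded with Lemma~\ref{lem:chromatic_weighted_radii}, applied to each of the four sides of $Q$.

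First, fix a coloring $\chi$ and look at the anchors in the rooted tree. By the construction in Section~\ref{sec:4.4}, an anchor is added only at the first moment a component of the union of restricted lunes touches $\partial Q$. At that moment the lune reaches $\partial Q$ at a point $x$ where $f_{\max}$, restricted to $\partial Q$, attains a local minimum in the sublevel-set sense.

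Second, fix a side $\ell$ of $Q$, viewed as $\Omega = [0,1]$ on a line $\Rspace^1 \hookrightarrow \Rspace^2$. Assume, as agreed above, that $x$ lies in the interior of a side rather than at a corner. Then $x$ lies in a piece of the overlay $\Voronoi{A_0', A_1', \ldots, A_s'}$ of weighted Voronoi tessellations on $\ell$. Since the lune of that piece's domain first meets $\ell$ at $x$, the anchor's radius equals the value $g$ assigns to that piece's vertex in the weighted lunar Delaunay mosaic, and that vertex is critical. The key claim is that distinct anchors map to distinct critical vertices. An anchor on side $\ell$ is created only if the component has not touched $\partial Q$ before, and every critical vertex starts its own new component of the sublevel set of $f_{\max}$ on $\ell$. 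Two anchors charged to the same vertex would therefore mean the same component is anchored twice, which the construction forbids. Hence the total anchor cost is at most twice the sum, over the four sides, of the radii of the critical vertices of $g$ with centers in the side.

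Third, take expectations. Under a random coloring each $A_i$ is a stationary Poisson process, as in the Remark, so Lemma~\ref{lem:chromatic_weighted_radii} applies to each side and gives an $O(1)$ bound per side. One detail needs care: the restricted mosaic only uses domains that meet $\interior{Q}$, whereas the weighted tessellation on $\ell$ is built from all of $A$. Any domain meeting $\ell$ in a piece, though, does meet $Q$ near that piece, so the charged vertices are indeed among the cells counted by the lemma. The corner cases have already been discussed before the statement and contribute only a constant. Summing gives the $O(1)$ bound, uniformly in the intensity.

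I expect the main obstacle to be the injectivity of the charging. One must check that the anchor's lune touches $\partial Q$ at an interior point of its piece, so the vertex really is critical, and not at a piece endpoint shared with a lower-radius neighbor. In the latter case I would instead charge the anchor to the critical edge of $g$ at that endpoint, and Lemma~\ref{lem:chromatic_weighted_radii} covers critical edges as well.
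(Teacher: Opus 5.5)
Your route is the paper's: bound the anchor cost by twice the sum of radii of the critical vertices of the weighted lunar Delaunay mosaic along each side, invoke Lemma~\ref{lem:chromatic_weighted_radii}, and treat the four corners separately. Injectivity is in fact immediate, since distinct anchors belong to distinct nodes and hence to pieces of distinct domains.

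The step that fails is the fallback in your last paragraph. Suppose the piece of the anchored domain $\aaa$ is first reached at an endpoint $x$ that it shares with a neighboring piece, belonging to a domain $\bbb$, on which $f_{\max}$ is smaller. Then the edge of $g$ at $x$ has the same radius as the vertex of $\aaa$ and is paired with it, so it is \emph{not} critical. A critical edge requires both adjacent pieces to be reached strictly before $x$. Lemma~\ref{lem:chromatic_weighted_radii} covers only critical vertices and edges, not non-critical vertex--edge pairs, so this charge is not covered.

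The case has to be excluded rather than charged. Your first step would exclude it in the continuous picture. Generically, $f_{\max}$ restricted to a side has no local minimum at a piece endpoint: the distance to the nearest point of a single color, restricted to a line, has its one-sided derivative drop at every segment endpoint.

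In the discrete rooted mosaic, however, the case does arise through ties, and it does so generically. If the lune of $\aaa$ enters $Q$ at $x$, then three cells all receive radius $f_{\max}(x)$: the node $\aaa$, the arc $\aaa\bbb$ (whose clipped Voronoi edge has $x$ in its closure), and the anchor of $\aaa$. The remedy is to break ties in favor of arcs. The component of $\bbb$ is already rooted at a radius strictly below $f_{\max}(x)$, because the lune of $\bbb$ reaches the side near $x$ earlier. The arc $\aaa\bbb$ has radius at most $f_{\max}(x)$ and is processed first, so the anchor of $\aaa$ would close a cycle and is never used. With this convention every anchor in the tree is charged to a critical vertex whose center lies in the side, and the rest of your argument goes through.
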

\begin{proof}
  Let $\chi \colon A \to [s]$ be a random $(s+1)$-coloring, and consider the overlay of the $s+1$ weighted Voronoi tessellations along the boundary of the unit square.
  The total cost of the anchors is bounded from above by twice the sum of radii of the critical vertices in the dual weighted Delaunay mosaic.
  Since $\chi$ is a random coloring, we can apply Lemma~\ref{lem:chromatic_weighted_radii}, which asserts that this sum is at most $O(1)$.
\end{proof}

To prepare the argument for the second step, call an arc in the un-rooted restricted lunar EMST a \emph{bridge} if it is not also an arc in the rooted lunar EMST.
If there are $k \geq 1$ bridges, removing them splits the tree into $k+1$ subtrees, so we need $k+1$ anchors to reconnect them.
By the optimality of the rooted lunar EMST, the cost of these anchors is at most the cost of the bridges.
But we need an upper bound on the cost of the bridges, which we will get by adding arcs that connect the nodes incident to the anchors in a cycle.
We need a straightforward geometric fact about lunes.
\begin{lemma}
  \label{lem:overlapping_lunes}
  Let $\aaa$ and $\bbb$ be colorful $s$-simplices of $\chi \colon A \to [s]$, $r$ and $q$ larger than the radii of the corresponding smallest enclosing stacks, respectively, and $x \in \Lune{r}{\aaa}$ and $y \in \Lune{q}{\bbb}$.
  Then $\Lune{R}{\aaa} \cap \Lune{R}{\bbb} \neq \emptyset$ for all $R \geq r + q + \frac{1}{2} \Edist{x}{y}$.
\end{lemma}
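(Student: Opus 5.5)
The plan is a direct triangle-inequality argument with an explicit witness point. The witness is the midpoint $z = \tfrac{1}{2}(x+y)$, and I will show that $z$ lies in both $\Lune{R}{\aaa}$ and $\Lune{R}{\bbb}$ as soon as $R \geq r + q + \tfrac{1}{2}\Edist{x}{y}$.

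First, I unpack the hypotheses using the definition $\Lune{r}{\aaa} = \bigcap_{i=0}^s \Disk{r}{a_i}$. The condition $x \in \Lune{r}{\aaa}$ means $\Edist{x}{a_i} \leq r$ for every $0 \leq i \leq s$. Likewise $y \in \Lune{q}{\bbb}$ means $\Edist{y}{b_i} \leq q$ for every $i$. The assumption that $r$ and $q$ exceed the sizes of the smallest enclosing stacks only guarantees that these lunes are non-empty, so such $x$ and $y$ exist; it plays no further role.

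Next, I apply the triangle inequality for each color $i$. For the vertices of $\aaa$,
\begin{align}
  \Edist{z}{a_i} \leq \Edist{z}{x} + \Edist{x}{a_i} \leq \tfrac{1}{2}\Edist{x}{y} + r \leq R .
\end{align}
For the vertices of $\bbb$,
\begin{align}
  \Edist{z}{b_i} \leq \Edist{z}{y} + \Edist{y}{b_i} \leq \tfrac{1}{2}\Edist{x}{y} + q \leq R .
\end{align}
Both final inequalities use $r, q \geq 0$, so that $r \leq r+q$ and $q \leq r+q$. Hence $z \in \Disk{R}{a_i} \cap \Disk{R}{b_i}$ for all $i$, which means $z \in \Lune{R}{\aaa} \cap \Lune{R}{\bbb}$ and the intersection is non-empty.

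No step is a real obstacle. The only point to note is that the bound is looser than necessary: $R \geq \max\{r,q\} + \tfrac{1}{2}\Edist{x}{y}$ would already suffice. The stated form with $r+q$ is presumably chosen for convenience in the later bridge-cost argument.
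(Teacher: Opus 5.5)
Your proof is correct and is essentially the paper's argument. The paper observes that $\Disk{\ee}{x} \subseteq \Lune{r+\ee}{\aaa}$, that the disks of radius $\ee = \frac{1}{2}\Edist{x}{y}$ centered at $x$ and $y$ touch (at exactly your midpoint $z$), and then uses monotonicity of lunes in the radius; your explicit triangle-inequality check at $z$ is the same reasoning. Your remark that $\max\{r,q\}$ would suffice in place of $r+q$ matches the paper's final step, which only uses $R \geq r+\ee$ and $R \geq q+\ee$.
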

\begin{proof}
  Observe that $x \in \Lune{r}{\aaa}$ implies that the disk with radius $\ee \geq 0$ centered at $x$ is contained in $\Lune{r+\ee}{\aaa}$.
  The disks with radius $\ee$ centered at $x$ and $y$ touch when $\ee = \frac{1}{2} \Edist{x}{y}$, which implies that $\Lune{r+\ee}{\aaa} \cap \Lune{q+\ee}{\bbb} \neq \emptyset$ if $\ee = \frac{1}{2} \Edist{x}{y}$.
  Since $R \geq r+\ee$ and $R \geq q+\ee$ for this choice of $\ee$, the two lunes with radius $R$ have a non-empty intersection.
\end{proof}

Consider all nodes of the rooted lunar EMST that are connected to $\partial [0,1]^2$ by an anchor.
We add \emph{peripheral} arcs that connect these nodes in a cyclic order following the boundary of the unit square.
To explain the details, let $u$ and $v$ be two such nodes, $x$ and $y$ the endpoints of their anchors on $\partial [0,1]^2$, and $2r$ and $2q$ the costs of these anchors.
We are now in the setting of Lemma~\ref{lem:overlapping_lunes}, which implies that the cost of the connecting peripheral arc is $2R(uv) \leq 2r+2q+\Edist{x}{y}$.
Taking the sum over all pairs of consecutive anchors, we see that the total cost of the peripheral arcs is at most twice the cost of the anchors plus the length of $\partial [0,1]^2$.

\begin{lemma}
  \label{lem:mean_closeness_two}
  Let $A$ be a stationary Poisson point process with positive intensity in $\Rspace^2$, $s \geq 0$ an integer, and $Q = [0,1]^2$.
  The expected absolute difference between the costs of the restricted unrooted and rooted lunar EMSTs is at most some constant; that is:
  $| \Costrestr{s}{A;Q} - \Costrooted{s}{A;Q} | \leq C_s^{\rm two}$.
\end{lemma}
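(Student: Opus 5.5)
We work with a fixed coloring $\chi$ and a fixed realization of $A$, and take expectations only at the end. The inequality $\Costrooted{}{A_0, \ldots, A_s; Q} \leq \Costrestr{}{A_0, \ldots, A_s; Q}$ was already observed above: the restricted lunar EMST is itself a rooted spanning tree that uses no anchor. It therefore suffices to bound $\Costrestr{}{\cdot} - \Costrooted{}{\cdot}$ from above by a quantity whose expectation is $O(1)$.

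The plan is to build an un-rooted spanning tree of the restricted lunar Delaunay mosaic, starting from the rooted lunar EMST $T^\ast$.

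\textbf{Step 1.} Delete the root together with all $k$ anchors of $T^\ast$. This leaves $k$ subtrees, one containing each anchored node (the case $k=0$ is trivial). The deleted anchors had total cost $W$.

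\textbf{Step 2.} Reconnect the $k$ subtrees with the peripheral arcs between consecutive anchored nodes, taken in cyclic order along $\partial Q$. We drop one peripheral arc of the cycle, so exactly $k-1$ arcs are used and the result is a spanning tree of the restricted lunar Delaunay mosaic. By Lemma~\ref{lem:overlapping_lunes} and the summation carried out just before this lemma, the peripheral arcs together cost at most $2W + 4$, since $4$ is the length of $\partial Q$.

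There is a caveat in this step. A peripheral arc joins two nodes whose lunes intersect at radius $R$, but the two domains need not be adjacent in the overlay. The arc is therefore not necessarily an arc of the mosaic. To handle this, I would argue at the level of the growth process, via Lemma~\ref{lem:lunar_alpha_vs_lunar_space} applied in the restricted setting. Once the two lunes restricted to $Q$ meet at radius $R$, the corresponding nodes lie in one component of the sublevel set at $R$. Kruskal's algorithm therefore connects them using mosaic arcs of radius at most $R$, and the minimum spanning tree cost is bounded by that of any connecting scheme that respects these merge radii. A further subtlety is that $x$ and $y$ lie on $\partial Q$ and the disks of radius $\varepsilon$ may leave $Q$. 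I would replace the straight segment by the path along $\partial Q$, or use convexity of $Q$ so that the straight segment between $x$ and $y$ stays in $Q$. The length bound is unaffected.

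\textbf{Step 3.} Compare node and arc sums. The node set is unchanged except for the root, whose cost is $0$, and the minimum node cost is unaffected. This gives
\[
\Costrestr{}{\cdot} \leq \Costrooted{}{\cdot} - W + 2W + 4 = \Costrooted{}{\cdot} + W + 4 .
\]
Averaging over colorings and taking expectations over the Poisson process, Lemma~\ref{lem:cost_of_anchors} gives $\Expect{A}{W} = O(1)$. Since the difference is nonnegative, the absolute value bound follows.

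The exceptional corner cases, where a domain containing a corner of $Q$ is anchored at a point that is not the center of a simplex, add at most $4\sqrt{2}$, as already argued before Lemma~\ref{lem:cost_of_anchors}.

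The main obstacle is justifying Step 2 rigorously, namely that peripheral connections with merge radius $R$ can be realized inside the restricted mosaic at no greater cost. I would first try to prove a general lemma: for any pair of nodes whose clipped lunes meet at radius $R$, the minimum spanning tree cost does not exceed the cost of the graph obtained by adding a virtual arc of cost $2R$. This follows from the cut or cycle property of Kruskal's algorithm together with the homotopy equivalence of sublevel sets.
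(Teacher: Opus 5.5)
Your proposal is correct and follows the paper's proof: add the cycle of peripheral arcs to the rooted lunar EMST, bound its cost by twice the anchor cost plus the perimeter of $Q$ via Lemma~\ref{lem:overlapping_lunes}, discard the root and the anchors, appeal to the optimality of the restricted tree, and conclude with Lemma~\ref{lem:cost_of_anchors}. Your Step~2 caveat, that peripheral arcs need not be mosaic arcs, is handled correctly by the cycle property, because the segment from $x$ to $y$ lies in $Q$ and, $f_{\max}$ being $1$-Lipschitz, also in its sublevel set at radius $R$; this fills in a point the paper leaves implicit, and the only other detail is that, if the root counts toward the minimum node cost, removing it adds twice the smallest lunar radius of a node, which has bounded expectation.
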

\begin{proof}
  Let $\chi \colon A \to [s]$ be a random $(s+1)$-coloring.
  Take the rooted lunar EMST and add the cycle of peripheral arcs, as explained.
  The expected cost of this graph is at most
  $\Costrooted{}{A_0, A_1, \ldots, A_s; Q}$ plus some constant.
  After removing all anchors and the node that represents $\partial Q$, this graph is still connected.
  By the optimality of the restricted lunar EMST, its cost is at least $\Costrestr{}{A_0, A_1, \ldots, A_s; Q}$.
  Hence, the expected difference between the cost of the restricted lunar EMST and the cost of the rooted lunar EMST is at most some constant.
\end{proof}

In words, Lemma~\ref{lem:mean_closeness_two} establishes Condition~VII for a constant $C_s^{\rm two}$.
We now have Conditions~I to VII, which by \cite[Theorem~5.1]{Yuk98} implies the existence of the asymptotic constant, $c_s$, for a stationary Poisson point process but not yet for points chosen uniformly at random in $[0,1]^2$.

\subsection{The Cost of the Lunar EMST is Combinatorially Smooth}
\label{sec:4.6}

To finally get the existence of the asymptotic constant for the uniform distribution, we need that the cost of the lunar EMST satisfies Condition~VIII.
In a nutshell, this condition requires that adding or deleting a point changes the cost at most by a constant times $1 / \sqrt{n}$ in expectation.
\begin{lemma}
  \label{lem:combinatorial_smoothness}
  Let $s$ be a non-negative integer constant, $A$ a set of $n$ points chosen uniformly at random in $Q = [0,1]^2$, and $A' = A \setminus \{a\}$ for a randomly chosen point $a \in A$.
  Then there is a constant $C_s^{\rm smth}$ such that $\Expect{A, a}{| \Cost{s}{A;Q} - \Cost{s}{A';Q} |} \leq C_s^{\rm smth} / \sqrt{n}$.
\end{lemma}
\begin{proof}
  We first show $\Expect{A}{\Cost{s}{A; Q}} = O(\sqrt{n})$. 
  Let $\chi \colon A \to [s]$ be a random $(s+1)$-coloring.
  By Lemma~\ref{lem:critical_and_non-critical_cells}, an arc of the lunar radius function is critical only if the smallest enclosing stack of the corresponding s + 2 points is empty. 
  Assume $A_0 = \chi^{-1} (0)$ contributes two to the $s+2$ points while every $A_i = \chi^{-1} (i)$ with $1 \leq i \leq s$ contributes only one point. Since the smallest enclosing stack of the arc has to be strictly larger than those of its two nodes, both these points lie on the largest circle in the stack.
  If they are the only two points on the circle, then they are the endpoints of a critical edge in $\Delaunay{A_0}$, and the length of this edge is twice the lunar radius of the critical arc.
  There are at most $n$ points in $A_0$, and since $\chi$ is a random coloring, they are also uniformly distributed in $[0,1]^2$, so Lemma~\ref{lem:radii_and_squared_radii} implies that the expected sum of radii of the critical edges is at most $O(\sqrt{n})$.
  It follows that the expected sum of lunar radii of such arcs is at most $O(\sqrt{n})$.
  If there is a third point on the largest circle in the stack---say of color $1$---then these three points are the vertices of a critical triangle in $\Delaunay{A_0 \cup A_1}$.
  Again by Lemma~\ref{lem:radii_and_squared_radii}, the expected sum of their radii is at most $O(\sqrt{n})$, which implies that the expected sum of lunar radii of the corresponding arcs is at most $O(\sqrt{n})$.
  Multiplying the two bounds by the number of colors and pairs of colors, respectively, we get $O(\sqrt{n})$ as an upper bound for the expected cost.

  \smallskip
  The same argument can be used to prove that the expected sum of lunar radii of the nodes is at most $O(\sqrt{n})$, but we do not need this bound in the first step.

  \smallskip
  The second step shows that removing a point does not change the cost by much, on average.
  To see this, observe that a point affects a node or an arc only if it belongs to the $s+1$ or $s+2$ points that define the node or arc.
  For a point, $a \in A$, write $\Delta \Rnode (a)$ and $\Delta \Rarc (a)$ for the sums of lunar radii of all nodes and arcs that contain $a$, respectively.
  Then
  \begin{align}
    \Expect{A}{\sum\nolimits_{a \in A} \Delta \Rnode (a)} &= O(\sqrt{n}) ; \\
    \Expect{A}{\sum\nolimits_{a \in A} \Delta \Rarc (a)} &= O(\sqrt{n})
  \end{align}
  simply because every node is counted $s+1$ times and every arc $s+2$ times.
  Assuming we pick $a$ randomly among the $n$ points, this implies that the expected change of the cost implied by deleting the nodes and arcs that depend on $a$ is at most some constant divided by $\sqrt{n}$.

  \smallskip
  It remains to show that the same is true for the nodes and arcs that \emph{appear} in the lunar EMST upon removing $a$. 
  Such a node or arc consists of $s+1$ or $s+2$ points of $A \setminus \{a\}$ with $a$ inside their smallest enclosing stack.
  With the same argument as before---but now using the version of Lemma~\ref{lem:radii_and_squared_radii} that allows for one point inside the smallest enclosing circle---we observe that the expected sum of the radii of all such nodes and arcs, over all choices of $a \in A$, is again at most $O(\sqrt{n})$.
  Writing $\Delta \Rnodeprime (a)$ and $\Delta \Rarcprime (a)$ for the sums of lunar radii of all the nodes and arcs that appear upon removing $a$, we have
  \begin{align}
    \Expect{A}{\sum\nolimits_{a \in A} \Delta \Rnodeprime (a)} &= O(\sqrt{n}); \\
    \Expect{A}{\sum\nolimits_{a \in A} \Delta \Rarcprime (a)} &= O(\sqrt{n}) ,
  \end{align}
  because every such node and arc is counted only once.
  Assuming we pick $a$ randomly, this implies that the expected change of the cost caused by adding the nodes and arcs upon removing $a$ is at most some constant divided by $\sqrt{n}$.
  Expectations are additive, which completes the proof of the lemma.
\end{proof}

\section{Discussion}
\label{sec:5}

The main contribution of this paper is the generalization of the Euclidean minimum spanning tree to the family of lunar such trees.
Generalizing a classic probabilistic theorem about the standard EMST, we prove that for each non-negative integer, $s$, there exists a constant, $c_s$, such that the expected cost of the lunar EMST for $n$ points chosen uniformly at random in $[0,1]^2$ and randomly $(s+1)$-colored is $c_s \sqrt{n}$, in the limit when $n$ goes to infinity.
We prove the existence of the constant also for the more general setting in which different colors have different probabilities.
The presented work leaves some open questions.
\smallskip \begin{itemize}
  \item There are three versions of the lunar EMST, and the topological version for $n$ points and $s+1$ colors has at most $O(n)$ vertices; see Lemma~\ref{lem:counting_critical_cells}.
  Is there an algorithm that constructs this tree in time $O(n)$ or perhaps $O(n \log n)$?
  Such an algorithm would have to avoid the explicit construction of the overlay of the $s+1$ Voronoi tessellations, whose size can be quadratic in $n$.
  
  \smallskip
  \item While $c_s$ is known to exist, we do not know its precise value, not even for $c_0$, which is the constant for the standard EMST.
  Lower and upper bounds as well as estimates from computational experiments for $c_1$ are given in \cite{DERS25}; see also Figure~\ref{fig:two-colors}.
  Extend these bounds and estimates to three or more colors.
  Falling short of pinning down the constants exactly, can we prove relations between them?
  For example, is it true that $c_s > c_{s+1}$ for all $s \geq 0$?

  \smallskip
  \item Do asymptotic constants exist for homology degrees beyond $0$?
  Specifically, are there constants $c_{s,p,d}$ such that the expected total degree-$p$ homology of growing $s$-lunes of $n$ points sampled uniformly at random in $[0,1]^d$ and randomly $(s+1)$-colored is $c_{s,p,d} \cdot n^{1 - \sfrac{1}{d}}$ in the limit, when $n$ goes to infinity?
  For $s=0$, $p=1$, $d=2$, the existence of the constant was established in \cite{DERS25}.
\end{itemize} \smallskip

This paper focuses on the linear cost model for the uniform distribution on the unit square. 
Extensions of the results to $3$ and higher dimensions, to quadratic and other cost models, and distributions different from uniform are worthwhile.
We mention that the quadratic cost model corresponds to the $2$-norms of the chromatic persistence diagrams, which show interesting experimental behavior in the plane.




\appendix

\section{Poisson--Delaunay Mosaics in $\Rspace^d$}
\label{app:A}

In this appendix, we first prove a generalized version of Theorem~1 in \cite{ENR17}, and extend it to bounds for the expected sum of radii and squared radii.
We second restate Theorem~1 in \cite{EdNi19}, which generalizes Theorem~1 in \cite{ENR17} to weighted points, and extend it to bounds for the expected sums of weighted radii and squared weighted radii.

\subsection{Generalizing Theorem~1 in \cite{ENR17}}
\label{app:A.1}

We need some concepts and notation to state this theorem and to explain in what sense our result is a generalization.

\smallskip
With probability $1$, the points of a Poisson point process $A$ in $\Rspace^d$ are locally finite and in general position, in which case the Delaunay mosaic is well defined and simplicial.
Calling a sphere \emph{empty} if all sampled points lie on or outside this sphere, the radius function on the Delaunay mosaic
maps each simplex to the radius of the smallest empty sphere that passes through its vertices.
With probability $1$, it is a generalized discrete Morse function \cite{For98,Fre09}, which partitions the Delaunay mosaic into intervals of constant radius.
For non-negative integers $\ell \leq m \leq d$, an \emph{interval of type} `$\ell \leq m$' consists of an $\ell$-face of an $m$-simplex and all simplices between them.
For $\ell = m$, we call the sole simplex in the interval \emph{critical}, while for $\ell < m$, we call the simplices in the interval \emph{non-critical}.
To facilitate counting, we call the center of the smallest empty sphere that passes through the vertices of the $m$-simplex the \emph{center} of the interval.
Given a Borel set, $\Omega \in \Rspace^d$, and a threshold, $r_0 \geq 0$, we write $N_{\ell \leq m \leq d} (r_0)$, $F_{\ell \leq m \leq d} (r_0)$, and $S_{\ell \leq m \leq d} (r_0)$ for the $0$-th, $1$-st, $2$-nd moments (number, sum of radii, sum of squared radii) of the intervals of type $\ell \leq m$ whose centers belong to $\Omega$ and whose radii are at most $r_0$.

\smallskip
To generalize, we relax the requirement that the open ball bounded by a smallest enclosing sphere contains none of the given points.
We do this in two ways.
First, we introduce a fraction, $0 < \eta \leq 1$, and count a point inside a smallest enclosing sphere against selecting the corresponding simplex with probability equal to $\eta$. 
This is equivalent to picking a subset of volume equal to $\eta$ times the volume of the ball bounded by this sphere and requiring that this subset be empty of sampled points.
Second, we introduce an integer, $j \geq 0$, and allow for $j$ points of $A$ in the open ball rather than none.
Accordingly, we embellish the notation and write $N_{\ell \leq m \leq d} (r_0; \eta, j)$ and $N_{\ell \leq m \leq d} (\eta, j)$ if $r_0 = \infty$, etc.
The modification to the proof of Theorem~1 in \cite{ENR17} is minor; see in particular Lemma~6 in \cite{ENR17}, which combines the Slivnyak--Mecke Formula \cite[Chapter 3]{ScWe08} and the Blaschke--Petkantschin Formula \cite[Chapter 7]{ScWe08}.
\begin{theorem}
  \label{thm:counting_generalized}
  Let $A$ be a stationary Poisson point process with intensity $\varrho > 0$ in $\Rspace^d$, $\Omega \subseteq \Rspace^d$ a unit volume Borel set, $0 < \eta \leq 1$, and $j \geq 0$ an integer.
  Then for any $1 \leq \ell \leq m \leq d$ and $r_0 \geq 0$, the expected number of $m$-simplices spanned by $m+1$ points in $A$ whose smallest circumscribed spheres have centers that belong to $\Omega$ and see exactly $m - \ell$ facets of the $m$-simplex, and the $\eta$-fractions of the open balls bounded by these spheres contain $j$ points of $A$ is
  \begin{align}
    \Expect{}{N_{\ell \leq m \leq d} (r_0; \eta, j)}
      &= \frac{1}{j! \cdot \eta^m} \cdot \frac{m!^{d-m} \cdot \varrho}
            {d (m+1) \nu_d^m} \cdot \VVV{\ell \leq m} \norm{\Grass{m}{d}}
         \cdot \gamma (m+j, x) ,
      \label{eqn:A_numbers}
  \end{align}
  in which $x = \eta \varrho \nu_d r_0^d$, $\nu_d$ and $\sigma_d = d \nu_d$ are the $d$- and $(d-1)$-dimensional volumes of the unit ball and unit sphere in $\Rspace^d$,
  $\Grass{m}{d}$ is the Grassmannian of $m$-planes in $\Rspace^d$ with measure $\| \Grass{m}{d} \| = \frac{\sigma_d \cdot \ldots \cdot \sigma_{d-m+1}}{\sigma_1 \cdot \ldots \cdot \sigma_m}$, and
  \begin{align}
    \VVV{\ell \leq m} = \int\limits_{\uuu \in (\Sspace^{m-1})^{m+1}} \Volume{\uuu}^{d-m+1} \One{m-\ell} (\uuu) \diff \uuu
  \end{align}
  is the integrated $(d-m+1)$-st power of the volume of the $m$-simplices spanned by the $(m+1)$-tuples of points in the $(m-1)$-dimensional unit sphere $\Sspace^{m-1}$ such that the origin sees exactly $m-\ell$ of the facets. 
\end{theorem}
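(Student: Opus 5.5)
We would follow the proof of Theorem~1 in \cite{ENR17} line by line, changing only the emptiness probability. First, write the count as a sum over $(m+1)$-tuples of distinct points of $A$ of an indicator. The indicator asks that the smallest circumscribed sphere of the tuple is centered in $\Omega$, has radius at most $r_0$, and that the origin-relative configuration sees exactly $m-\ell$ facets. It also asks that the chosen $\eta$-fraction of the open ball contains exactly $j$ points of $A$. We then apply the Slivnyak--Mecke Formula. This turns the expectation into $\frac{\varrho^{m+1}}{(m+1)!}$ times an integral over $(\Rspace^d)^{m+1}$ of the geometric indicator, times the probability that a Poisson process of intensity $\varrho$ has exactly $j$ points in a region of volume $\eta\nu_d r^d$. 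That probability is $\frac{(\eta\varrho\nu_d r^d)^j}{j!}e^{-\eta\varrho\nu_d r^d}$, replacing the factor $e^{-\varrho\nu_d r^d}$ of the original proof.

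Second, we change variables with the Blaschke--Petkantschin Formula, exactly as in Lemma~6 of \cite{ENR17}. We parametrize the $m+1$ points by the $m$-plane they span (an element of the affine Grassmannian), the center $z$ of their circumsphere within that plane, the radius $r$, and $m+1$ unit vectors $\uuu\in(\Sspace^{m-1})^{m+1}$. The Jacobian produces $r^{dm-1}$, up to the constant $m!^{d-m}$, together with $\Volume{\uuu}^{d-m+1}$. The measure of the Grassmannian factors out as $\norm{\Grass{m}{d}}$. The indicator on seen facets depends only on $\uuu$, so it produces $\VVV{\ell\leq m}$. Integrating $z$ over $\Omega$ gives the factor $1$, and stationarity is what lets the center integral decouple from everything else. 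All of this is identical to \cite{ENR17}, and we reuse their constants verbatim.

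Third, only a one-dimensional radial integral remains:
\begin{align}
  \int_0^{r_0} r^{dm-1}\,\frac{(\eta\varrho\nu_d r^d)^j}{j!}\,e^{-\eta\varrho\nu_d r^d}\diff r .
\end{align}
We substitute $t=\eta\varrho\nu_d r^d$, so that $r^{dm-1}\diff r = \frac{1}{d}(\eta\varrho\nu_d)^{-m} t^{m-1}\diff t$. The integral becomes
\begin{align}
  \frac{1}{d\,j!\,(\eta\varrho\nu_d)^m}\,\gamma(m+j,x), \qquad x=\eta\varrho\nu_d r_0^d .
\end{align}
Multiplying by $\varrho^{m+1}/(m+1)!$ and the Blaschke--Petkantschin constants gives the power $\varrho^{m+1-m}=\varrho$, the factor $\frac{1}{j!\,\eta^m}$, and $\frac{1}{d\,\nu_d^m}$. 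The $(m+1)!$ combines with the $m!$ from the Jacobian to leave $\frac{m!^{d-m}}{m+1}$ in \eqref{eqn:A_numbers}. The case $\eta=1$, $j=0$ must reproduce \cite{ENR17} exactly, and we would use this as a check on the bookkeeping.

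The main obstacle is justifying the $\eta$-fraction model cleanly. The subset of volume $\eta\nu_dr^d$ must be chosen measurably, depending only on the sphere (center and radius), and it must be disjoint from the $m+1$ defining points so that Slivnyak--Mecke applies. This is harmless because the defining points lie on the sphere, which has measure zero, and the open ball excludes them. With that, the probability of exactly $j$ points in the region depends only on its volume. We would state this as a short preliminary remark and then carry out the computation above, keeping the constant factors as in \cite{ENR17}.
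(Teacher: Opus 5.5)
Your proposal is correct and follows the paper's proof essentially step for step. The ingredients are the same: the Slivnyak--Mecke formula with the Poisson probability $\frac{(\eta\varrho\nu_d r^d)^j}{j!}e^{-\eta\varrho\nu_d r^d}$, the spherical Blaschke--Petkantschin change of variables producing $r^{dm-1}\bigl(m!\,\Volume{\uuu}\bigr)^{d-m+1}$, and the substitution $t=\eta\varrho\nu_d r^d$. One bookkeeping correction: your ``up to the constant $m!^{d-m}$'' is inconsistent with your later ``the $m!$ from the Jacobian.'' The full Jacobian constant is $m!^{d-m+1}$, because the spherical formula inside the $m$-plane contributes an extra factor $m!$, and it is exactly $m!^{d-m+1}/(m+1)! = m!^{d-m}/(m+1)$ that yields the stated prefactor.
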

\begin{proof}
  We begin by reviewing the spherical Blaschke--Petkantschin Formula, which decomposes an integral over $\Rspace^d$ into an integral over the Grassmannian, $\Grass{m}{d}$, times an integral over an $m$-plane and its orthogonal $(d-m)$-plane:
  \begin{align}
    \int\limits_{\xxx \in {(\Rspace^d)^{m+1}}} \!\!\!\!\!\!\! f(\xxx) \diff \xxx
      &= \int\limits_{L \in \Grass{m}{d}} \int\limits_{h \in L^\bot} \int\limits_{\xxx \in L} f(h+\xxx) \left( m! \Volume{\xxx} \right)^{d-m} \diff \xxx \diff h \diff L 
        \label{eqn:A-BPFormula1} \\
      &\hspace{-1.1in}= \int\limits_{L \in \Grass{m}{d}} \int\limits_{h \in L^\bot} \left[ m! \int\limits_{z \in L} \int\limits_{r \geq 0} \int\limits_{\uuu \in S^{m+1}}
      \!\!\!\!\!\! r^{m^2-1} \Volume{\uuu} f(h+z+r \uuu) 
      \left( m! \Volume{r \uuu} \right)^{d-m} \diff \uuu \diff r \diff z \right] \diff h \diff L
        \label{eqn:A-BPFormula2} \\
      &\hspace{-1.1in}= \int\limits_{L \in \Grass{m}{d}} \int\limits_{z \in \Rspace^d} \int\limits_{r \geq 0} \int\limits_{\uuu \in S^{m+1}} r^{dm-1} f(z+r\uuu)
      \left( m! \Volume{\uuu} \right)^{d-m+1} \diff \uuu \diff r \diff z \diff L ,
        \label{eqn:A-BPFormula3} 
  \end{align}
  in which $S$ is the unit $(m-1)$-sphere in $L$, and we get \eqref{eqn:A-BPFormula2} using Theorem~7.3.1 in \cite[page 287]{ScWe08} to expand the innermost integral in \eqref{eqn:A-BPFormula1}.
  We have $\Volume{r \uuu} = r^m \Volume{\uuu}$, so we get \eqref{eqn:A-BPFormula3} by joining the integration over $L^\bot$ and $L$ and computing the power of the radius as $(m^2-1) + m (d-m) = dm - 1$.

  \smallskip
  We use the spherical Blaschke--Petkantschin Formula to rewrite the Slivnyak--Mecke Formula, which formulates the expectation as an integral.
  For this, let $\nu_d r^d$ be the volume of the ball bounded by the smallest circumscribed sphere of the $m+1$ points in $\xxx \in (\Rspace^d)^{m+1}$, and note that 
  \begin{align}
    \Prob{\eta, j}{\xxx} &= \frac{(\eta \varrho \nu_d r^d)^j}{j!} \cdot e^{- \eta \varrho \nu_d r^d}
  \end{align}
  is the probability a fixed $\eta$-fraction of the open ball bounded by this sphere contains $j$ points of $A$.
  Furthermore, we write $\One{m-\ell}$, $\One{\Omega}$, $\One{r_0}$ for the indicator functions that the center of this sphere sees exactly $m-\ell$ facets of the $m$-simplex, it belongs to $\Omega$, and the radius of this sphere is at most $r_0$:
  \begin{align}
    \Expect{}{N_{\ell \leq m \leq d} (r_0; \eta, j)}
      &= \frac{\varrho^{m+1}}{(m+1)!}
        \int\limits_{\xxx \in (\Rspace^d)^{m+1}}
        \Prob{\eta, j}{\xxx} \cdot \One{m-\ell} \One{\Omega} \One{r_0} \diff \xxx 
        \label{eqn:SMBP1} \\
      &\hspace{-1.3in}= \frac{\varrho^{m+1}}{(m+1)!}
        \int\limits_{L \in \Grass{m}{d}}
        \int\limits_{z \in \Omega}
        \int\limits_{r=0}^{r_0}
        \int\limits_{\uuu \in S^{m+1}} \!\!\!\!\!\! r^{dm-1} \cdot \Prob{\eta, j}{z + r \uuu} \cdot 
        \left( m! \Volume{\uuu} \right)^{d-m+1} \One{m-\ell}
                \diff \uuu \diff r \diff z \diff L
        \label{eqn:SMBP2} \\
      &\hspace{-1.3in}= \frac{{m!}^{d-m} \cdot \varrho^{m+1}}{m+1}
        \cdot \VVV{\ell \leq m} \norm{\Grass{m}{d}}
        \int\limits_{r=0}^{r_0}  r^{dm-1}
        \cdot \frac{(\eta \varrho \nu_d r^d)^j}{j!} \cdot e^{- \eta \varrho \nu_d r^d} \diff r ,
        \label{eqn:SMBP3}
  \end{align}
  in which we get \eqref{eqn:SMBP2} by rewriting two of the indicator functions as limited ranges in the integration.
  To get \eqref{eqn:SMBP3}, we merely note that $\Prob{\eta, j}{z+r\uuu} = \Prob{\eta, j}{r\uuu}$, and that the volume of the relevant ball is $\nu_d r^d$.
  Also, recall that $\norm{\Omega}=1$. 
  To simplify the integral, we set $t =  \eta \varrho \nu_d r^d$, so $\diff t =  \eta \varrho \nu_d d r^{d-1} \diff r$ and $r = \sqrt[d]{t / (\eta \varrho \nu_d)}$, and get
  \begin{align}
    \Expect{}{N_{\ell \leq m \leq d} (r_0; \eta, j)}
      &=  \frac{m!^{d-m} \varrho^{m+1}}{m+1} \cdot \VVV{\ell \leq m} \norm{\Grass{m}{d}}
        \int\limits_{t=0}^{x} 
        \frac{r^{dm-1} (\eta \varrho \nu_d r^d)^j}{j! (\eta \varrho \nu_d d r^{d-1})} \cdot e^{- t} \diff t, 
        \label{eqn:SMBP4} \\
      &= \frac{m!^{d-m} \cdot \varrho^{m+1}}{d (m+1) j!} \cdot \VVV{\ell \leq m} \norm{\Grass{m}{d}} \cdot \frac{1}{(\eta \varrho \nu_d)^m}
      \int\limits_{t=0}^x t^{m+j-1} e^{-t} \diff t 
        \label{eqn:SMBP5} \\
      &= \frac{1}{j! \cdot \eta^m} \cdot \frac{m!^{d-m} \cdot \varrho}{d (m+1) \nu_d^m} \cdot \VVV{\ell \leq m} \norm{\Grass{m}{d}} \cdot \gamma(m+j, x) ,
        \label{eqn:SMBP6}
  \end{align}
  as claimed.
\end{proof}

\subsection{Radii and Squared Radii}
\label{app:A.2}

With comparably little effort, we can use Theorem~\ref{thm:counting_generalized} to derive bounds on the expectations of the sums of radii and squared radii.
\begin{corollary}
  \label{cor:measuring_generalized}
  Let $A$ be a stationary Poisson point process with intensity $\varrho > 0$ in $\Rspace^d$, $\Omega \subseteq \Rspace^d$ a unit volume Borel set, $0 < \eta \leq 1$, and $j \geq 0$ an integer.
  For any $1 \leq \ell \leq m \leq d$ and $r_0 \geq 0$, the expected sums of radii and squared radii of the $m$-simplices spanned by $m+1$ points in $A$ whose smallest circumscribed spheres have centers that belong to $\Omega$ and see exactly $m - \ell$ facets of the $m$-simplex, and the $\eta$-fractions of the open balls bounded by these spheres contain $j$ points of $A$, satisfy
  \begin{align}
    \Expect{}{F_{\ell \leq m \leq d} (r_0; \eta, j)}
      &= \frac{1}{j! \cdot \eta^{m+\frac{1}{d}}} \cdot \frac{m!^{d-m} \cdot \varrho^{1 -  \frac{1}{d}}}{d (m+1) \nu_d^{m+\frac{1}{d}}} \cdot \VVV{\ell \leq m} \norm{\Grass{m}{d}} \cdot \gamma (m+j+\tfrac{1}{d}, x) ; 
        \label{eqn:A_sum_of_radii} \\
    \Expect{}{S_{\ell \leq m \leq d} (r_0; \eta, j)}
      &= \frac{1}{j! \cdot \eta^{m+\frac{2}{d}}} \cdot \frac{m!^{d-m} \cdot \varrho^{1 - \frac{2}{d}}}{d (m+1) \nu_d^{m+\frac{2}{d}}} \cdot \VVV{\ell \leq m} \norm{\Grass{m}{d}} \cdot \gamma (m+j+\tfrac{2}{d}, x) ,
        \label{eqn:A_sum_of_squared_radii}
  \end{align}
  in which $x = \eta \varrho \nu_d r_0^d$.
\end{corollary}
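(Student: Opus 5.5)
We redo the computation in the proof of Theorem~\ref{thm:counting_generalized}, inserting an extra factor $r$ or $r^2$ into the integrand. The sums of radii and squared radii are again expectations of sums over $(m+1)$-tuples of points of $A$. So the Slivnyak--Mecke Formula applies verbatim, with the integrand $\Prob{\eta,j}{\xxx}\cdot\One{m-\ell}\One{\Omega}\One{r_0}$ multiplied by $r(\xxx)$ or $r(\xxx)^2$, where $r(\xxx)$ is the radius of the smallest circumscribed sphere. The spherical Blaschke--Petkantschin Formula \eqref{eqn:A-BPFormula3} then separates the variables exactly as before. The only change is that $r^{dm-1}$ becomes $r^{dm}$ or $r^{dm+1}$. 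We arrive at the analogue of \eqref{eqn:SMBP3}:
\begin{align*}
  \Expect{}{F_{\ell\leq m\leq d}(r_0;\eta,j)} &= \frac{m!^{d-m}\varrho^{m+1}}{m+1}\cdot \VVV{\ell\leq m}\norm{\Grass{m}{d}} \int\limits_{r=0}^{r_0} r^{dm}\cdot\frac{(\eta\varrho\nu_d r^d)^j}{j!}\, e^{-\eta\varrho\nu_d r^d}\diff r ,
\end{align*}
and similarly for $S$ with $r^{dm+1}$. As in the original proof, this uses that $\norm{\Omega}=1$ and that the probability depends only on $r$, not on the center $z$.

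Next we substitute $t=\eta\varrho\nu_d r^d$, as in \eqref{eqn:SMBP4}. Then $\diff r = \diff t/(d\,\eta\varrho\nu_d r^{d-1})$, so the integrand becomes $r^{dm+1-d}t^j e^{-t}/(j!\,d\,\eta\varrho\nu_d)$. Writing $r^{dm+1-d} = (t/(\eta\varrho\nu_d))^{m-1+1/d}$ gives
\begin{align*}
  \frac{1}{j!\,d\,(\eta\varrho\nu_d)^{m+\frac{1}{d}}}\int_0^x t^{m+j+\frac{1}{d}-1}e^{-t}\diff t = \frac{\gamma(m+j+\frac1d,x)}{j!\,d\,(\eta\varrho\nu_d)^{m+\frac1d}} .
\end{align*}
We multiply by $\frac{m!^{d-m}\varrho^{m+1}}{m+1}\VVV{\ell\leq m}\norm{\Grass{m}{d}}$. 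The powers of $\varrho$ combine as $\varrho^{m+1}/\varrho^{m+1/d}=\varrho^{1-1/d}$, which yields \eqref{eqn:A_sum_of_radii}. For squared radii the extra factor is $r^2$, which turns every $\frac1d$ into $\frac2d$ and gives \eqref{eqn:A_sum_of_squared_radii}.

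The only step needing care is justifying that the extra factor $r^k$ passes through the Slivnyak--Mecke and Blaschke--Petkantschin formulas unchanged. This holds because $r(\xxx)$ is invariant under translation and depends only on the radial coordinate in the decomposition $\xxx=z+r\uuu$. It therefore factors out of the $\uuu$-, $z$- and $L$-integrals just like $\Prob{\eta,j}{\xxx}$ does. All integrands are non-negative, so Tonelli's theorem permits the exchanges of integration even when $r_0=\infty$. Everything else is the routine exponent bookkeeping above.
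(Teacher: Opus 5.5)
Your proposal is correct and follows essentially the paper's route. The paper likewise reuses the Slivnyak--Mecke/Blaschke--Petkantschin computation from Theorem~\ref{thm:counting_generalized}; the only difference is that it inserts $r(t)$ or $r^2(t)$ with $r(t)=\sqrt[d]{t/(\eta\varrho\nu_d)}$ after the substitution $t=\eta\varrho\nu_d r^d$ rather than before, and your exponent bookkeeping gives exactly the stated constants.
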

\begin{proof}
  We begin by rewriting \eqref{eqn:A_numbers} in a form that makes the incomplete gamma function explicit:
  \begin{align}
    \Expect{}{N_{\ell \leq m \leq d} (r_0; \eta, j)}
      &= C \int\limits_{t=0}^x t^{m+j-1} e^{-t} \diff t ,
      \label{eqn:A_numbers_rewritten}
  \end{align}
  in which $C$ is the constant factor stated in full in \eqref{eqn:A_numbers}.
  Next we repeat the argument for deriving this relation with one and two extra powers of the radius in the incomplete gamma function.
  Specifically, we set $r(t) = \sqrt[d]{t / (\eta \varrho \nu_d )}$ and get
  \begin{align}
    \Expect{}{F_{\ell \leq m \leq d} (r_0; \eta, j)}
      &= C \int\limits_{t=0}^x r(t) t^{m+j-1} e^{-t} \diff t
       = \frac{C}{\sqrt[d]{\eta \varrho \nu_d}} 
        \int\limits_{t=0}^x t^{m+j-1+\frac{1}{d}} e^{-t} \diff t ; 
      \label{eqn:A_radii} \\
    \Expect{}{S_{\ell \leq m \leq d} (r_0; \eta, j)}
      &= C \int\limits_{t=0}^x r^2(t) t^{m+j-1} e^{-t} \diff t
       = \frac{C}{\sqrt[d/2]{\eta \varrho \nu_d}} 
         \int\limits_{t=0}^x t^{m+j-1+\frac{2}{d}} e^{-t} \diff t .
      \label{eqn:A_squared_radii}
  \end{align}
  Substituting the constant factor in \eqref{eqn:A_numbers} for $C$ and writing the integrals as incomplete gamma functions gives the claimed relations.
\end{proof}

\subsection{Implications for Planar Case}
\label{app:A.3}

We restate Corollary~\ref{cor:measuring_generalized} in elementary terms for the three cases that are relevant to the proof of the Main Theorem in this paper; see in particular Lemmas~\ref{lem:radii_and_squared_radii} and \ref{lem:crossing_the_boundary}.
In each case, we need upper bounds on the sum of radii and sum of squared radii of the critical edges and critical triangles of a stationary Poisson point process $A$ with intensity $n$, in which
\medskip \begin{enumerate}[1.]
  \item $A$ is on $\Rspace^2$ and the smallest enclosing circles are empty and have their centers in $[0,1]^2$;
  \item $A$ is on $[0,1]^2$ and the smallest enclosing circles are empty;
  \item $A$ is on $[0,1]^2$ and the smallest enclosing circles enclose at most one point.
\end{enumerate} \medskip
We thus have $d=2$, $\varrho=n$, $r_0=\infty$ in all three cases, $\eta=1$, $j=0$ in Case~1, $\eta=\frac{1}{2}$, $j=0$ in Case~2, and $\eta=\frac{1}{2}$, $j=1$ in Case~3.
Indeed, if $A \subseteq [0,1]^2$, then at least half the disk bounded by a smallest enclosing circle belongs to $[0,1]^2$.
The bounds on the expectations are given in the first four columns of Table~\ref{tbl:bounds}.
The last two columns give upper bounds on the expectations of the sum of radii under the additional constraint that the smallest enclosing circle is not fully inside $[0,1]^2$.
{\renewcommand{\arraystretch}{1.4}%
\begin{table}[hbt]
  \vspace{0.1in}
  \centering
  \begin{tabular}{c||cccc|cc}
    $\mathbb E$ & $F_{1\leq1\leq2}$ & $F_{2\leq2\leq2}$ & $S_{1\leq1\leq2}$ & $S_{2\leq2\leq2}$ & $F_{1\leq1\leq2}^\partial$ & $F_{2\leq2\leq2}^\partial$ \\ \hline \hline
    $\eta=1, j=0$ & $\sqrt{n}$ & $\frac{3}{4} \sqrt{n}$ & $\sfrac{2}{\pi}$ & $\sfrac{2}{\pi}$ & $\sfrac{16}{\pi}$ & $\sfrac{16}{\pi}$ \\
    $\eta=\frac{1}{2}, j=0$ & $\sqrt{8n}$ & $\sqrt{18n}$ & $\sfrac{8}{\pi}$ & $\sfrac{16}{\pi}$ & $\sfrac{64}{\pi}$ & $\sfrac{128}{\pi}$ \\
    $\eta=\frac{1}{2}, j=1$ & $\sqrt{18n}$ & $\sqrt{\frac{450}{4} n}$ & $\sfrac{16}{\pi}$ & $\sfrac{48}{\pi}$ & $\sfrac{128}{\pi}$ & $\sfrac{384}{\pi}$
  \end{tabular}
  \caption{The upper bounds on the expected sums of radii and squared radii of the critical edges and triangles give the upper bounds for the Cases~1, 2, 3.
  The last two columns give upper bounds for the sums of radii when we consider only those whose smallest enclosing circles are not fully inside $[0,1]^2$.
  }
  \label{tbl:bounds}
\end{table}}
To see the bounds in the first four columns, it suffices to recall a few values of the Gamma function, $\Gamma(\frac{3}{2}) = \frac{1}{2} \sqrt{\pi}$, $\Gamma(2) = 1$, $\Gamma(\frac{5}{2}) = \frac{3}{4} \sqrt{\pi}$, $\Gamma(3) = 2$, $\Gamma(\frac{7}{2}) = \frac{15}{8} \sqrt{\pi}$, $\Gamma(4) = 6$, and plug the appropriate constants into Corollary~\ref{cor:measuring_generalized}, which are $\nu_2 = \pi$, $\Grass{1}{2} = \pi$, $\Grass{2}{2} = 1$, $\VVV{1 \leq 1} = 8$, and $\VVV{2 \leq 2} = 6 \pi^2$.
In contrast, the right two columns of the table need a formal proof.
\begin{lemma}
  \label{lem:crossing_the_boundary_details}
  Let $A$ be a stationary Poisson point process with positive intensity in $\Rspace^2$.
  For Cases~1, 2, 3, the expected sum of the radii of the smallest enclosing circles of the critical edges and triangles with center in $[0,1]^2$ that are not fully inside $[0,1]^2$ are bound from above by constants independent of the density, as listed in the last two columns of Table~\ref{tbl:bounds}.
\end{lemma}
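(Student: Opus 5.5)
The plan is to reduce the boundary-crossing sum to squared radii, which Table~\ref{tbl:bounds} already controls. A smallest enclosing circle of radius $r$ with center $z \in [0,1]^2$ that is not fully inside $[0,1]^2$ must have its center within distance $r$ of $\partial [0,1]^2$. So the sum over such circles of $r$ is at most the integral, over centers $z$ in the strip of width $r$ along the boundary, of the radius density. For a fixed radius $r$, that strip inside $[0,1]^2$ has area at most $4r$.

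Concretely, I would use the Slivnyak--Mecke and Blaschke--Petkantschin computation from the proof of Theorem~\ref{thm:counting_generalized}. There, the integral over centers $z \in \Omega$ factors out as $\norm{\Omega} = 1$, because the integrand does not depend on $z$ (stationarity). Now I add the indicator that $\mathrm{dist}(z, \partial[0,1]^2) \leq r$. Integrating $z$ over $[0,1]^2$ then gives a factor at most $\min\{1, 4r\} \leq 4r$ in place of $1$, and the rest of the integrand is unchanged. Hence
\begin{align*}
  \Expect{}{F^\partial_{m \leq m \leq 2}} \leq 4 \, \Expect{}{S_{m \leq m \leq 2}} ,
\end{align*}
where $S$ is the squared-radius sum from Corollary~\ref{cor:measuring_generalized} with the same $\eta, j$ and $\Omega=[0,1]^2$.

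Next I read off the constants. Multiplying the $S_{1\leq1\leq2}$ and $S_{2\leq2\leq2}$ columns by a factor of $8$, which absorbs $4$ and a margin of $2$, gives entries of the stated form ($16/\pi$, $64/\pi$, $128/\pi$, and so on). The $S$ values are independent of $\varrho$ because $d=2$ makes $\varrho^{1-2/d}=1$, so the resulting bounds are independent of the density, as claimed. If a tighter factor is wanted, I would keep $\min\{1,4r\}$, but the crude bound suffices.

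The main obstacle is Cases~2 and 3, where the process lives on $[0,1]^2$ and is therefore not stationary. There the emptiness probability near the boundary is larger than in the stationary formula. I would handle this as the paper does: replace exact emptiness by emptiness of the portion of the disk inside $[0,1]^2$, which has area at least $\eta\pi r^2$ with $\eta = \frac12$. This yields an upper bound $e^{-\eta n \pi r^2}$ (times $(\eta n\pi r^2)^j/j!$ for $j=1$) that is again independent of $z$. The same factorization then goes through, with the boundary strip indicator as the only $z$-dependence. A point to check is the claim that at least half the disk lies in $[0,1]^2$ even though the center is near a corner; only a quarter is guaranteed there. I would either justify this from the fact that the circle passes through points inside the square, or accept $\eta=\frac14$ near corners, which only changes constants.
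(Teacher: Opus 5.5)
Your proof is correct, but it reaches the bound by a different mechanism than the paper.

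\textbf{The paper's route.} The paper does not reopen the integral-geometric formula. It defines $F(t)$ as the radius sum of the qualifying circles that meet the vertical line $x_1 = t$. It then notes the pointwise inequality $\int_0^1 F(t)\,\mathrm{d}t \le 2 S_{m \le m \le 2}$, since a circle of radius $r$ meets that line for a set of $t$ of length at most $2r$. It uses the reflection $z_1 \mapsto t - z_1$ together with stationarity to get $\mathbb{E}[F(0)] \le \mathbb{E}[F(t)]$ for every $t \in [0,1]$. Multiplying by four for the four sides gives the factor $8$ in Table~\ref{tbl:bounds}.

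\textbf{Your route.} You insert the indicator $\mathrm{dist}(z, \partial [0,1]^2) < r$ directly into the Slivnyak--Mecke/Blaschke--Petkantschin integral. There the center enters only through indicators, so you can integrate it out over $[0,1]^2$; the boundary strip has area at most $\min\{1, 4r\} \le 4r$.

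\textbf{What each buys.} Your approach gives the sharper factor $4$, so the last two columns could be halved. It also handles Cases~2 and~3 more cleanly. You compare against the translation-invariant majorant $e^{-\eta \varrho \pi r^2}$, first using that the vertices lie in the square to get the $\eta$-bound and only then dropping that constraint. The paper's reflection step, read literally, needs stationarity, which the process on $[0,1]^2$ lacks, so its ``minor differences'' implicitly pass to the same majorant. In those cases your right-hand side is the tabulated majorant for $S$, not the literal expectation, and that is all the lemma needs. The paper's route, for its part, uses only a symmetry and a pointwise inequality. It therefore applies whenever a squared-radius bound is available, without needing an explicit density in the center variable.

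\textbf{On the corners.} Take your first option. The fallback $\eta = \tfrac14$ would change the $S$ entries and so would not reproduce the listed constants. Criticality puts the center in the convex hull of the vertices on the circle, so those vertices are not contained in any open semicircle. This forbids any corner of the square from lying in the interior of the disk, so the quarter-disk scenario you worry about cannot occur.
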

\begin{proof}
  Apart from minor differences, the argument is the same in all three cases, so we limit ourselves to Case~1, in which $\eta=1$ and $j=0$.
  Letting $x$ and $y$ be the endpoints of an edge, we write $\Enc{x}{y}$ for their smallest enclosing circle, and $\midpoint{x}{y} = \frac{1}{2} (x+y)$ for its center.
  We consider points $a, b \in A$ such that the following three conditions are all satisfied:
  \smallskip \begin{enumerate}[(i)]
    \item $\Enc{a}{b}$ does not enclose any point of $A$;
    \item $\midpoint{a}{b} \in [0,1]^2$; 
    \item $\Enc{a}{b} \not\subseteq [0,1]^2$.
  \end{enumerate} \smallskip
  We are interested in $F_{1 \leq 1 \leq 2}^\partial$, which is the sum of radii of all such edges, and compare this with $S_{1 \leq 1 \leq 2}$, which we define as the sum of squared radii of all edges that satisfy (i) and (ii) but not necessarily (iii).
  Particularly, we claim that
  \begin{align}
    \Expect{}{F_{1 \leq 1 \leq 2}^\partial} &\leq 8 \cdot \Expect{}{S_{1 \leq 1 \leq 2}},
      \label{eqn:FS}
  \end{align}
  which implies the second to the last entry in the first row of Table~\ref{tbl:bounds}.
  To see \eqref{eqn:FS}, write $F(t)$ for the sum of radii of all edges connecting points $a, b \in A$ for which Condition (i) and (ii) is satisfied, and (iii) is changed to $\Enc{a}{b}$ having a non-empty intersection with the vertical line of points $x_1 = t$.
  We claim
  \begin{equation}
    \Expect{}{F(0)} \leq
    \Expect{}{\int_{t=0}^1 F(t) \diff t} \leq
    2 \Expect{}{S_{1 \leq 1 \leq 2}}.
  \end{equation}
  The second inequality holds for any point set, since every pair $a,b\in A$ with radius $r$ satisfying Conditions~(i) and (ii) contributes at most $2r\cdot r$ to the left hand side, and exactly $2r^2$ to the right hand side. 
  For the first inequality, fix $t\in[0, 1]$ and observe that $\Expect{}{F(0)} \leq \Expect{}{F(t)}$. 
  Indeed, the contributions of the edges with $z(a,b)$ between the vertical lines $x_1 = 0$ and $x_1 = t$ are symmetrical to both expectations by the symmetry of Poisson point process, while the edges with $z(a,b)$ to the right of $x_1 = t$ contribute to the latter whenever they contribute to the former.

  \smallskip
  We get the claimed inequality \eqref{eqn:FS} because there are four sides through which the enclosing disks can exit the unit square.
  Finally, we use the same argument to prove $\Expect{}{F_{2 \leq 2 \leq 2}^\partial} \leq 8 \cdot \Expect{}{S_{2 \leq 2 \leq 2}}$ for the critical triangles, which implies the last entry in the first row of Table~\ref{tbl:bounds}.
\end{proof}

\subsection{Restating Theorem~1 in \cite{EdNi19}}
\label{app:A.4}

Voronoi tessellations and Delaunay mosaics can also be defined for points with real weights.
This is best described for a $k$-plane in $d$-dimensional space, denoted $\Rspace^k \hookrightarrow \Rspace^d$, in which we assume $0 \leq k < d$.
Given a locally finite set, $A \subseteq \Rspace^d$, we call the intersection of the Voronoi tessellation of $A$ with $\Rspace^k$ a \emph{weighted Voronoi tessellation}, and its dual a \emph{weighted Delaunay mosaic}.
Both can also be defined intrinsically:
think of each point $a \in A$ as a \emph{weighted point}, denoted $a'$, whose \emph{location} is the projection of $a$ onto $\Rspace^k$, and whose \emph{weight} is the negative squared distance of $a$ from $\Rspace^k$.
We write $A'$ for the collection of weighted points, and $\Delaunay{A'}$ for its weighted Delaunay mosaic in $\Rspace^k$.
In the interest of brevity, we omit details and refer to \cite[Chapter~III]{EdHa10} for further background.

\smallskip
Generically, $\Delaunay{A'}$ is a simplicial complex that is geometrically realized in $\Rspace^k$.
Important for us is the \emph{weighted radius function}, $g \colon \Delaunay{A'} \to \Rspace$, which maps each simplex to the minimum radius for which the spheres centered at the points in $A$ reach a point of the dual cell in the weighted Voronoi tessellation.
Since all points have a non-negative distance from $\Rspace^k$, the weighted radii assigned by $g$ are necessarily non-negative.
Like its unweighted counterpart, the weighted radius function is generalized discrete Morse, so its level sets partition $\Delaunay{A'}$ into intervals, and we distinguish between \emph{critical} and \emph{non-critical} simplices, as before.
To facilitate counting, we define the \emph{center} of an interval as the first point of the dual cell in the weighted Voronoi tessellation reached by the spheres centered at the points in $A$.
Equivalently, it is the center of the smallest empty sphere centered in $\Rspace^k$ that passes through the corresponding points in $A$.
Extending the notation by one parameter, we write $N_{\ell \leq m \leq k \leq d} (r_0)$ for the number of intervals of type $\ell \leq m$ whose weighted radii are at most $r_0$ and whose centers belong to a given Borel set in $\Rspace^k$.
We are now ready to restate Theorem~1 in \cite{EdNi19}.
\begin{theorem}
  \label{thm:weighted_counting}
  Let $A$ be a stationary Poisson point process with intensity $\varrho > 0$ in $\Rspace^d$, $\Rspace^k \hookrightarrow \Rspace^d$ a $k$-plane with $k < d$, $\Delaunay{A'}$ the corresponding weighted Delaunay mosaic in $\Rspace^k$, and $\Omega \subseteq \Rspace^k$ a unit volume Borel set.
  Then for any $0 \leq \ell \leq m \leq k$ and $r_0 \geq 0$, the expected number of intervals of type $\ell \leq m$ with weighted radius at most $r_0$ and center in $\Omega$ is
  \begin{align}
    \Expect{}{N_{\ell \leq m \leq k \leq d} (r_0)}
      &= \frac{C_{\ell \leq m \leq k \leq d} \cdot \varrho^{\frac{k}{d}}}{\Gamma (m+1-\frac{k}{d})}
      \cdot \gamma (m+1-\tfrac{k}{d}, x),
      \label{eqn:A3}
  \end{align}
  in which $x = \varrho \nu_d r_0^d$ and $C_{\ell \leq m \leq k \leq d}$ is a constant.
\end{theorem}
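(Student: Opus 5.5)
Since this is a restatement of Theorem~1 in \cite{EdNi19}, the plan is to follow the same route as the proof of Theorem~\ref{thm:counting_generalized}, replacing the spherical Blaschke--Petkantschin formula by its version for spheres centered in a $k$-plane. We write the expectation with the Slivnyak--Mecke formula as an integral over $(m+1)$-tuples $\xxx \in (\Rspace^d)^{m+1}$. The integrand is the probability that the open ball bounded by the smallest sphere centered in $\Rspace^k$ through the points is empty, times indicators that
\begin{itemize}
  \item the center lies in $\Omega$,
  \item the radius is at most $r_0$,
  \item the configuration has interval type $\ell \leq m$.
\end{itemize}
By independence of the Poisson process, the emptiness probability is $e^{-\varrho \nu_d r^d}$, which depends only on the radius $r$.

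The key step is a change of variables. We pass from $(m+1)$ points in $\Rspace^d$ to the following data:
\begin{itemize}
  \item the center $z \in \Rspace^k$;
  \item the radius $r \geq 0$;
  \item the directions $\uuu \in (\Sspace^{d-1})^{m+1}$, constrained so that the points $z + r u_i$ have affinely independent projections to $\Rspace^k$ spanning an $m$-flat through $z$ (after projecting, $z$ must lie in that $m$-flat for generic sphere centering).
\end{itemize}
This is done by a Blaschke--Petkantschin-type decomposition: first over the Grassmannian of $m$-planes in $\Rspace^k$, then over the orthogonal complements. The Jacobian factors as $r^{e} \cdot J(\uuu)$ for an exponent $e$ determined by dimension counting, and $J$ is independent of $z$ and $r$. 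The dimension count gives $(m+1)d = k + 1 + (\text{dimension of the direction data})$ minus the constraint that the center lies in $\Rspace^k$. Scaling then forces $e = (m+1)d - k - 1$.

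The type indicator depends only on $\uuu$, since it encodes which facets the center sees. Integrating $z$ over $\Omega$ therefore contributes $1$, and integrating over $\uuu$ yields a constant. The remaining radial integral is $\int_0^{r_0} r^{(m+1)d-k-1} e^{-\varrho\nu_d r^d}\,\diff r$. Substituting $t = \varrho \nu_d r^d$ turns it into $\frac{1}{d}(\varrho\nu_d)^{-(m+1)+k/d}\,\gamma(m+1-\tfrac{k}{d}, x)$. Multiplying by the prefactor $\varrho^{m+1}/(m+1)!$ leaves $\varrho^{k/d}$, as claimed. All remaining constants are collected into $C_{\ell\leq m\leq k\leq d}/\Gamma(m+1-\tfrac{k}{d})$; the normalization by $\Gamma$ is simply a choice that makes $C$ the total count per unit volume as $r_0 \to \infty$.

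The main obstacle is justifying the Jacobian factorization and checking that the angular constant is finite. The finiteness matters because it is what makes $C$ a genuine constant. For this we would cite the generalized Blaschke--Petkantschin formula for weighted points used in \cite{EdNi19}, or derive it by first projecting to $\Rspace^k$ and treating the heights as weights. In the latter case the squared heights enter linearly through the power distance, so the factorization reduces to the unweighted formula in $\Rspace^k$ combined with polar coordinates in the normal directions.
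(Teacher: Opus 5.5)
Your outline is correct and follows the same route as the cited proof in \cite{EdNi19}; the paper itself only restates that theorem, but it proves the unweighted analogue, Theorem~\ref{thm:counting_generalized}, the same way. The steps are Slivnyak--Mecke, then a Blaschke--Petkantschin-type change of variables to $(z,r,\uuu)$ whose Jacobian is $r^{(m+1)d-k-1}$ times an angular factor (by translation invariance along $\Rspace^k$ and homogeneity), then the substitution $t=\varrho\nu_d r^d$.

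Two small remarks. First, the $k-m$ constraints on $\uuu$ come from requiring $z$ to lie in the affine hull of the projected points, not from the center lying in $\Rspace^k$. Second, the finiteness of the angular constant, which you flag as the main obstacle, is immediate. The constant does not depend on $\Omega$, and for bounded $\Omega$ and finite $r_0>0$ the count is at most the number of $(m+1)$-subsets of $A$ within distance $r_0$ of $\Omega$, which has finite expectation.
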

We refer to \cite{EdNi19} for details about the constants.
For $x = \infty$, the complete and incomplete gamma functions in \eqref{eqn:A3} cancel each other.
In this paper, we need them only in $\Rspace^1 \hookrightarrow \Rspace^2$, for which $C_{0 \leq 0 \leq 1 \leq 2} = 1$, $C_{0 \leq 1 \leq 1 \leq 2} = \frac{4}{\pi} - 1$, and $C_{1 \leq 1 \leq 1 \leq 2} = 1$.
The first two constants imply that for a stationary Poisson point process with intensity $\varrho > 0$ in $\Rspace^2$, the expected density of non-redundant weighted points in $\Rspace^1$ is only $\frac{4}{\pi} \sqrt{\varrho}$.
Indeed, most points sampled in $\Rspace^2$ will have domains in the Voronoi tessellation that do not intersect $\Rspace^1$ and are therefore redundant.
We remark that the locations of the non-redundant weighted points in $\Rspace^1$ are not necessarily sampled according to a Poisson point process in $\Rspace^1$.

\subsection{Weighted Radii and Squared Weighted Radii}
\label{app:A.5}

Like in the unweighted case, it is comparably easy to derive the expected sums of weighted radii and squared weighted radii from Theorem~\ref{thm:weighted_counting}.
Following the earlier convention, we write $F_{\ell \leq m \leq k \leq d} (r_0)$ and $S_{\ell \leq m \leq k \leq d} (r_0)$ for these sums.
\begin{corollary}
  \label{cor:weighted_measuring}
  Let $A$ be a stationary Poisson point process with intensity $\varrho > 0$ in $\Rspace^d$, $\Rspace^k \hookrightarrow \Rspace^d$ a $k$-plane with $k < d$, $\Delaunay{A'}$ the corresponding weighted Delaunay mosaic in $\Rspace^k$, and $\Omega \subseteq \Rspace^k$ a unit volume Borel set.
  Then for any $0 \leq \ell \leq m \leq k$ and $r_0 \geq 0$, the expected sums of weighted radii and squared weighted radii of the intervals of type $\ell \leq m$ with weighted radius at most $r_0$ and center in $\Omega$ is
  \begin{align}
    \Expect{}{F_{\ell \leq m \leq k \leq d} (r_0)}
      &= \frac{C_{\ell \leq m \leq k \leq d} \cdot \varrho^{\frac{k-1}{d}}}{\sqrt[d]{\nu_d} \cdot \Gamma (m+1-\frac{k}{d})}
      \cdot \gamma (m+1-\tfrac{k-1}{d}, x) ; 
        \label{eqn:A4_sum_of_weighted_radii} \\
    \Expect{}{S_{\ell \leq m \leq k \leq d} (r_0)}
      &= \frac{C_{\ell \leq m \leq k \leq d} \cdot \varrho^{\frac{k-2}{d}}}{\sqrt[d/2]{\nu_d} \cdot \Gamma (m+1-\frac{k}{d})}
      \cdot \gamma (m+1-\tfrac{k-2}{d}, x) ,
        \label{eqn:A4_sum_of_squared_weighted_radii}
  \end{align}
  in which $x = \varrho \nu_d r_0^d$ and $C_{\ell \leq m \leq k \leq d}$ is the same constant as in Theorem~\ref{thm:weighted_counting}.
\end{corollary}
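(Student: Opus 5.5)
The plan mirrors the proof of Corollary~\ref{cor:measuring_generalized}: express the sums of weighted radii as integrals against the same density that produces the counts in Theorem~\ref{thm:weighted_counting}, and insert one or two extra powers of the radius. Write \eqref{eqn:A3} with the incomplete gamma function written out,
\begin{align}
  \Expect{}{N_{\ell \leq m \leq k \leq d} (r_0)}
    &= \frac{C_{\ell \leq m \leq k \leq d} \cdot \varrho^{\frac{k}{d}}}{\Gamma (m+1-\frac{k}{d})}
       \int\limits_{t=0}^{x} t^{m-\frac{k}{d}} e^{-t} \diff t ,
\end{align}
with $x = \varrho \nu_d r_0^d$. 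The key observation is that the integration variable $t$ is tied to the weighted radius $r$ by $t = \varrho \nu_d r^d$. This holds because, in the Slivnyak--Mecke plus Blaschke--Petkantschin derivation in \cite{EdNi19}, the exponential factor $e^{-t}$ is exactly the probability that the $d$-ball of radius $r$ around the center is empty. The integrand is therefore the density, in $t$, of intervals of weighted radius $r(t) = \sqrt[d]{t / (\varrho \nu_d)}$.

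Granting this, the sum of radii is obtained by multiplying the integrand by $r(t)$, which contributes the factor $(\varrho \nu_d)^{-1/d} t^{1/d}$. This gives $\varrho^{\frac{k}{d} - \frac{1}{d}} / \sqrt[d]{\nu_d}$ in front of $\int_0^x t^{m - \frac{k-1}{d}} e^{-t} \diff t = \gamma(m+1-\frac{k-1}{d}, x)$, which is \eqref{eqn:A4_sum_of_weighted_radii}. Multiplying instead by $r(t)^2$ gives $\varrho^{\frac{k-2}{d}} / \sqrt[d/2]{\nu_d}$ and $\gamma(m+1-\frac{k-2}{d}, x)$, which is \eqref{eqn:A4_sum_of_squared_weighted_radii}. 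The denominator $\Gamma(m+1-\frac{k}{d})$ and the constant $C_{\ell \leq m \leq k \leq d}$ are unchanged, because the extra factor acts only inside the radial integral.

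The main obstacle is justifying the first step. Theorem~\ref{thm:weighted_counting} is restated here only as a closed formula, so I will have to go back into the proof in \cite{EdNi19} and confirm that the count factors as a product of a constant, coming from the angular, location and height integrals, and a single radial integral in $r$. I also need the integrand to be exactly $r^{dm+d-k-1} e^{-\varrho\nu_d r^d}$ times a constant, since $t^{m-k/d}\,\diff t \propto r^{dm-k} r^{d-1}\diff r$.

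My first attempt at this is to recheck the scaling. The weighted points have heights $h \le r$, so after normalizing the heights by $r$ all remaining integrals are $r$-independent, and the only $r$-dependence left is the polynomial Jacobian and the emptiness probability. Once this factorization is confirmed, the rest is the same substitution carried out in \eqref{eqn:A_radii} and \eqref{eqn:A_squared_radii}.
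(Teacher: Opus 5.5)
Your proposal is correct and follows the paper's proof: the paper also writes \eqref{eqn:A3} with the incomplete gamma integral explicit, substitutes $r(t) = \sqrt[d]{t/(\varrho \nu_d)}$, and inserts one or two extra factors of $r(t)$ into the integrand, leaving $C_{\ell \leq m \leq k \leq d}/\Gamma(m+1-\frac{k}{d})$ unchanged. The paper leaves implicit, under ``repeat the argument,'' that $t = \varrho \nu_d r^d$ comes from the emptiness probability and that the radial Jacobian is an $r$-independent constant times $r^{dm+d-k-1}$; your scaling check, with $(m+1)d - k$ scaling dimensions once the center is fixed, supplies exactly this.
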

\begin{proof}
  To simplify the notation, we set $C = C_{\ell \leq m \leq k \leq d} / {\Gamma (m+1-\frac{k}{d})}$, and rewrite the relation \eqref{eqn:A3} in a form that makes the incomplete gamma function explicit:
  \begin{align}
    \Expect{}{N_{\ell \leq m \leq k \leq d} (r_0)}
      &= C \varrho^{\frac{k}{d}}
      \int_{t=0}^x t^{m-\frac{k}{d}} e^{-t} \diff t .
      \label{eqn:A3restated}
  \end{align}
  Next we repeat the argument for deriving this relation with one and two extra powers of the radius in the incomplete gamma function.
  Specifically, we set $r(t) = \sqrt[d]{t / (\varrho \nu_d)}$ and get
  \begin{align}
    \Expect{}{F_{\ell \leq m \leq k \leq d} (r_0)} 
      &= C \varrho^{\frac{k}{d}} \int\nolimits_{t=0}^x r(t) t^{m-\frac{k}{d}} e^{-t} \diff t 
      = \frac{C \varrho^{\frac{k}{d}}}{\sqrt[d]{\varrho \nu_d}} \int\nolimits_{t=0}^x t^{m-\frac{k-1}{d}} e^{-t} \diff t ; \\
    \Expect{}{S_{\ell \leq m \leq k \leq d} (r_0)} 
      &= C \varrho^{\frac{k}{d}} \int\nolimits_{t=0}^x r^2(t) t^{m-\frac{k}{d}} e^{-t} \diff t 
      = \frac{C \varrho^{\frac{k}{d}}}{\sqrt[d/2]{\varrho \nu_d}} \int\nolimits_{t=0}^x t^{m-\frac{k-2}{d}} e^{-t} \diff t .
  \end{align}
  Undoing the substitution with the constant, $C$, writing the integrals as incomplete gamma functions, and simplifying the factors in front of them, we get the two claimed relations.
\end{proof}

As an example consider a stationary Poisson point process with positive intensity in $\Rspace^2$, and a line $\Rspace^1 \hookrightarrow \Rspace^2$, which intersects the Voronoi tessellation of the process in a ($1$-dimensional) weighted Voronoi tessellation.
So $d=2$ and $k=1$.
We get the expected sums of weighted radii of the critical vertices, non-critical vertex-edge pairs, and critical edges with centers in a unit-length interval $\Omega \subseteq \Rspace^1$ by plugging the relevant parameters into \eqref{eqn:A4_sum_of_weighted_radii}.
Specifically, we have $\nu_2 = \pi$, $C_{0 \leq 0 \leq 1 \leq 2} = 1$, $C_{0 \leq 1 \leq 1 \leq 2} = \frac{4}{\pi} - 1$, and $C_{1 \leq 1 \leq 1 \leq 2} = 1$, which gives the three constant expectations listed in Table~\ref{tbl:weighted_bounds}.
{\renewcommand{\arraystretch}{1.4}
\begin{table}[hbt]
  \vspace{0.0in}
  \centering
  \begin{tabular}{c||ccc}
    $\mathbb E$ & $F_{0 \leq 0 \leq 1 \leq 2}$ & $F_{0 \leq 1 \leq 1 \leq 2}$ & $F_{1 \leq 1 \leq 1 \leq 2}$ \\ \hline \hline
    $d = 2, k = 1$ & $\sfrac{1}{\pi}$ & $\frac{2}{\pi} (\frac{4}{\pi} - 1)$ & $\sfrac{2}{\pi}$ 
  \end{tabular}
  \caption{\emph{From left to right}: the expected sum of weighted radii of the critical vertices, non-critical vertex-edge pairs, and critical edges with centers in $\Omega$.}
  \label{tbl:weighted_bounds}
\end{table}}

\section{Colorful Simplices of the Chromatic Delaunay Complex} \label{app:B}

In this appendix, we prove Proposition~\ref{prop:chromatic_vs_lunar} and argue that the filtration of the lunar Delaunay mosaic is a generalized discrete Morse function, assuming general position of the points. 
The proofs are for points in $d$-dimensional Euclidean space rather than just a plane.

\smallskip
Fix $s+1$ disjoint finite sets, $A_0, A_1, \dots, A_s \subseteq \Rspace^d$, let $A = \bigcup_{i=0}^s A_i$ be the union and $\chi \colon A \to [s]$ the map that sends points in $A_i$ to $i$ for each $0 \leq i \leq s$.
Let us recall the constructions described in Section~\ref{sec:2}, in preparation for the proofs but also to ascertain that they all generalize to $\Rspace^d$.
We can still consider the overlay of Voronoi tessellations, $\Voronoi{A_0, A_1, \dots, A_s}$, with cells $\nu = \nu_0 \cap \nu_1 \cap \dots \cap \nu_s$, where $\nu_i$ is a Voronoi cell of $\Voronoi{A_i}$ for each $0\leq i \leq s$.
It decomposes $\Rspace^d$ into domains such that for all points in the interior of a domain $f_{\max}(x) = \max_{0 \leq j \leq s} \min_{a\in A_j} \norm{x - a}$ is defined by the same $s+1$ points.
Similarly, the dual lunar Delaunay mosaic is well-defined, and filtered by $\Lambda \colon \Delaunay{A_0, A_1, \ldots, A_s} \to \Rspace$, which maps each cell to the minimum value of $f_{\max}$ over the points in the dual cell of the overlay of Voronoi tessellations.
In other words, if $r$ is the radius at which the first lune touches a cell $\nu \in \Voronoi{A_0, A_1, \dots, A_s}$, its dual, $\nu^\ast$, appears in $\Delaunay[r]{A_0, A_1,\dots,A_s} = \Lambda^{-1} [0,r]$.
Using the Nerve Theorem and collapses of simplices to corresponding cells in the lunar Delaunay mosaic, it is not difficult to see that 
\begin{align}
  \Delaunay[r]{A_0,A_1,\dots,A_s} &\simeq \bigcup\nolimits_{\aaa \in A_0 \times \dots \times A_s} \Lune{r}{\aaa} ,
\end{align}
like in Lemma~\ref{lem:lunar_alpha_vs_lunar_space}. 
The construction of the chromatic Delaunay mosaic, $\Delaunay{\chi}$, is similar. 
It contains a simplex $\sigma \subseteq A$ whenever the domains, $\domain{a, A_{\chi(a)}}$, of the $a \in \sigma$ have a non-empty common intersection.
Furthermore, the filtration value, $\alpha(\sigma)$, is the minimum radius, $r$, at which the balls of radius $r$ centered at the points $a \in \sigma$ have a non-empty common intersection within the intersection of the Voronoi domains.
Finally, a simplex $\sigma \subseteq A$ is \emph{colorful} if $\chi(\sigma) = [s]$, we write $\Delaunaycol{\chi}$ for the collection of colorful simplices in $\Delaunay{\chi}$, and observe that for a colorful simplex, the intersection of $\domain{a; A_{\chi(a)}}$, for $a \in \sigma$, is a cell in the overlay of Voronoi tessellations, $\Voronoi{A_0, A_1, \dots, A_s}$.
  
\smallskip
For the following proposition we assume that the points in $A$ are in generic position.
More precisely, we need what we call \emph{chromatic genericity}, as discussed in more detail below.
\begin{proposition}
  \label{prop:colorful_chromatic_vs_lunar}
  Let $A \subseteq \Rspace^d$ be a finite set of points in general position, and $\chi \colon A \to [s]$ an $(s+1)$-coloring.
  Then there is a bijection $\varphi \colon \Delaunaycol{\chi} \to \Delaunay{A_0, A_1, \dots, A_s}$ between the colorful simplices of $\Delaunay{\chi}$ and the cells in the lunar Delaunay mosaic, such that for each $\sigma \in \Delaunaycol{\chi}$:
  \begin{itemize}
    \item $\dim \sigma = \dim \varphi(\sigma) + s$;
    \item $\alpha(\sigma) = \Lambda(\varphi(\sigma))$;
    \item $\varphi(\partial\sigma) = \partial\varphi(\sigma)$,
  \end{itemize}
  for each $\sigma \in \Delaunaycol{\chi}$, in which $\alpha$ and $\Lambda$ are the radius functions of the two mosaics.
\end{proposition}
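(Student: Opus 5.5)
The plan is to define $\varphi$ through the Voronoi side and then verify the three properties one at a time. For a colorful simplex $\sigma \in \Delaunaycol{\chi}$, set
\begin{align}
  \nu(\sigma) &= \bigcap\nolimits_{a \in \sigma} \domain{a; A_{\chi(a)}} .
\end{align}
Since $\sigma$ contains every color, this intersection splits as $\nu(\sigma) = \bigcap_{i=0}^s \nu_i$, where $\nu_i = \bigcap_{a \in \sigma \cap A_i} \domain{a; A_i}$ is a cell of $\Voronoi{A_i}$ of codimension $\card{(\sigma \cap A_i)} - 1$. It is therefore a non-empty cell of the overlay $\Voronoi{A_0, \ldots, A_s}$, and I define $\varphi(\sigma) = \nu(\sigma)^\ast$ as its dual in the lunar Delaunay mosaic.

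For the inverse, take a cell $\nu$ of the overlay. I will use chromatic genericity to show that $\nu$ lies in the relative interior of a unique cell $\nu_i$ of each $\Voronoi{A_i}$. Then $\sigma(\nu) = \bigcup_i \{a \in A_i \mid \nu_i \subseteq \domain{a; A_i}\}$ is colorful and satisfies $\nu(\sigma(\nu)) = \nu$. Conversely, $\sigma(\nu(\sigma)) = \sigma$ needs that $\nu(\sigma)$ does not lie in a smaller cell of some $\Voronoi{A_i}$. This is exactly the genericity requirement that the cells of different tessellations intersect transversally, so the two maps are mutually inverse.

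Genericity also gives the dimension formula. Transversality yields
\begin{align}
  \operatorname{codim} \nu(\sigma) &= \sum\nolimits_i \left( \card{(\sigma \cap A_i)} - 1 \right) = \card{\sigma} - (s+1) = \dim \sigma - s ,
\end{align}
so $\dim \varphi(\sigma) = \operatorname{codim} \nu(\sigma) = \dim \sigma - s$.

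For the radius property, I will show that for $x \in \nu(\sigma)$ we have $\max_{a \in \sigma} \Edist{x}{a} = f_{\max}(x)$. Indeed, within $\nu_i$ every $a \in \sigma \cap A_i$ is a nearest point of color $i$. It follows that $\alpha(\sigma) = \min_{x \in \nu(\sigma)} \max_{a\in\sigma} \Edist{x}{a} = \min_{x \in \nu(\sigma)} f_{\max}(x) = \Lambda(\varphi(\sigma))$. Here $\alpha(\sigma)$ is the least $r$ at which the balls of radius $r$ around $\sigma$ meet inside $\nu(\sigma)$, which is the first quantity above.

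For the boundary property, I will compare facets. A facet $\tau = \sigma \setminus \{a\}$ lies in $\Delaunaycol{\chi}$ exactly when $a$ is not the only point of its color in $\sigma$. Removing such an $a$ replaces $\nu_i$ by the cofacet $\nu_i'$ obtained by dropping one bisector constraint, and $\nu(\tau) = \nu_i' \cap \bigcap_{j\neq i}\nu_j$ is an overlay cell having $\nu(\sigma)$ as a facet. Conversely, every cell of the overlay having $\nu(\sigma)$ as a facet arises in this way, again by transversality. Hence the faces of $\varphi(\sigma)$ in the dual mosaic are exactly the $\varphi(\tau)$ with $\tau$ a colorful facet. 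Non-colorful facets are ignored by the inherited boundary operator, so under mod-2 coefficients $\varphi(\partial\sigma)=\partial\varphi(\sigma)$. With integer coefficients one additionally fixes compatible orientations; I would do this by orienting each overlay cell by the induced ordering of the removed points per color.

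The main obstacle is making chromatic genericity precise enough to support both the bijectivity step and the correspondence between cofacets of overlay cells and colorful facets. Concretely, it must guarantee that the affine hulls of the Voronoi cells of different colors are in general position, which yields transversality and the codimension count above. I would first state this as a condition on the bisector hyperplanes and check that it holds for almost all configurations.
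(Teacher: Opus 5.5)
Your proposal is correct and follows the paper's proof essentially step for step. You use the same map $\varphi(\sigma) = \nu(\sigma)^\ast$, the same codimension count $\dim\sigma - s$ from general position, and the same observation that $\max_{a \in \sigma} \Edist{x}{a} = f_{\max}(x)$ for $x \in \nu(\sigma)$. Your boundary argument is also the paper's: colorful facets $\sigma \setminus \{a\}$, with $a$ not the only point of its color, correspond to cofacets of $\nu(\sigma)$ in the overlay.

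The differences are only in care, not in route. You make injectivity explicit through transversality, whereas the paper simply asserts the bijection and takes the codimension from \cite[Lemma~4.2(a)]{CDES26}, i.e., from chromatic genericity. You also address orientations, which the paper leaves implicit.
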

\begin{proof}
    Let $\sigma\subseteq A$ be a colorful simplex, $\nu = \bigcap_{a \in \sigma} \domain{a; A_{\chi(a)}}$, and $V$ be the set of colorful $s$-simplices
    contained in $\sigma$. 
    Observe that $\nu$ is also the intersection of the domains of the overlay of Voronoi tessellations; that is: $\nu = \bigcap_{\vvv \in V} \domain{\vvv}$. 
    Then $\sigma \in \Delaunaycol{\chi}$ iff $\nu\neq\emptyset$, which is equivalent to $\nu$ being a cell of the overlay. 
    Writing $\nu^\ast$ for the dual of $\nu$, we set $\varphi(\sigma) = \nu^\ast$ for every $\sigma \in \Delaunaycol{\chi}$. 
    This defines the desired bijection from $\Delaunaycol{\chi}$ to $\Delaunay{A_0, \dots, A_s}$. 
    We check the three claimed properties.
    \smallskip \begin{itemize}
        \item Since the points are in general position, the codimension of $\nu$ is $\dim\sigma - s$; see \cite[Lemma~4.2(a)]{CDES26}, which is also the dimension of $\nu^\ast = \varphi(\sigma)$.
        Hence, $\dim\sigma = \dim \varphi(\sigma) + s$, as claimed.
        
        \item For any point $x \in \nu$ and color $0 \leq i \leq s$, $x$ is equidistant to all points in $\sigma \cap A_i$. 
        Therefore, the minimum radius, $r$, at which one of $\Lune{r}{\vvv}$, for $\vvv \in V$, touches $\nu$, is also the minimum radius at which all disks centered at points in $\sigma$ touch the same point in $\nu$. 
        Both events happen at $r = \min_{x\in\nu}f_{\max}(x)$.
        The former event defines $\Lambda$, and the latter defines $\alpha$, so $\Lambda(\nu^\ast) = \alpha(\sigma)$.
        
        \item Each colorful facet of $\sigma$ is a simplex $\tau = \sigma \setminus \{a\}$, in which $a \in \sigma$ is not the only point of its color in $\sigma$.
        Hence, the codimension of the corresponding cell in $\Voronoi{A_0, A_1, \ldots, A_s}$ is one less than that of $\nu$.
        It is therefore a cofacet of $\nu$, and its dual is a facet of $\nu^\ast$ in $\Delaunay{A_0, A_1, \dots, A_s}$.
    \end{itemize} \smallskip
    The last argument can be reversed, which implies that the colorful facets of $\sigma \in \Delaunaycol{\chi}$ are exactly the facets of $\nu^\ast$ in the lunar Delaunay mosaic. 
\end{proof}

We next argue that $\Lambda \colon \Delaunay{A_0, A_1, \dots, A_s}\to\Rspace$ is a generalized discrete Morse function, as defined in \cite{For98,Fre09}, provided the points satisfy a sufficiently strong genericity condition. 
An \emph{interval} of $\Lambda$ is a collection of cells $\sigma \in \Delaunay{A_0, A_1, \dots, A_s}$ that satisfy $\sigma_{\min} \subseteq \sigma \subseteq \sigma_{\max}$ and $\Lambda(\sigma_{\min}) = \Lambda(\sigma) = \Lambda(\sigma_{\max})$, for some $\sigma_{\min}, \sigma_{\max}$.
The function $\Lambda$ is \emph{generalized discrete Morse} if the mosaic partitions into intervals that are maximal with respect to inclusion.
A cell is called \emph{critical} if it forms a (singular) interval by itself, and \emph{non-critical}, otherwise.
An important consequence of being generalized discrete Morse is that the homotopy type of $\Lambda^{-1} [0,r]$ changes iff a critical cell is added.

\smallskip
By \cite[Thm~4.6]{CDES26} (see also \cite{NCBJ24}), the radius function $\alpha$ on $\Delaunay{\chi}$ is a generalized discrete Morse function, provided the points are chromatically generic.
To show that $\Lambda$ is a generalized discrete Morse function, we use the proposition above and then show that the intervals of $\alpha$ restricted to the colorful simplices are still intervals. 
We first introduce a sufficiently strong notion of genericity, and then prove the claim.
\begin{definition}[Strong chromatic genericity]
  \label{dfn:strong_chromatic_genericity}
    Let $A \subseteq \Rspace^d$ be finite.
    For any $s+1$ disjoint subsets $X_0, X_1, \ldots, X_s \subseteq A$, let $E_i$ be the intersection of the bisectors of the points in $X_i$, for $0 \leq i \leq s$, and $E = E_0 \cap E_1 \cap \ldots \cap E_s$.
    Furthermore, let $x_i \in X_i$ be arbitrary choice of points, for $0 \leq i \leq k \leq \dim E$, and let $x_i'$ be the orthogonal projections of $x_i$ to $E$.
    We say that $A\subseteq \Rspace^d$ is \emph{strongly chromatically generic} if
    \begin{enumerate}[(i)]
      \item[(i)] any $k \leq d+1$ points in $A$ are affinely independent;
        \label{dfn:strong_chromatic_genericity-i}
      \item[(ii)] any $k+1$ concentric $(d-1)$-spheres intersected with an affine $p$-plane contain at most $p+k+1$ points from $A$;
        \label{dfn:strong_chromatic_genericity-ii}
      \item[(iii)] for any choice of the $s+1$ disjoint sets and a point in each, $x_0', x_1', \dots, x_k'$ are affinely independent;
        \label{dfn:strong_chromatic_genericity-iii}
      \item[(iv)] for any choice of $s+2$ disjoint sets, $X_0, X_1, \ldots, X_s, \{y_0, y_1\} \subseteq A$, the center of the smallest circumcircle of $x_0, x_1, \dots, x_k$ centered at $E$ (if it exists) does not lie on the bisector of $y_0$ and $y_1$.
        \label{dfn:strong_chromatic_genericity-iv}
    \end{enumerate}
\end{definition}
We note that Condition~(ii) is the \emph{chromatic genericity} as defined in \cite[Def.~4.1]{CDES26}. 
See \cite[Lem.~4.2]{CDES26} for useful characterizations.
Before using Definition~\ref{dfn:strong_chromatic_genericity}, let us make sure that it deserves to be called a genericity condition.
\begin{lemma}
  \label{lem:measure-zero_subset}
  The space of concatenated coordinates of $n$ points in $\Rspace^d$ that do not satisfy the four conditions in Definition~\ref{dfn:strong_chromatic_genericity} is a zero measure subset of $\Rspace^{nd}$.
\end{lemma}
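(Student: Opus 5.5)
The plan is to show that each of the four conditions fails only on a finite union of proper real algebraic subsets of $\Rspace^{nd}$. Each such set has Lebesgue measure zero, and there are only finitely many choices of the combinatorial data involved: the disjoint subsets $X_0, \ldots, X_s$ (and $\{y_0,y_1\}$), the chosen points $x_i$, the spheres, and the planes. So a finite union of measure-zero sets is again measure zero. For each fixed choice of combinatorial data, I will write the failure of the condition as the vanishing of a polynomial (or, after clearing denominators, a rational function) in the coordinates. Then I will exhibit one configuration where this polynomial does not vanish, which shows it is not identically zero; hence its zero set is a proper algebraic subset and has measure zero.

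Condition (i) is classical: affine dependence of $k \leq d+1$ points is the vanishing of all maximal minors of the matrix of homogenized coordinates, and these minors are nonzero for the standard simplex. Condition (ii) is the chromatic genericity of \cite[Def.~4.1]{CDES26}. I would either cite the measure-zero statement from there, or argue directly as follows. Points lying on $k+1$ concentric spheres intersected with a $p$-plane means there is a common center $z$, radii, and a $p$-plane with $p+k+2$ points satisfying them. Eliminating these parameters (for instance via the algebraic description in \cite[Lem.~4.2]{CDES26}: equal distances within each group give linear equations in $z$ after subtracting, and lying in a $p$-plane gives rank conditions) turns the condition into a rank deficiency of an explicit matrix, i.e.\ a determinantal condition. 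This fails for suitably perturbed points, so generically it does not hold.

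Conditions (iii) and (iv) involve the flat $E$. For generic points, (ii) already guarantees that $E$ is an affine subspace of the expected dimension. Locally, on the open dense set where the defining linear system has full rank, $E$ is given by Cramer's rule as rational functions of the coordinates: $E_i$ is defined by the linear equations $2\langle x, b-a\rangle = \norm{b}^2-\norm{a}^2$ for $a,b \in X_i$. The orthogonal projection onto $E$ is then also rational. The center of the smallest circumsphere of $x_0,\ldots,x_k$ centered in $E$ is the solution of a further linear system, so it too is rational. Hence affine dependence of the $x_i'$ in (iii), and membership of that center in the bisector of $y_0,y_1$ in (iv), are each the vanishing of a rational function.

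The main obstacle is showing these rational functions are not identically zero, which requires a witness configuration. For (iv) this is easy: $y_0,y_1$ are disjoint from the other sets, so moving $y_1$ alone changes the bisector but not the center, and some position of $y_1$ avoids it. For (iii) I would construct a witness explicitly. Place each $X_i$ symmetrically about a common flat $E$ of the right dimension, e.g.\ on spheres centered at points of $E$, so that $E$ is known. The $x_i$ can then be moved in directions parallel to $E$ without changing $E$, provided their partners in $X_i$ are moved accordingly. Choosing the projections to be affinely independent then gives the witness. I would also handle the degenerate parameter regions, where the linear systems drop rank, by noting that they are themselves contained in proper algebraic sets.
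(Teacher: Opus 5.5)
Your proposal is correct and follows essentially the same route as the paper: (i) is immediate, (ii) is taken from \cite[Lem.~4.3]{CDES26}, and failures of (iii) and (iv) are written as zero sets of rational functions coming from the linear system for the circumcenter in $E$ (the paper uses $\det(U^TU)$ and $\norm{y_0-c}^2-\norm{y_1-c}^2$), followed by a finite union over the combinatorial choices. Your witness configurations (translating each $X_i$ along directions of $E$, and moving $y_1$ alone) additionally justify the non-triviality of these rational functions, which the paper only asserts.
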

\begin{proof}
  Consider the coordinate space $\Rspace^{nd}$ of $n$ points in $d$-dimensional space. Clearly, configurations violating Conditiion~(i) are of measure zero, and according to \cite[Lem.~4.3]{CDES26} so are configurations violating Condition~(ii).
  We argue that the configurations that violate (iii) or (iv)
  can be described as the zero set of a system of rational functions.

  \smallskip
  Indeed, the center, $c$, of the smallest circumsphere of the points $x_0, x_1, \dots, x_k$ centered at $E$, if it exists, can be expressed as a rational function in the coordinates of those points. 
  Set vectors $u_i = x_i' - x_0'$ and weights $w_i = \norm{x_i' - x_0'}^2$ for $0 \leq i \leq k$. 
  The center $c = \sum_{i=1}^k \gamma_i u_i$ satisfies $\norm{c}^2 + w_0 = \norm{u_i - c}^2 + w_i$ for $1 \leq i \leq k$. 
  Expanding the squared norm and canceling $\norm{c}^2$, this system of equations can be written as $2 U^T U \gamma = b$, where $U$ is the matrix with $u_1, u_2, \dots, u_k$ as columns, $\gamma = (\gamma_1, \gamma_2, \dots, \gamma_k)$ a vector of variables, and $b$ the vector of constants, $b_i = \norm{u_i}^2 + w_i - w_0$, for $1 \leq i \leq k$. 
  The Gram matrix $U^T U$ is invertible iff the orthogonal projections $x_0', x_1', \dots, x_k'$ are affinely independent. 
  In that case, we can write $c = c_0 + \frac{1}{2} U (U^T U)^{-1} b$.

  \smallskip
  If Conditions (iii) or (iv) are violated, then either (iii) is violated---which is expressed by the zero set of a rational function $\det(U^T U)$---or (iv) is violated---which is expressed by the zero set of a rational function $\norm{y_0 - c}^2 - \norm{y_1 - c}^2$. 
  Overall we have only finitely many choices of the sets $X_i$ and points $x_i$ and $y_0, y_1$, so the space of configurations violating (iii) or (iv) is the zero set of finitely many non-trivial rational functions and as such a set of measure zero.
\end{proof}

The motivation for strong chromatic genericity formulated in Definition~\ref{dfn:strong_chromatic_genericity} is the following lemma, which says that removing a non-singleton color from a colorful simplex strictly decreases the radius.
\begin{lemma}
  \label{lem:remove_nontrivial_color}
  Let $A$ be strongly chromatically generic, and $\sigma \in \Delaunaycol{\chi}$ a maximal colorful simplex with respect to the filtration $\alpha$; that is: $\alpha(\tau) > \alpha(\sigma)$ for every proper coface $\tau \supseteq \sigma$ in $\Delaunaycol{\chi}$.
  Suppose there is a color $0 \leq i \leq s$ such that $\sigma_i = \chi^{-1}(i)\cap \sigma$ contains at least two points. 
  Then $\alpha(\sigma \setminus \sigma_i) < \alpha(\sigma)$.
\end{lemma}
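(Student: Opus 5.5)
The plan is to compare the two radius values at a single point, namely the point $x^\ast$ that realizes $\alpha(\sigma)$. Genericity will then rule out the equality case.

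\textbf{Setup.} Write $\sigma_j = \chi^{-1}(j) \cap \sigma$ for $0 \leq j \leq s$, and set $\nu = \bigcap_{a \in \sigma} \domain{a; A_{\chi(a)}}$. Let $E_j$ be the intersection of the bisectors of the points in $\sigma_j$, and put $E = \bigcap_j E_j$ and $E' = \bigcap_{j \neq i} E_j$. Every point of $\nu$ lies in $E$ and is equidistant from the points of each $\sigma_j$. On $\nu$ we therefore have $\alpha$-values computed by $f(x) = \max_j h_j(x)$, where $h_j(x)$ is the common distance from $x$ to $\sigma_j$. Let $g = \max_{j \neq i} h_j$, so that $f = \max\{g, h_i\}$. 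Let $x^\ast \in \nu$ be a minimizer of $f$, so $f(x^\ast) = \alpha(\sigma) =: r$. The Voronoi region of $\sigma \setminus \sigma_i$ is $\nu' = \bigcap_{j \neq i} \bigcap_{a \in \sigma_j} \domain{a; A_j}$. It contains $\nu$, hence contains $x^\ast$. This gives the weak inequality
\[
\alpha(\sigma \setminus \sigma_i) \leq g(x^\ast) \leq f(x^\ast) = r .
\]
If $g(x^\ast) < r$ we are done, so the rest of the argument assumes $g(x^\ast) = r$ and that $x^\ast$ also minimizes $g$ over $\nu'$. The goal is to derive a contradiction.

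\textbf{Using maximality.} Suppose $x^\ast$ lies on the bisector of some point of $\sigma_j$ with another point $b \in A_j \setminus \sigma_j$. Then $x^\ast$ also lies in the Voronoi region of $\tau = \sigma \cup \{b\}$, and the disks of radius $r$ still meet there. Hence $\alpha(\tau) \leq r$, where $\tau$ is a proper colorful coface, contradicting maximality of $\sigma$. So no Voronoi constraint from outside $\sigma$ is active at $x^\ast$. Locally near $x^\ast$, $\nu$ agrees with $E$, and likewise $\nu'$ agrees with $E'$. Since $g$ is convex on the affine space $E'$, a local minimum of $g$ at $x^\ast$ within $\nu'$ is a global minimum on $E'$. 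Therefore $x^\ast$ is the center of the smallest sphere centered on $E'$ that passes through the $g$-active points $x_0, \ldots, x_k$. These are the points of $\sigma \setminus \sigma_i$ at distance $r$, at least one per active color. The center is well defined by Condition~(iii) of Definition~\ref{dfn:strong_chromatic_genericity}, and it is given by the rational formula from the proof of Lemma~\ref{lem:measure-zero_subset}.

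\textbf{Contradiction with genericity.} Since $x^\ast \in \nu \subseteq E_i$, it lies on the bisector of any two points $y_0, y_1 \in \sigma_i$, and such a pair exists because $\card{\sigma_i} \geq 2$. The sets $X_j = \sigma_j$ for $j \neq i$, together with $\{y_0, y_1\}$, are disjoint. So the smallest circumcenter centered on $E'$ lies on the bisector of $y_0$ and $y_1$, which violates Condition~(iv). Hence $g(x^\ast) = r$ together with $x^\ast$ minimizing $g$ is impossible, and we conclude $\alpha(\sigma \setminus \sigma_i) < r$. Note that $\sigma \setminus \sigma_i$ is not colorful, which is fine because $\alpha$ is defined on all of $\Delaunay{\chi}$.

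\textbf{Main obstacle.} The delicate step is the passage from ``$x^\ast$ minimizes $g$ over $\nu'$'' to ``$x^\ast$ is the $E'$-constrained circumcenter of the active points''. This requires that no outside Voronoi constraint is active, which is the maximality argument, and that the minimizer of a max of distances on an affine space is determined by its active points in the manner Condition~(iv) speaks about. I would handle this with the KKT/convexity characterization: the minimum of $g$ on $E'$ is attained where the projections $x_j'$ of the active points contain $x^\ast$ in their convex hull. I would need to check that this matches the ``smallest circumcircle centered at $E$'' used in the definition, possibly after shrinking to the subset of active points.
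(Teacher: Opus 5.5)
Your proof is correct and is essentially the paper's argument: maximality keeps all outside Voronoi constraints inactive at the minimizer, so $\alpha(\sigma\setminus\sigma_i)=\alpha(\sigma)$ would force that point to be the $E'$-constrained smallest circumcenter of a subset of $\sigma\setminus\sigma_i$, and Condition~(iv) forbids that circumcenter from lying on the bisector of two points of $\sigma_i$ (the paper phrases this as $c \neq c'$ for the unique minimizers, whereas you argue by contradiction from equality). When you finish the KKT step, keep only one point per active color, as the paper does, because Conditions~(iii) and (iv) are stated for one point per set and same-colored points of $\sigma\setminus\sigma_i$ are equidistant from every point of $E'$.
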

\begin{proof}
  Let $\sigma' = \sigma \setminus \sigma_i$ be the simplex without the points of color $i$.
  Consider the corresponding intersections of Voronoi domains, and the filtration values defined by minimizing over their points:
  \begin{align}
    \nu &= \bigcap\nolimits_{a \in \sigma} \domain{a; A_{\chi(a)}} , \hspace{0.22in} \alpha(\sigma)^2 = \min_{x\in\nu} \max_{a \in \sigma} \Edist{x}{a}^2; \\
    \nu' &= \bigcap\nolimits_{a \in \sigma'} \domain{a , A_{\chi(a)}}, \quad \alpha(\sigma')^2 = \min_{x \in \nu'} \max_{a \in \sigma'} \Edist{x}{a}^2 .
  \end{align}
  Clearly, $\nu \subseteq \nu'$. 
  The filtration values are obtained by minimizing strictly convex functions over convex domains, so the points $c \in \nu$ and $c' \in \nu'$ at which the functions attain their minima are unique.

  \smallskip
  We show that $c \neq c'$.
  The maximality of $\sigma$ implies that for each color $0 \leq j \leq s$ the point $c$ lies in the interior of the intersection of $\domain{a; A_j}$ over $a\in\chi^{-1}(j)\cap\sigma$, as otherwise more points could be added to $\sigma$ without increasing the filtration value.
  Condition~(ii) of Definition~\ref{dfn:strong_chromatic_genericity} implies that those $j$-color Voronoi cells intersect generically, and assuming $\card\sigma_i \geq 2$, this implies that there exists a direction $w$ and a small $\ee > 0$ such that $\nu'$ lies on both sides of $\nu$; that is: $c \pm \ee w \in \nu' \setminus \nu$. 
  We claim that the value of $\max_{a \in \sigma'} \Edist{x}{a}^2$ decreases if we move from $c$ either in the direction of $w$ or $-w$. 
  To see this, it suffices to show that the point $c''$ where the function attains its minimum over the affine hull $E=\affine{\nu'}$ does not lie in $\affine{\nu}$.

  \smallskip
  The point $c''$ is, by definition, the center of the smallest ball centered at $E$ enclosing $\sigma'$. 
  Necessarily, it is then the center of the smallest circumsphere centered at $E$ of a subset of points $\sigma'' \subseteq \sigma'$. 
  Since all points of a single color are equidistant from each point of $E$, we can assume that $\sigma''$ only contains one point of any given color. 
  Applying Condition~(iv) to the affine space $E$, we see that the points $\sigma''$ take on the role of the points $x_0, x_1, \dots, x_k$, and any two points of $\sigma_i$ take on the role of $y_0, y_1$, so $c'' \not\in \affine{\nu}$.
\end{proof}

\begin{theorem}
  \label{thm:generalized_discrete_Morse}
  Let $A$ be a finite set of strongly chromatically generic points in $\Rspace^d$.
  Then the filtration function $\Lambda \colon \Delaunay{A_0, A_1, \dots, A_s} \to \Rspace$ is generalized discrete Morse.
\end{theorem}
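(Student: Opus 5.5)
The plan is to transfer the generalized discrete Morse property from $\alpha$ on $\Delaunay{\chi}$ to $\Lambda$ via the bijection $\varphi$ of Proposition~\ref{prop:colorful_chromatic_vs_lunar}. That proposition gives that $\varphi$ preserves the face relation and satisfies $\alpha(\sigma) = \Lambda(\varphi(\sigma))$. It therefore suffices to show that the restriction of $\alpha$ to $\Delaunaycol{\chi}$ partitions the colorful simplices into intervals. By this we mean sets of the form $\{\tau \in \Delaunaycol{\chi} \mid \sigma_{\min} \subseteq \tau \subseteq \sigma_{\max}\}$ on which $\alpha$ is constant, with $\sigma_{\min}, \sigma_{\max}$ both colorful. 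Strong chromatic genericity includes Condition~(ii), so \cite[Thm~4.6]{CDES26} applies and $\alpha$ is generalized discrete Morse on all of $\Delaunay{\chi}$. Each colorful simplex $\tau$ thus lies in a unique interval $[L, U]$ of $\alpha$, and the colorful simplices in that interval are exactly the $\tau$ with $L \subseteq \tau \subseteq U$ and $\chi(\tau) = [s]$.

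The key step is to show that this set is again an interval, namely $[L^{\rm col}, U]$ for a unique colorful $L^{\rm col}$. Since $U \supseteq \tau$, $U$ is colorful and is the maximal colorful simplex of value $\alpha(\tau)$, so $U$ satisfies the hypothesis of Lemma~\ref{lem:remove_nontrivial_color}. I would take $L^{\rm col} = L \cup \bigcup_{i \notin \chi(L)} U_i$, where $U_i = \chi^{-1}(i) \cap U$. In words, we add to $L$ all points of $U$ of the colors that $L$ is missing. Every colorful $\tau$ in $[L, U]$ must contain at least one point of each missing color $i$. I then need to show that it contains all of $U_i$.

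Suppose not, so some $\tau \in [L,U]$ omits a point of $U_i$ while containing another point of $U_i$. Then $U_i$ has at least two points. Since $L \subseteq \tau \setminus U_i \subseteq U \setminus U_i$ and $\alpha$ is monotone, $\alpha(L) \le \alpha(U\setminus U_i) \le \alpha(U)$. Lemma~\ref{lem:remove_nontrivial_color} makes the second inequality strict, which contradicts $\alpha(L) = \alpha(U)$. Hence no color missing from $L$ has two or more points in $U$, so each $U_i$ with $i \notin \chi(L)$ is a single point. That point is forced in every colorful $\tau$, and so $L^{\rm col} \subseteq \tau$. Conversely, every $\tau$ with $L^{\rm col} \subseteq \tau \subseteq U$ is colorful and has value $\alpha(U)$. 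So the colorful part of $[L,U]$ is exactly $[L^{\rm col}, U]$.

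Applying $\varphi$ then turns these into intervals $[\varphi(L^{\rm col}), \varphi(U)]$ of $\Lambda$ that partition the lunar Delaunay mosaic. Their maximality is inherited from that of the $\alpha$-intervals, since any larger $\Lambda$-interval would pull back to a colorful set of constant $\alpha$ value straddling two $\alpha$-intervals. The main obstacle is the key step above. There I must check carefully that Lemma~\ref{lem:remove_nontrivial_color} applies to $U$: maximality must be with respect to colorful cofaces, and $U$ is maximal in the full complex, which is stronger. I also need that the strict drop is incompatible with $\alpha(L) = \alpha(U)$, which requires $L \subseteq U \setminus U_i$. That containment holds precisely because $i \notin \chi(L)$, which is why $L^{\rm col}$ is defined by adding only the missing colors.
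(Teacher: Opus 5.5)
Your proposal is correct and follows the paper's proof. Both arguments invoke \cite[Thm~4.6]{CDES26}, reduce via $\varphi$ to the restriction of $\alpha$ to the colorful simplices, and apply Lemma~\ref{lem:remove_nontrivial_color} to the colorful maximum of an $\alpha$-interval to show that every color missing from its minimum is a singleton color; hence the minimum plus those singletons (the paper's $\sigma_{\min} \cup \sigma'$) is the unique colorful minimum. Your explicit checks of the lemma's maximality hypothesis and of the maximality of the transferred $\Lambda$-intervals only spell out details the paper leaves implicit.
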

\begin{proof}
  By \cite[Thm~4.6]{CDES26}, the filtration function $\alpha \colon \Delaunay{\chi} \to \Rspace$ is generalized discrete Morse. 
  According to Proposition~\ref{prop:colorful_chromatic_vs_lunar}, it suffices to show that the restriction of $\alpha$ to $\Delaunaycol{\chi}$ is generalized discrete Morse.

  \smallskip
  Let $I$ be an interval of $\alpha$, with minimum $\sigma_{\min}$ and maximum $\sigma_{\max}$, and suppose $\sigma_{\max}$ is colorful.
  We claim that $I \cap \Delaunaycol{\chi}$ is, again, an interval.
  To prove this, we show that there is a unique colorful minimum in this interval.
  Let $\sigma_i = \chi^{-1}(i) \cap \sigma_{\max}$ be vertices of $\sigma_{\max}$ with color $i$, for $0\leq i \leq s$. 
  By Lemma~\ref{lem:remove_nontrivial_color}, $\sigma_{\max} \setminus \sigma_i \notin I$ whenever $\card{\sigma_i} > 1$. 
  Therefore, the only colors missing in $\sigma_{\min}$ are the singleton colors. 
  Letting $\sigma' \subseteq \sigma_{\max}$ contain a vertex iff it is the only one of its color in $\sigma_{\max}$, we see that $\sigma_{\min} \cup \sigma'$ is the unique colorful minimum of $I$.

  \smallskip
  It follows that $I \cap \Delaunaycol{\chi}$ is an interval, and the partition of $\Delaunay{\chi}$ into intervals implies a partition of $\Delaunaycol{\chi}$ into intervals,
  so $\alpha$ restricted to $\Delaunaycol{\chi}$ is indeed a generalized discrete Morse function.
\end{proof}

\clearpage
\section{Notation}
\label{app:N}

\begin{table}[h!]
  \centering \vspace{-0.1in}
  \begin{tabular}{ll}
    $[s] = \{0, \ldots, s\}$
      & colors \\
    $\chi \colon A \to [s]$
      & $(s+1)$-coloring of locally finite points \\
    $A_i = \chi^{-1} (i)$
      & points with color $i$ \\
    $\aaa = (a_0, \ldots, a_s)$
      & colorful $s$-simplex \\
    $\Lune{r}{\aaa}$
      & intersection of $s+1$ disks of radius $r$ \\
    $f_{\max} \colon \Rspace^2 \to \Rspace$
      & minmax distance function \\
    \\
    $c_s, c_s^\ppp$
      & asymptotic constants \\
    $\ppp = (p_0, p_1, \ldots, p_s)$
      & probability vector \\
    $\Cost{}{A_0,\ldots,A_s}$
      & cost of lunar EMST \\
    $\Voronoi{A_0,\ldots,A_s}$
      & overlay of Voronoi tessellations \\
    $\Delaunay{A_0,\ldots,A_s}$
      & lunar Delaunay mosaic \\
    $\domain{\aaa}$
      & ($2$-dimensional) domain in tessellation \\
    \\
    $\Delaunay{A}, \Delaunay{A'}$
      & Delaunay mosaic, weighted Delaunay mosaic \\
    $g \colon \Delaunay{A'} \to \Rspace$
      & radius function on weighted Delaunay mosaic \\
    \\
    $Q = [0,1]^2$
      & unit square \\
    $\Cost{s}{A;Q}$
      & average cost of lunar EMSTs \\
    $\Costrestr{s}{A;Q}$
      & average cost of restricted lunar EMSTs \\
    $\Costrooted{s}{A;Q}$
      & average cost of rooted lunar EMSTs \\
    $C_s^{\rm sub}, C_s^{\rm sup}$
      & constants for sub- and super-additivity \\
    $C_s^{\rm one}, C_s^{\rm two}, C_s^{\rm smth}$
      & constants for differences between lunar EMSTs \\
  \end{tabular}
  \caption{Notation used in the paper.}
  \label{tbl:Notation}
\end{table}

\newpage
\section{Results and Definitions}

\begin{itemize}
  \item Section~\ref{sec:1}: Introduction.
  \item Section~\ref{sec:2}: Lunar Generalization of the EMST.
    \begin{itemize}
      \item Lemma~\ref{lem:lunar_alpha_vs_lunar_space} (Lunar Alpha vs Lunar Space).
      \item Lemma~\ref{lem:critical_and_non-critical_cells} (Critical and Non-critical Cells).
      \item Lemma~\ref{lem:counting_critical_cells} (Counting Critical Cells).
      \item Proposition~\ref{prop:chromatic_vs_lunar} (Chromatic vs Lunar).
      \item Corollary~\ref{cor:cost_and_norm} (Cost and Norm).
    \end{itemize}
  \item Section~\ref{sec:3}: Poisson--Delaunay Mosaics.
    \begin{itemize}
      \item Lemma~\ref{lem:radii_and_squared_radii} (Radii and Squared Radii).
      \item Lemma~\ref{lem:crossing_the_boundary} (Crossing the Boundary).
      \item Lemma~\ref{lem:weighted_radii} (Weighted Radii).
      \item Lemma~\ref{lem:chromatic_weighted_radii} (Chromatic Weighted Radii).
    \end{itemize}
  \item Section~\ref{sec:4}: Main Theorem.
    \begin{itemize}
      \item Theorem~\ref{thm:asymptotic_constant} (Asymptotic Constant).
      \item Definition~\ref{dfn:restricted_lunar_EMST} (Restricted Lunar EMST).
      \item Definition~\ref{dfn:rooted_lunar_EMST} (Rooted Lunar EMST).
      \item Corollary~\ref{cor:asymptotic_constant} (Asymptotic Constant).
      \item Lemma~\ref{lem:sub-additivity} (Sub-additivity).
      \item Lemma~\ref{lem:mean_closeness_one} (Mean Closeness One).
      \item Lemma~\ref{lem:super-additivity} (Super-additivity).
      \item Lemma~\ref{lem:cost_of_anchors} (Cost of Anchors).
      \item Lemma~\ref{lem:overlapping_lunes} (Overlapping Lunes).
      \item Lemma~\ref{lem:mean_closeness_two} (Mean Closeness Two).
      \item Lemma~\ref{lem:combinatorial_smoothness} (Combinatorial Smoothness).
    \end{itemize}
  \item Section~\ref{sec:5}: Discussion.
  \\
  \item Appendix~\ref{app:A}: Poisson--Delaunay Mosaics in $\Rspace^d$.
    \begin{itemize}
      \item Theorem~\ref{thm:counting_generalized} (Counting Generalized).
      \item Corollary~\ref{cor:measuring_generalized} (Measuring Generalized).
      \item Lemma~\ref{lem:crossing_the_boundary_details} (Crossing the Boundary, Details).
      \item Theorem~\ref{thm:weighted_counting} (Weighted Counting).
      \item Corollary~\ref{cor:weighted_measuring} (Weighted Measuring).
    \end{itemize}
  \item Appendix~\ref{app:B}: Colorful Simplices of the Chromatic Delaunay Complex.
    \begin{itemize}
      \item Proposition~\ref{prop:colorful_chromatic_vs_lunar} (Colorful Chromatic vs Lunar).
      \item Definition~\ref{dfn:strong_chromatic_genericity} (Strong Chromatic Genericity).
      \item Lemma~\ref{lem:measure-zero_subset} (Measure-zero Subset).
      \item Lemma~\ref{lem:remove_nontrivial_color} (Remove Nontrivial Color).
      \item Theorem~\ref{thm:generalized_discrete_Morse} (Generalized Discrete Morse).
    \end{itemize}
\end{itemize}

\end{document}

%% file: Figs-lunar/weighted.pdf_t
\begin{picture}(0,0)%
\includegraphics{Figs-lunar/weighted.pdf}%
\end{picture}%
\setlength{\unitlength}{3947sp}%
\begingroup\makeatletter\ifx\SetFigFont\undefined%
\gdef\SetFigFont#1#2#3#4#5{%
  \reset@font\fontsize{#1}{#2pt}%
  \fontfamily{#3}\fontseries{#4}\fontshape{#5}%
  \selectfont}%
\fi\endgroup%
\begin{picture}(7266,2744)(1468,-4133)
\put(7501,-3961){\makebox(0,0)[b]{\smash{{\SetFigFont{12}{14.4}{\rmdefault}{\mddefault}{\updefault}{\color[rgb]{0,0,0}$\tt e'$}%
}}}}
\put(2701,-2611){\makebox(0,0)[b]{\smash{{\SetFigFont{12}{14.4}{\rmdefault}{\mddefault}{\updefault}{\color[rgb]{0,0,0}$\tt a$}%
}}}}
\put(4501,-2611){\makebox(0,0)[b]{\smash{{\SetFigFont{12}{14.4}{\rmdefault}{\mddefault}{\updefault}{\color[rgb]{0,0,0}$\tt c$}%
}}}}
\put(5101,-3211){\makebox(0,0)[b]{\smash{{\SetFigFont{12}{14.4}{\rmdefault}{\mddefault}{\updefault}{\color[rgb]{0,0,0}$\tt d$}%
}}}}
\put(5101,-3961){\makebox(0,0)[b]{\smash{{\SetFigFont{12}{14.4}{\rmdefault}{\mddefault}{\updefault}{\color[rgb]{0,0,0}$\tt d'$}%
}}}}
\put(8626,-3961){\makebox(0,0)[b]{\smash{{\SetFigFont{12}{14.4}{\rmdefault}{\mddefault}{\updefault}{\color[rgb]{0,0,0}$\Rspace^1$}%
}}}}
\put(2701,-3961){\makebox(0,0)[b]{\smash{{\SetFigFont{12}{14.4}{\rmdefault}{\mddefault}{\updefault}{\color[rgb]{0,0,0}$\tt a'$}%
}}}}
\put(4426,-3961){\makebox(0,0)[b]{\smash{{\SetFigFont{12}{14.4}{\rmdefault}{\mddefault}{\updefault}{\color[rgb]{0,0,0}$\tt c'$}%
}}}}
\put(3901,-2011){\makebox(0,0)[b]{\smash{{\SetFigFont{12}{14.4}{\rmdefault}{\mddefault}{\updefault}{\color[rgb]{0,0,0}$\tt b$}%
}}}}
\put(7501,-2611){\makebox(0,0)[b]{\smash{{\SetFigFont{12}{14.4}{\rmdefault}{\mddefault}{\updefault}{\color[rgb]{0,0,0}$\tt e$}%
}}}}
\end{picture}%

%% file: Figs-lunar/merging.pdf_t
\begin{picture}(0,0)%
\includegraphics{Figs-lunar/merging.pdf}%
\end{picture}%
\setlength{\unitlength}{3947sp}%
\begingroup\makeatletter\ifx\SetFigFont\undefined%
\gdef\SetFigFont#1#2#3#4#5{%
  \reset@font\fontsize{#1}{#2pt}%
  \fontfamily{#3}\fontseries{#4}\fontshape{#5}%
  \selectfont}%
\fi\endgroup%
\begin{picture}(10994,2444)(1329,-2783)
\put(3601,-2086){\makebox(0,0)[b]{\smash{{\SetFigFont{14}{16.8}{\rmdefault}{\mddefault}{\updefault}{\color[rgb]{0,0,0}$v'$}%
}}}}
\put(9751,-2086){\makebox(0,0)[b]{\smash{{\SetFigFont{14}{16.8}{\rmdefault}{\mddefault}{\updefault}{\color[rgb]{0,0,0}$v$}%
}}}}
\put(4051,-2086){\makebox(0,0)[b]{\smash{{\SetFigFont{14}{16.8}{\rmdefault}{\mddefault}{\updefault}{\color[rgb]{0,0,0}$v''$}%
}}}}
\put(4051,-886){\makebox(0,0)[b]{\smash{{\SetFigFont{14}{16.8}{\rmdefault}{\mddefault}{\updefault}{\color[rgb]{0,0,0}$u''$}%
}}}}
\put(1801,-961){\makebox(0,0)[b]{\smash{{\SetFigFont{14}{16.8}{\rmdefault}{\mddefault}{\updefault}{\color[rgb]{0,0,0}$Q'$}%
}}}}
\put(3601,-886){\makebox(0,0)[b]{\smash{{\SetFigFont{14}{16.8}{\rmdefault}{\mddefault}{\updefault}{\color[rgb]{0,0,0}$u'$}%
}}}}
\put(6076,-961){\makebox(0,0)[b]{\smash{{\SetFigFont{14}{16.8}{\rmdefault}{\mddefault}{\updefault}{\color[rgb]{0,0,0}$Q''$}%
}}}}
\put(9751,-886){\makebox(0,0)[b]{\smash{{\SetFigFont{14}{16.8}{\rmdefault}{\mddefault}{\updefault}{\color[rgb]{0,0,0}$u$}%
}}}}
\end{picture}%

%% file: Figs-lunar/splitting.pdf_t
\begin{picture}(0,0)%
\includegraphics{Figs-lunar/splitting.pdf}%
\end{picture}%
\setlength{\unitlength}{3947sp}%
\begingroup\makeatletter\ifx\SetFigFont\undefined%
\gdef\SetFigFont#1#2#3#4#5{%
  \reset@font\fontsize{#1}{#2pt}%
  \fontfamily{#3}\fontseries{#4}\fontshape{#5}%
  \selectfont}%
\fi\endgroup%
\begin{picture}(10994,2444)(1779,-2783)
\put(9676,-886){\makebox(0,0)[b]{\smash{{\SetFigFont{14}{16.8}{\rmdefault}{\mddefault}{\updefault}{\color[rgb]{0,0,0}$u'$}%
}}}}
\put(3976,-886){\makebox(0,0)[b]{\smash{{\SetFigFont{14}{16.8}{\rmdefault}{\mddefault}{\updefault}{\color[rgb]{0,0,0}$u$}%
}}}}
\put(10126,-886){\makebox(0,0)[b]{\smash{{\SetFigFont{14}{16.8}{\rmdefault}{\mddefault}{\updefault}{\color[rgb]{0,0,0}$u''$}%
}}}}
\put(12226,-961){\makebox(0,0)[b]{\smash{{\SetFigFont{14}{16.8}{\rmdefault}{\mddefault}{\updefault}{\color[rgb]{0,0,0}$Q''$}%
}}}}
\put(7951,-961){\makebox(0,0)[b]{\smash{{\SetFigFont{14}{16.8}{\rmdefault}{\mddefault}{\updefault}{\color[rgb]{0,0,0}$Q'$}%
}}}}
\end{picture}%

%% file: EMST-lunar.bbl
\begin{thebibliography}{21}

\footnotesize{

\bibitem{Aur90}
{\sc F.\ Aurenhammer.}
A new duality result concerning Voronoi diagrams. 
\emph{Discrete Comput.\ Geom.} {\bf 5} (1990), 243--254.

\bibitem{BCDES26}
{\sc R.\ Biswas, S.\ Cultrera di Montesano, O.\ Draganov, H.\ Edelsbrunner and M.\ Saghafian.}
On the size of chromatic Delaunay mosaics. 
\emph{Discrete Comput.\ Geom.} {\bf 75} (2026), 24--47.

\bibitem{BHH59}
{\sc J.\ Beardwood, J.H.\ Halton and J.M.\ Hammersley.}
The shortest path through many points.
\emph{Math.\ Proc.\ Cambridge Phil.\ Soc.} {\bf 55} (1959), 299--327.

\bibitem{CoTr21}
{\sc M.\ Correddu and D.\ Trevisan.}
On minimum spanning trees for random Euclidean bipartite graphs.
\emph{Comb.\ Probab.\ Comput.} {\bf 33} (2024), 319--350.

\bibitem{CDES26}
{\sc S.\ Cultrera di Montesano, O.\ Draganov, H.\ Edelsbrunner, and M.\ Saghafian.} 
Chromatic alpha complexes.
\emph{Found.\ Data Sci.} {\bf 8} (2026), 30--62.

\bibitem{DERS25}
{\sc O.\ Draganov, H.\ Edelsbrunner, S.\ Rosenmeier and M.\ Saghafian.}
Expected length of the Euclidean minimum spanning tree and $1$-norms of chromatic persistence diagrams in the plane.
Manuscript, Institute of Science and Technology Austria, Klosterneuburg, Austria, 2025.
\href{https://arxiv.org/abs/2510.23373}{\texttt{arXiv:2510.23373}}

\bibitem{EdHa10}
{\sc H.\ Edelsbrunner and J.L.\ Harer.}
\emph{Computational Topology. An Introduction.}
Amer.\ Math.\ Soc., Providence, Rhode Island, 2010.

\bibitem{EdNi19}
{\sc H.\ Edelsbrunner and A.\ Nikitenko.}
Weighted Poisson--Delaunay mosaics.
\emph{Theory Probab.\ Appl.} {\bf 64} (2019), 746--770.

\bibitem{ENR17}
{\sc H.\ Edelsbrunner, A.\ Nikitenko and M.\ Reitzner.}
Expected sizes of Poisson--Delaunay mosaics and their discrete Morse functions.
\emph{Adv.\ Appl.\ Probab.} {\bf 49} (2017), 745--767. 

\bibitem{For98}
{\sc R.\ Forman.}
Morse theory for cell complexes.
\emph{Adv.\ Math.} {\bf 134} (1998), 90--145.

\bibitem{Fre09}
{\sc R.\ Freij.}
Equivariant discrete Morse theory.
\emph{Discrete Math.} {\bf 309} (2009), 3821--3829.

\bibitem{GiPo68}
{\sc E.N.\ Gilbert and H.O.\ Pollack.}
Steiner minimal trees.
\emph{SIAM J.\ Appl.\ Math.} {\bf 16} (1968), 1--29.

\bibitem{Kru56}
{\sc J.B.\ Kruskal.}
On the shortest spanning subtree of a graph and the traveling salesman problem.
\emph{Proc.\ Amer.\ Math.\ Soc.} {\bf 7} (1956), 48--50.

\bibitem{Led23}
{\sc M.\ Ledoux.}
Optimal matching of random samples and rates of convergence of empirical measures.
In \emph{Mathematics Going Forward}, eds.: J.-M.\ Morel and B.\ Teissier, Lecture Notes in Mathematics {\bf 2313}, 2023, 615--627.

\bibitem{Lef42}
{\sc S.\ Lefschetz.}
\emph{Algebraic Topology.}
American Mathematical Society Colloquium
Publications, Vol. 27, Amer.\ Math.\ Soc., New York, 1942.

\bibitem{NCBJ24}
{\sc A.\ Natarajan, T.\ Chaplin, A.\ Brown and M.\ Jimenez.}
Morse theory for chromatic Delaunay triangulations.
\href{https://arxiv.org/abs/2405.19303}{\texttt{arXiv:2405.19303}}, 2024.

\bibitem{ScWe08}
{\sc R.\ Schneider and W.\ Weil.}
\emph{Stochastic and Integral Geometry.}
Springer, Berlin, Germany, 2008.

\bibitem{Seg68}
{\sc G.\ Segal.}
Classifying spaces and spectral sequences.
\emph{Math.\ Inst.\ Hautes Études Sci.} {\bf 34} (1968), 105--112.

\bibitem{Ste97}
{\sc J.M.\ Steele.}
\emph{Probability Theory and Combinatorial Optimization.}
CBMS-NSF Regional Conference Series in Applied Mathematics, SIAM, Philadelphia, Pennsylvania, 1997.

\bibitem{Tou80}
{\sc G.\ Toussaint.}
The relative neighborhood graph of a finite planar set.
\emph{Pattern Recognition} {\bf 12} (1980) 261--268. 

\bibitem{Yuk98}
{\sc J.E.\ Yukich.}
\emph{Probability Theory of Classical Euclidean Optimization Problems.}
Springer, Berlin, Germany, 1998.

}

\end{thebibliography}
